\title{Competitive Online Virtual Cluster Embedding Algorithms}
\author{Feras Iwan Fattohi}
\date{\today}

\documentclass[a4paper,titlepage,twoside,12pt]{article}
\usepackage[a4paper, total={6in, 9in}]{geometry}
\usepackage{setspace}   
\usepackage{graphicx}
\usepackage{hyperref}
\usepackage[english, ngerman]{babel} 
\usepackage{amssymb}
\usepackage{amsmath}
\usepackage{pgfplots}
\usepackage{mdframed}
\usepackage{amsthm}
\usepackage{verbatim}
\usepackage{enumitem}
\usepackage{algorithm2e}
\theoremstyle{plain}
\newtheorem*{duality*}{Duality theorem of linear programming}
\newtheorem{theorem}{Theorem}
\newcommand\myeq{\mathrel{\stackrel{\makebox[0pt]{\mbox{\normalfont\tiny weak duality}}}{\leq}}}
\author{\Large%
	Iwan Feras Fattohi\\{\normalsize Matriculation Number %
		349828} \\}
\title{
	\vspace{-15em}
	\centering
	\begin{figure*}[!htb]
		\begin{minipage}{.17\linewidth}
			\includegraphics[width=1.0\linewidth]{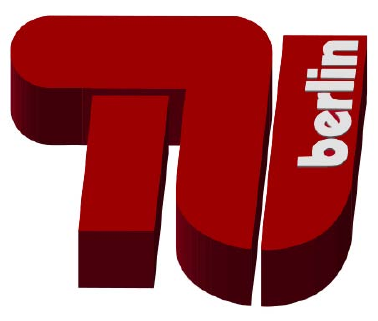}
		\end{minipage}
		\begin{minipage}{.55\linewidth} 
			\centering
			Internet Network Architectures\\
			Faculty IV Electrical Engineering\\ and Computer Science\\
			Technical University of Berlin
		\end{minipage}
		\begin{minipage}{.25\linewidth}
			\includegraphics[width=1.0\linewidth]{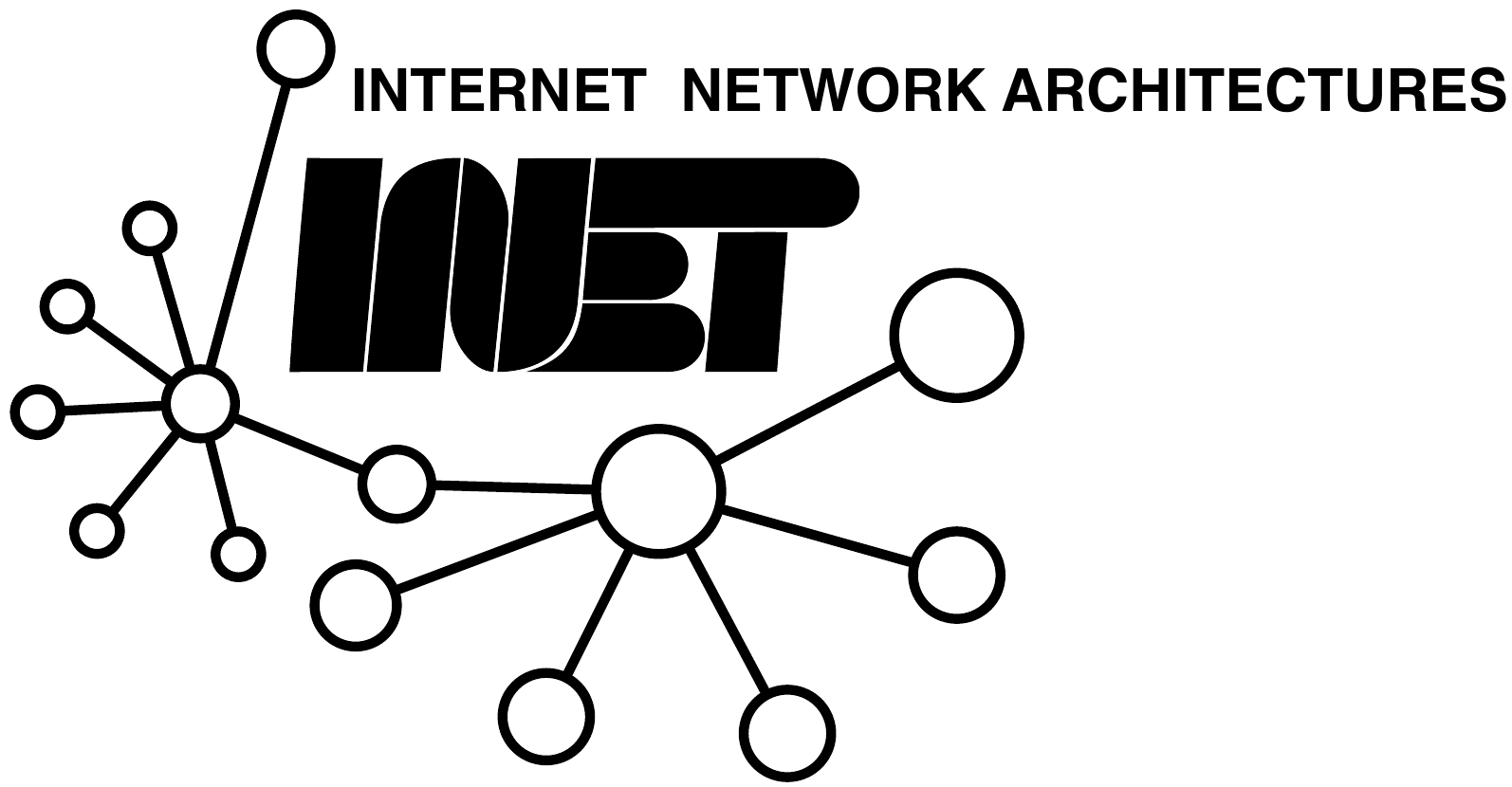}
		\end{minipage}
	\end{figure*}
	\vspace{8em}
	{\huge\textbf{%
		Competitive Online Virtual Cluster Embedding Algorithms
		}}\\ \vspace{1.7cm}{\LARGE{\bfseries Master Thesis}\\\vspace{0.6cm}
	}}
	\date{
		16. February 2018\\
		\vspace{9em}
		\large
		\textbf{Advisor:} \\
		\vspace{1em}
		Prof. Dr. Anja Feldmann\\
		Univ.-Prof. Dr.sc. Stefan Schmid\\[0.5cm]
		Matthias Rost, M.Sc.
	}
\begin{document}
\thispagestyle{empty}
\enlargethispage{20\baselineskip}
\oddsidemargin3.3mm
\addtolength{\textwidth}{3em}
\pagenumbering{gobble}
\maketitle
\addtolength{\textwidth}{-3em}
\oddsidemargin-3.3mm
%
\cleardoublepage
\thispagestyle{empty}\section*{Eidesstattliche Erkl\"arung}
\vspace{2.5cm}Hiermit erkl\"are ich, dass ich die vorliegende Arbeit selbstst\"andig und \\eigenh\"andig sowie ohne
unerlaubte fremde Hilfe und ausschlie"slich unter \\Verwendung der aufgef\"uhrten Quellen und
Hilfsmittel angefertigt habe. \\
\\ \vspace{1.2cm}\\
\line(1,0){250} \\
\\
$~{}~{}~{}~{}~{}~{}$Berlin, den 16. Februar 2018
%
\thispagestyle{empty}
\cleardoublepage
\thispagestyle{plain}
\selectlanguage{english}
\oddsidemargin3.3mm
\begin{abstract}
In the conventional cloud service model, computing resources are allocated for tenants on a pay-per-use basis. However, the performance of applications that communicate inside this network is unpredictable because network resources are not guaranteed. To mitigate this issue, the virtual cluster (VC) model has been developed in which network and compute units are guaranteed. Thereon, many algorithms have been developed that are based on novel extensions of the VC model in order to solve the online virtual cluster embedding problem (VCE) with additional parameters. In the online VCE, the resource footprint is greedily minimized per request which is connected with maximizing the profit for the provider per request. However, this does not imply that a global maximization of the profit over the whole sequence of requests is guaranteed. In fact, these algorithms do not even provide a worst case guarantee on a fraction of the maximum achievable profit of a certain sequence of requests. Thus, these online algorithms do not provide a competitive ratio on the profit. 

In this thesis, two competitive online VCE algorithms and two heuristic algorithms are presented. The competitive online VCE algorithms have different competitive ratios on the objective function and the capacity constraints whereas the heuristic algorithms do not violate the capacity constraints. The worst case competitive ratios are analyzed. After that, the evaluation shows the advantages and disadvantages of these algorithms in several scenarios with different request patterns and profit metrics on the fat-tree and MDCube datacenter topologies. The results show that for different scenarios, different algorithms have the best performance with respect to certain metrics.
\end{abstract}
\selectlanguage{ngerman}
\begin{abstract}
	Im herk\"ommlichen Cloud Service Modell werden IT-Infrastrukturen wie Rechenleistung, Speicherplatz und Anwedungssoftwares auf Basis des Pay-per-Use Modells bereitgestellt. Jedoch ist die Kommunikation zwischen Anwendungen innerhalb des Could-Datacenters nicht zuverl\"assig, weil dessen Netzwerk Ressourcen nicht gew\"ahr-leistet sind. Um diesem Problem entgegenzukommen, wurde das Virtual Cluster (VC) Modell entwickelt, das garantierte Rechenleistung und Netzwerk Ressourcen betrachtet. Auf dessen Basis wurden viele Algorithmen entwickelt, die dieses Modell um neuartige Parameter erweitern. Dabei l\"osen diese das online Virtual Cluster Embedding Problem (VCE). In der online Variante wird der Ressourcenverbrauch des Service Providers pro Anfrage minimiert. Dies steht in Korrelation mit einer Profitmaximierung pro Anfrage. Jedoch folgt daraus nicht, dass eine garantierte Profitmaximierung \"uber die gesamte Anfragesequenz erreicht wird. Dazu kommt, dass diese online Algorithmen keine Worst-Case Garantie auf den Profit haben. Daher sind diese Algorithmen nicht kompetitiv in Bezug auf den Profit.
	
	In dieser Masterarbeit werden zwei kompetitive online VCE Algorithmen und zwei Heuristiken  vorgestellt. Die kompetitiven online VCE Algorithmen haben verschiedene ``competitive ratios'' bez\"uglich der Zielfunktion und den Kapazit\"atsbeschr\"ankungen, wohingegen die Heuristiken keine Kapazitätengrenzen verletzen. Unter anderem werden die Worst-Case ``competitive ratios'' hergeleitet. Danach werden anhand der Evaluation die Vor- und Nachteile der Algorithmen in mehreren Szenarien mit verschiedenen Profit Typen und Metriken untersucht. Das Ergebnis der Evaluation zeigt, dass f\"ur die verschiedenen Szenarien, verschiedene Algorithmen ihre beste Leistung aufweisen bez\"uglich bestimmter Metriken.
\end{abstract}
\selectlanguage{english}
\pdfbookmark[1]{Inhaltsverzeichnis}{Inhalt}
\pagenumbering{roman}
\tableofcontents\thispagestyle{plain}
\cleardoublepage
\pagenumbering{arabic}
\oddsidemargin-3.3mm
\oddsidemargin9.5mm
\evensidemargin-3mm
\hypersetup{breaklinks, linkcolor=darkred}
\section{Introduction}\label{intro}
Cloud computing deals with the allocation of remote services over the Internet in data centers. This term includes three different service models, each of them being applied for certain scenarios. These are Infrastructure-, Platform- and Software-as-a-Service (IaaS, PaaS and SaaS) \cite{vaquero2008break}. A tenant can lease a software (SaaS), an environment for developing software (PaaS) or computing power and memory (IaaS) as a remote service on demand.
These resources are offered by a service provider which is maintaining them in her substrate network. The advantage of such a network is that the resources can be provisioned over the Internet and can be freed again if they are not needed anymore. 

Considering a tenant who does not have enough local resources for executing a certain software, she can allocate remote resource of the cloud if the cloud has enough available resources to serve the request. After the execution of the software, the resources are freed again and the provider earns money based on a pay-per-use business model \cite{vaquero2008break}. This is beneficial for the tenant because instead of buying new hardware for her local network, the tenant could execute her software without additional local resources by using remote resources on demand.
 
The compute units (including RAM and HDD) that are served by the provider are guaranteed. However, there are in general no such guarantees on the network connecting the allocated elements. This becomes an issue as soon as the allocated virtual machines of a certain tenant need to communicate to each other since the performance of the network becomes unpredictable as soon as its links are congested. Thus, network resources inside the cloud network are not guaranteed. 

This problem was a hot topic in the past and therefore models incorporating guaranteed bandwidth demands were studied. One of the most prominent is the \emph{virtual cluster} model. The virtual cluster (VC) was introduced in 2012 by \cite{vc} and describes a simple way to model a tenant's request and is depicted in Figure \ref{fig58}.
\begin{figure}[ht]
	\centering
	\includegraphics[width=0.3\textwidth]{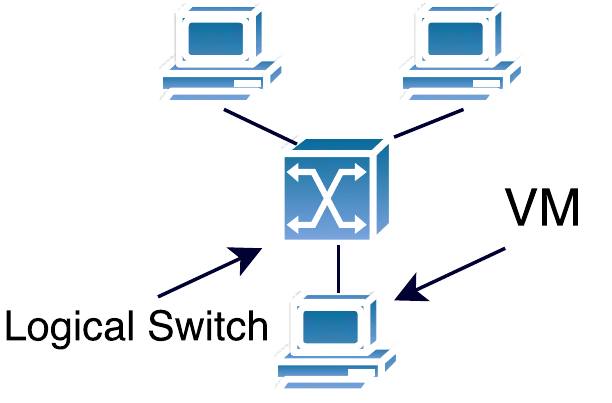}
	\caption{Virtual Cluster with 3 VMs}
	\label{fig58}
\end{figure}
This paper also introduced the virtual cluster embedding (VCE) problem in which the request in terms of a constructed virtual cluster has to be embedded into the substrate network of the provider ensuring guaranteed compute and network resources. The VCE problem is an optimization problem with the objective to minimize the resource footprint when embedding a VC request. 

Based on this model, Rost et al. developed in \cite{vcace} the VC-ACE algorithm which finds an optimal solution for the VCE problem in polynomial time. Many works have extended the initial model by additional parameters like survivability \cite{survivability}, a heterogeneous bandwidth demand model \cite{heterogeneous} and an elastic resource reservation property \cite{kraken}. All these extensions have in common that their algorithms are solving the virtual cluster embedding problem with respect to some additional parameters.

As already indicated before, a tenant's request arrives in real-time in an online fashion and shall be served as fast as possible. This means that based on the current state of the substrate network, the provider has to find an allocation for the requested amount VMs if enough resources are available since previous requests have already reserved resources. Furthermore, preemption is generally not considered, meaning  the embedding algorithm cannot interrupt a current job, such that previously embedded requests cannot be undone. Hence, the virtual cluster embedding problem is an online problem. Such an online scenario is shown in the following example, depicted in the Figures \ref{fig34} and \ref{fig35}. Figure \ref{fig36} shows the corresponding offline version.

As seen in Figure \ref{fig58}, a virtual cluster has a star-shaped structure, there is a logical switch $LS$ in the center serving for connecting the virtual machines (VMs) which are attached to it. Each of the $\mathcal{N}_i$ VMs of the virtual cluster request $r_i \in R$ ($R$ being the sequence of requests) asks for a certain amount of compute units $\mathcal{C}_i$. The logical switch itself does not consume any resources, it only connects the VMs. Each virtual link connecting a VM to the logical switch demands a bandwidth $\mathcal{B}_i$. These links are undirected. Lastly, a request $r_i$ pays a benefit $b_i$ to the provider when the provider embeds the request. The profit is then the sum of all benefits over the sequence of requests.
\begin{figure}
	\centering
	\includegraphics[width=1.0\textwidth]{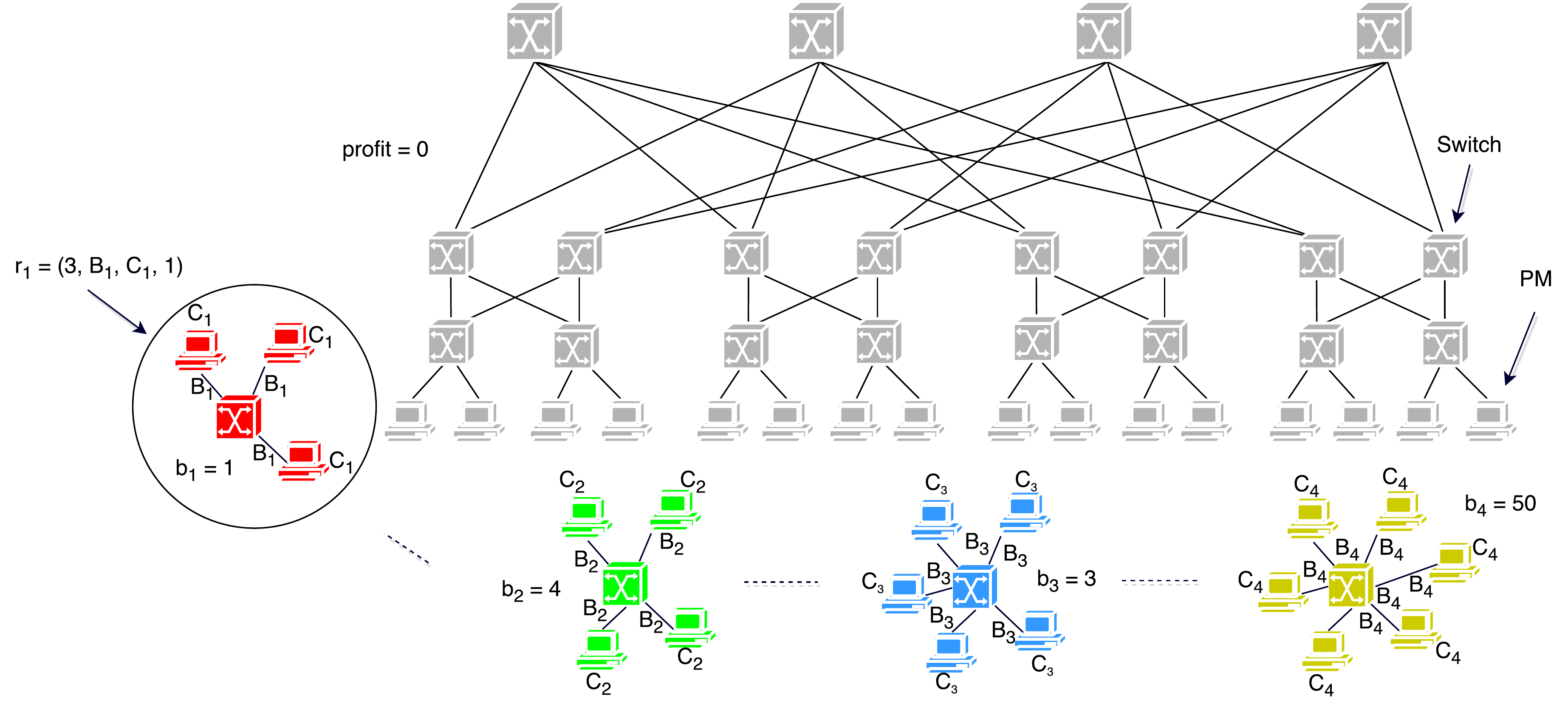}
	\caption{Sequence of requests 1-4 arriving online in this order}
	\label{fig34}
\end{figure}
Figure \ref{fig34} shows the initial situation of the online algorithm. The substrate network, here a fat-tree \cite{fattree}, consisting of switches and physical machines (PMs), is empty. No reservations are performed yet and the sequence of requests $\langle r_1,r_2,r_3,r_4 \rangle$, with $r_i = (\mathcal{N}_i, \mathcal{B}_i, \mathcal{C}_i, b_i)$, arrives in an online fashion, each paying the provider a certain benefit $b_i$. In this example, we assume that one VM of a VC request consumes a whole PM of the substrate network. The goal of the online algorithm is to minimize the amount of resources (PMs and physical links) consumed by a virtual cluster request, if it can be embedded. By minimizing the resource footprint per request, more resources are left to accept even more requests. The latter results in a higher profit, so that there is a correlation between minimizing the resource footprint and maximizing the profit.

Upon the arrival of request 1 (red-marked), the online algorithm decides that there is enough space for the VC in the substrate network and therefore embeds the request. The online algorithm processed the request in real-time, directly after its arrival. Since request 1 is accepted, the profit of the provider increases by the benefit of this request. 
\begin{figure}
	\centering
	\includegraphics[width=1.0\textwidth]{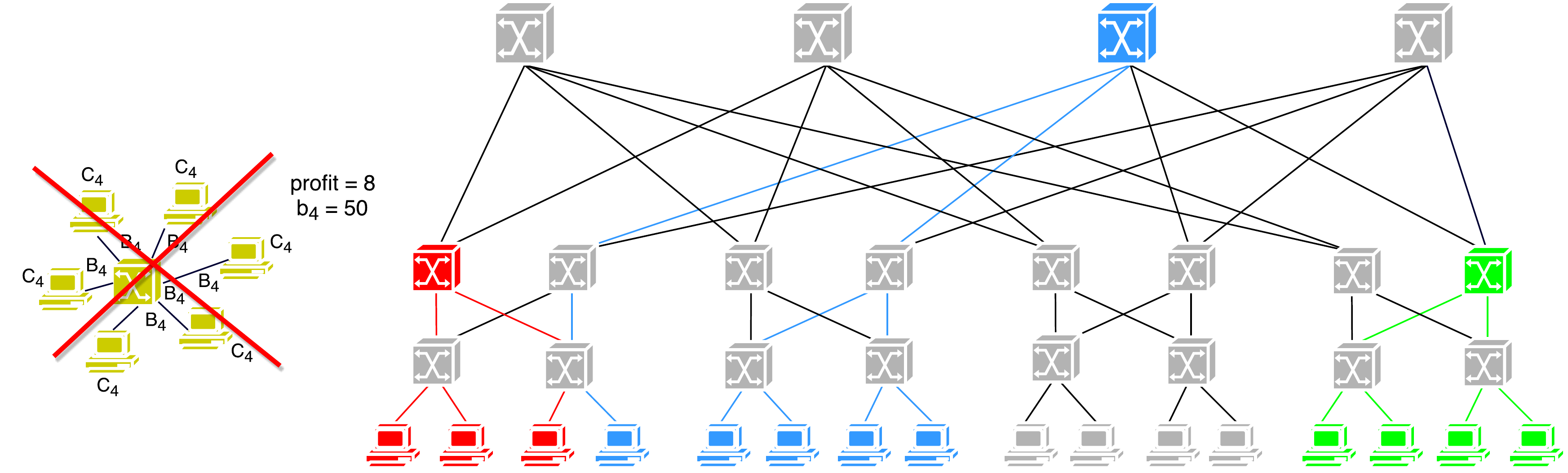}
	\caption{4th request is rejected because of missing available resources}
	\label{fig35}
\end{figure}
The same procedure is executed for the next two requests. However, at the arrival of request 4 in Figure \ref{fig35}, the algorithm decides to reject the request because there are not enough PM resources left. The VC request demands five VMs but the substrate network has only four free PMs left. Though, it would have been beneficial for the provider to embed this request since its benefit of 50 is higher than the benefit of all previous requests. Thus, accepting \emph{any} request, that can be embedded, does not provide the highest overall profit for the provider in this example. 

In contrast to this, if the provider would have known the sequence of requests over the whole time axes, which is the case in the offline setting of this problem, she could have chosen a subset of requests which maximizes her profit. An example embedding in which requests are also rejected is depicted in Figure \ref{fig36}. The optimal offline algorithm now chooses the best subset of that sequence that maximizes the profit for the provider and embeds it, resulting in a profit of 57 in comparison to a profit of 8 for the online algorithm. 
\begin{figure}
	\centering
	\includegraphics[width=1.0\textwidth]{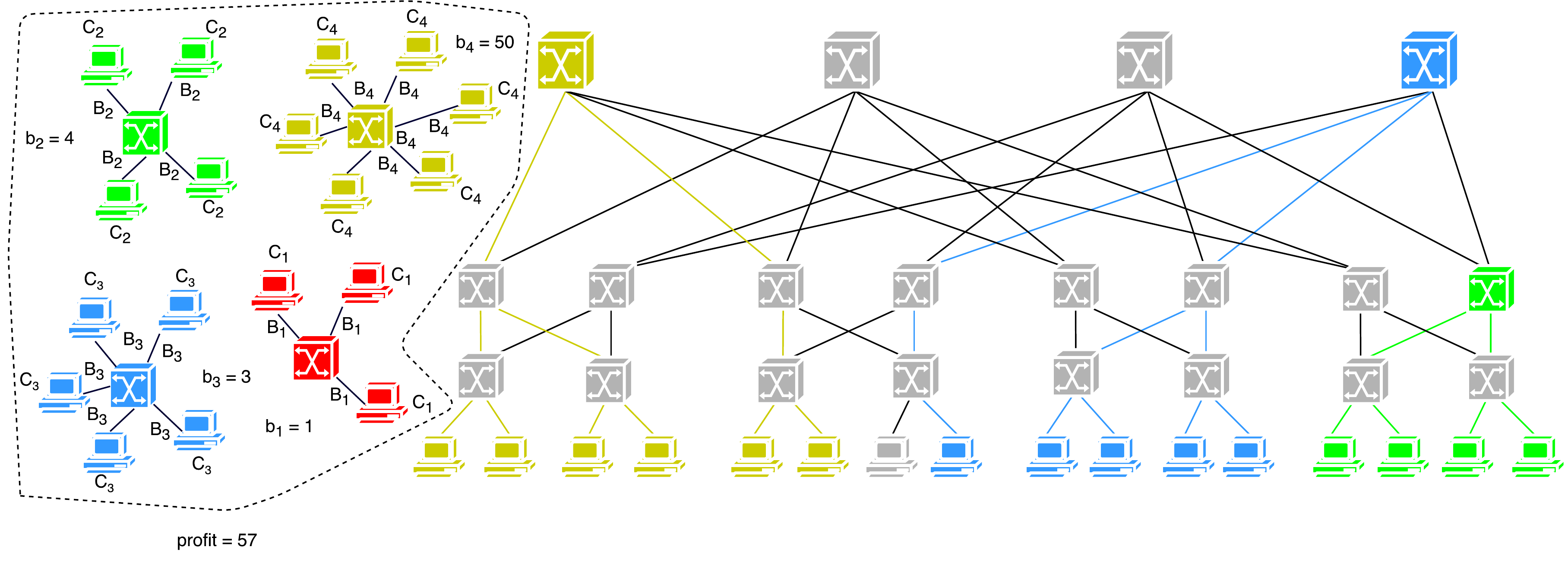}
	\caption{Optimal offline algorithm chooses the benefit-maximizing subset}
	\label{fig36}
\end{figure}
Generally speaking, the optimal offline algorithm finds the best solution for the provider. 

In the online version of this problem, the online algorithm cannot select the best subset of the sequence of requests, because it does not know the whole sequence of requests in advance, but has to decide in real-time upon the arrival of a request whether to embed it or not. Thus, it is not guaranteed that the online algorithm's solution is equal to the optimal offline solution. Since all the previously referenced embedding algorithms are online VCE algorithms they do not guarantee to reach the optimal offline solution. 

On the one side, these online algorithms do not ensure that the resource footprint over the whole sequence of requests is minimized, but only the resource footprint per request. They accept \emph{any} request that arrives online and can be feasibly embedded, which means there are enough resources to serve the request. These algorithms do not include a \emph{rejection} functionality.

On the other side, minimizing the resource footprint per request also does not yield the maximum profit as seen in the example. These algorithm do not guarantee that a certain fraction of the optimal offline profit is achieved since any feasibly embedable request is accepted. An accepted request may have the worst possible benefit, as seen with $r_1$. More formally, let $\mathtt{profit}_{\mathrm{online}}$ be the profit earned by the online algorithm and ${\mathtt{profit}_{\text{opt,offline}}}$ be the profit earned by the optimal offline algorithm. Then, the online embedding algorithm of the example does not guarantee that $\frac{\mathtt{profit}_{\text{opt,offline}}}{c} \leq \mathtt{profit}_{\mathrm{online}}$ holds for any request sequence, which means that the online algorithm does not ensure in general a $c$-fraction of the optimal offline profit. An online algorithm which fulfills this inequality for any input sequence and a fixed $c$ is called $c$-\emph{competitive}, i.e. it provides a \emph{competitive ratio} on the profit. Thus, the online embedding algorithm of the example is not competitive. In addition, the referenced online virtual cluster embedding algorithms also do not provide a competitive ratio since they accept any feasibly embedable request. However, a competitive online VCE algorithm may be beneficial for practical applications since it provides a guarantee on the profit although future requests are unknown. 

To develop such an algorithm, we apply the primal-dual approach for online problems developed in \cite{survey}. They present a framework for developing competitive online algorithms via the primal-dual approach for linear programs. They show for certain problems that there is a trade-off between the competitiveness on the objective function and finding a feasible solution. In the case of the VCE problem, the validity of the capacity constraints of the nodes and edges in the provider's networks is affected. So, a competitive online VCE algorithm also involves a competitiveness on the violation of the capacity constraints. Hence, they introduce $(c_1,c_2)$-competitiveness where $c_1$ is the competitive ratio on the objective function and $c_2$ the one on the capacity constraints.

The contribution of this thesis are two competitive online virtual cluster embedding algorithms and two heuristic algorithms. The first competitive online algorithm is the \textbf{G}eneral all-or-nothing \textbf{V}Net Packing \textbf{O}nline \textbf{P}acking (GVOP) algorithm based on \cite{gvop} and the second one is the \textbf{C}ompetitive \textbf{O}nline \textbf{V}irtual \textbf{C}luster \textbf{E}mbedding (COVCE) algorithm which is a novel contribution of this thesis. Both algorithms provide a competitive ratio on the profit and the capacity constraints. The difference between them is that each focuses on one of the parameters. The GVOP algorithm is (2, $\mathrm{log}_2 (1+ 3\cdot (\mathrm{max}_{i,k}w(i,k)\cdot \mathrm{max}_{i}b_i))$)-competitive, where $w(i,k)$ is the sum of the allocations made by embedding $k$ for request $i$. The COVCE algorithm is ($4 \cdot \mathrm{ln}(1+|G_S|\cdot C_{max}\cdot \alpha) + 1, 2)$-competitive, where $G_S$ is the size of the substrate graph, $C_{\mathrm{max}}$ is the maximum capacity of any node and edge and $\alpha$ being the maximum benefit over the sequence of requests. Additionally, the Greedy VC-ACE algorithm is implemented which receives a sequence of requests in an online fashion and greedily accepts any requests applying the VC-ACE algorithm with respect to residual capacities. Furthermore, the COVCEload algorithm is presented which is a combination of the Greedy VC-ACE algorithm and the COVCE algorithm.

Besides the worst-case guarantees of the competitive online algorithms, we study their performance in several experiments in order to analyze which competitive ratio-tuple is more applicable in realistic scenarios. In addition, the heuristic algorithms are also compared to the competitive online algorithms in order to find out whether a non-competitive online algorithm can compete with the competitive online algorithms. The main questions of this thesis are the following:

\begin{enumerate}
	\item How big are the capacity violations of the GVOP and the COVCE algorithm?
	\item How big are the rejection ratios of the GVOP and the COVCE algorithm?
	\item Which approach yields a higher profit per resource violation ratio?
	\item Is there a sweet spot between competitive and non-competitive algorithms?
	\item When are (non)-competitive VCE algorithms better than (non)-competitive VCE algorithms?
\end{enumerate}

The thesis is divided into 7 sections. At first some background knowledge is discussed in order to become familiar with the VC model and the VCE in Section \ref{model} and the primal-dual approach in Section \ref{background}. Afterwards related work is discussed in Section \ref{relatedwork} to get an overview of the state of the art. Next, the theoretical contribution of this thesis is described in Section \ref{theory}. After that, in the evaluation in Section \ref{evaluation}, the competitive  and non-competitive online algorithms are tested and compared to each other in several scenarios in order to find out, which algorithm shows advantages or disadvantages in certain scenarios. Lastly, some ideas for future work are presented in Section \ref{futurework} and the results of this thesis are summarized in the conclusion Section \ref{conclusion}.

\section{Model}\label{model}
In the following, the virtual cluster model together with the virtual cluster embedding problem are described in order to introduce the setting and the notation. The linear program, the GVOP algorithm \cite{gvop} and the Competitive Online Virtual Cluster Embedding (COVCE) algorithm are working based on this model.
\subsection{Virtual Cluster}
The sequence $R = \langle r_1,...,r_l\rangle$ describes the VC requests. The request $r_i \in R$ contains information on the $i$-th VC  that shall be embedded, $r_i = \{\mathcal{N}_i, \mathcal{B}_i,\mathcal{C}_i,b_i\}$. A virtual cluster $VC_i$($\mathcal{N}_i, \mathcal{B}_i,\mathcal{C}_i$) \cite{vc} has a star-shaped topology and is formally described as an undirected graph $VC_i = (V_{VC_i}, E_{VC_i})$  with $\mathcal{N}_i \in \mathbb{N}$ virtual machines being connected to a logical switch $LS$ which form the set of nodes $V_{VC_i} = \{1,2,...,\mathcal{N}_i,LS\}$ of $VC_i$. Each node in $V_{VC_i}\setminus\{LS\}$ has a capacity of $\mathcal{C}_i$ which can be interpreted as compute units, i.e. $\mathcal{C}_i$ CPU or memory units of a computer/server. Lastly, the parameter $b_i$ describes the benefit obtained by the provider for embedding $r_i$. The logical switch only serves as a connector between the VMs to build the virtual cluster and does not consume any resources. The nodes are connected over the undirected virtual links $E_{VC_i}$ = $\{\{j, LS\}|  j \in \{1,2,..,\mathcal{N}_i\}\}$, where each link has the bandwidth $\mathcal{B}_i \in \mathbb{N}$. The virtual machines, except the $LS$, of the virtual cluster will be referred to as virtual nodes or VMs and its links as virtual links. 

The virtual cluster model is simple because the tenant only has to determine how many VMs she needs, how many compute units to assign to each VM and how much bandwidth is needed between the VMs on this star-shaped structure. This is enough information to construct the virtual cluster and to start the virtual cluster embedding algorithm.

The disadvantages of this model are that the VC is a star-shaped topology, so that the tenant cannot choose an arbitrary topology when applying VCE algorithms. In addition, the compute units are the same for each VM and the bandwidth is the same on each link, too. In a real world scenario, the bandwidth requirements between each pair of VMs may be different, so that the fixed bandwidth $\mathcal{B}_i$ for the VC model can be derived by maximizing over the bandwidths of all virtual links. This can be wasteful because some links receive more resources than they actually need. Thus, there is a trade-off between the simplicity of the model and therefore easily constructing requests and optimally embedding them, and the fact that this model might consume a redundant amount of resources.
\subsection{Substrate Network}
The substrate network is the physical network of a service provider who offers its available resources to the tenants. It is the undirected graph $S = (V_S, E_S, \mathtt{cap}, \mathtt{cost})$ with node and link capacities $\mathtt{cap} : V_S \cup E_S \rightarrow \mathbb{N}$. Additionally, a cost parameter is considered which describes how much the costs for the provider are for using a unit of a certain resource, formally $\mathtt{cost} : V_S \cup E_S \rightarrow \mathbb{R}_{\geq 0}$. More precisely, this cost parameter can be interpreted in different ways.
For instance, costs can be seen as a load balancing parameter. To this end, the link costs are set inversely proportional to the available (residual) capacity causing a load distribution onto several links. In addition, high costs on a link mitigate link over-subscription because the algorithm prefers links with lower costs. 
This is used in the COVCE and the GVOP algorithm to limit the over-subscription of nodes and edges in the substrate network. In general, high costs show that not much available capacity is left or that the node or the edge is already over-subscribed.
\subsection{Virtual Cluster Embedding}
After having introduced the virtual cluster and the substrate network, the virtual cluster embedding (VCE) now connects both these components. It describes a mapping of the virtual cluster's VMs and the logical switch onto the nodes of the substrate network and a mapping of the corresponding virtual links onto paths of physical links within the substrate network. The virtual link $\{k, LS\}, k \in V_{VC_i} \setminus \{LS\}$ of the VC request $r_i$ is embedded as a path leading from the mapping location of $k$ to the mapping location of $LS$. The path consists of either a set of physical edges connecting these two physical nodes or is empty, in case that $LS$ and $k$ are mapped onto the same physical node. A \emph{valid} VCE does not violate any capacities of substrate nodes and links. In the general case, there are several ways to embed the VC of a given request $r_i$ into substrate graph $S$. Formally speaking, $M(r_i) = \{m_i^1, m_i^2,...,m_i^j\}$ denotes the set of all valid VCEs for request $r_i$. Let $\mathcal{P}(E_S)$ be the power set of the set of physical edges. An embedding $m_i^k \in M(r_i)$ is the tuple $m_i^k = (m_{V,i}^k, m_{E,i}^k)$ containing the mappings of the virtual nodes and links onto the physical nodes and links, $m_{V,i}^k : V_{VC_i} \rightarrow V_S$ and $m_{E,i}^k : E_{VC_i} \rightarrow \mathcal{P}(E_S)$. As already indicated, the virtual cluster embedding problem deals with finding a valid VCE $m_{V,i}^k$ with minimum cost for a given request $r_i$. Formally:
\begin{alignat}{7}
\mathcal{C}_i \cdot \sum_{v \in V_{VC_i}\setminus \{LS\}} \mathtt{cost}(m_{V,i}^k(v)) +  \mathcal{B}_i \cdot \sum_{e' \in E_{VC_i}, e\in m_{E,i}^k(e')} \mathtt{cost}(e).  
\end{alignat}
Thus, solving the VCE problem outputs an optimal embedding that minimizes the summed costs of the allocated node and link resources. To ensure no violations of the resource capacities, the following constraints need to hold:
\begin{alignat}{7}
\sum_{v'\in V_{VC_i}\setminus\{LS\}, v = m_{V,i}^k(v')} \mathcal{C}_i \leq \mathtt{cap}(v) \quad\mathrm{and} \sum_{e' \in E_{VC_i}, e \in m_{E,i}^k(e')} \mathcal{B}_i \leq \mathtt{cap}(e)
\end{alignat}
Before dealing with approaches to solve this problem, some fundamental and essential background knowledge to linear programming, which is a method to deal with certain types of optimization problems, is briefly explained.
\section{Background}\label{background}
In this section, the necessary background knowledge will be discussed. Our competitive algorithms, presented in Section \ref{theory}, are inspired by the survey \cite{survey}. For this reason, their approach of how to design competitive online algorithms is explained in the following. At first linear programming is briefly introduced because it is the basis for the duality theorem and the consequential primal-dual approach for online problems of \cite{survey}.

\subsection{Linear Programming and Duality}
To begin with, the most fundamental term \emph{linear program} (LP) is briefly explained. Linear programming deals with optimization problems in which a linear objective function shall be either minimized or maximized with respect to a set of linear constraints over a set of continuous variables. On the one hand, if the objective function shall be maximized, the final value of the objective function can be interpreted as a maximized profit or benefit. On the other hand, if it shall be minimized, one can interpret it as a minimized cost. The difference between an \emph{integer linear program} (ILP) and an LP is that the domain of the variables is restricted to integral numbers, so that the variables are discrete and not continuous anymore. 

To bring the terms linear programming and duality together, consider the following Example \ref{fig22} inspired by \cite{linearprogramming}.
\begin{figure}[ht]
\begin{alignat}{7}
&\text{maximize the value} &\quad 4x_1 + 2x_2 \nonumber\\
&\text{among all vectors } (x_1,x_2) \in \mathbb{R}^2& \nonumber\\
&\text{satisfying the constraints } & x_1 + x_2 &\leq 2\\
&									& 2x_1 + 3x_2  &\leq 9\\
&									& x_1 + 2x_2 &\leq 3\\	
& 									& x_1 &\geq 0\nonumber\\ 
&									& x_2 &\geq 0\nonumber						
\end{alignat}
\caption{Linear Program $\tilde{P}$}
\label{fig22}
\end{figure}
A vector $(x_1,x_2) \in \mathbb{R}^2$ that satisfies the constraints (3)-(5) is a \emph{feasible} solution to the linear program $\tilde{P}$. $x_1 \geq 0$ and $x_2 \geq 0$ define the domain of the $x_i$ variables. To illustrate the previously mentioned definitions, consider Figure \ref{fig21}.
\begin{figure}[ht]
\centering
\begin{tikzpicture}

\draw[gray!50, thin, step=0.5] (-1,-3) grid (5,4);
\draw[very thick,->] (-1,0) -- (5.2,0) node[right] {$x_1$};
\draw[very thick,->] (0,-3) -- (0,4.2) node[above] {$x_2$};

\foreach \x in {-1,...,5} \draw (\x,0.05) -- (\x,-0.05) node[below] {\tiny\x};
\foreach \y in {-3,...,4} \draw (-0.05,\y) -- (0.05,\y) node[right] {\tiny\y};

\fill[blue!50!cyan,opacity=0.3] (0,0) -- (0,3/2) -- (1,1) -- (2,0) --cycle;

\draw (-1,3) -- (2,0) -- node[below,sloped] {\tiny$x_1+x_2\leq2$} (5,-3);
\draw (-1,11/3) -- node[above,sloped] {\tiny$2x_1+3x_2\leq9$} (5, -1/3);
\draw (-1,2) -- node[above,sloped] {\tiny$x_1+2x_2\leq3$} (5,-1);
\draw (-1,0) -- (4,0) -- node[above,sloped] {\tiny$x_1\geq0$} (5,0);
\draw (0,-3)  -- node[above,sloped] {\tiny$x_2\geq0$} (0,-2) -- (0,4);

\end{tikzpicture}
\caption{Solution space of the linear program (blue marked)}
\label{fig21}
\end{figure}
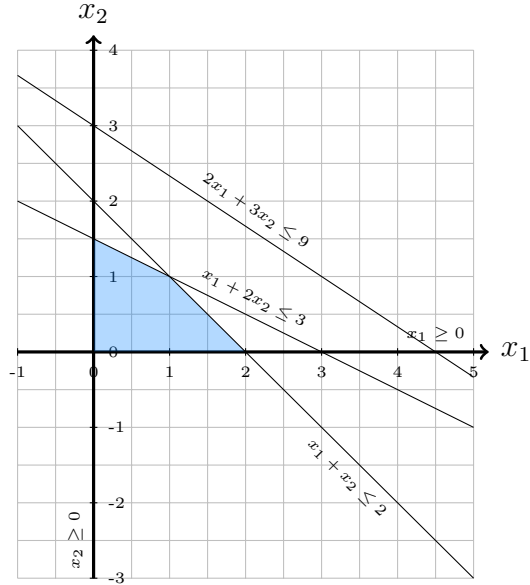
It visualizes the polyhedron that is created by intersecting the five inequalities which are drawn and the intersection of the half planes is marked blue. In other words, all vectors in the marked area are an element of all five half planes and therefore satisfy all inequalities/constraints. Thus, each of these vectors is a feasible solution to the linear program $\tilde{P}$. This marked area is called the \emph{feasible region}. In general, the latter can be either the empty set, bounded or unbounded. The first case occurs when there exists no assignment to the linear variables that satisfies all linear constraints. In other words, no feasible solution exists, the LP is then said to be \emph{infeasible}. The second case is depicted by Figure \ref{fig21} showing that there are feasible solutions in the feasible region and that it is closed, which means that the domains of the linear variables are restricted by the linear constraints. It is though sufficient that the linear variables are restricted in direction of the objective function to state that the linear program is bounded. This implies that there is a vector in this feasible region which maximizes the value of the objective function. This vector is called the \emph{optimal solution} for this linear program. In the last case, the linear program is said to be unbounded if the domain of some linear variable is not restricted in the direction of the objective function. For example if constraints (3)-(5) in $\tilde{P}$ do not exist, there does not exist an optimal solution because the value of the objective function is unbounded in the direction of the objective. Thus, the LP is unbounded.
After seeing that the feasible region for $\tilde{P}$ is closed, the task is to find a certain vector $(x_1,x_2)$ in the feasible region that maximizes the function $4x_1 + 2x_2$. 

However, this is not the only way to calculate the maximum value of the objective function. There exists the \emph{dual} linear program $\tilde{D}$ to the linear program $\tilde{P}$ which describes a minimization problem and its optimal solution results in the same value as the one of $\tilde{P}$, if an optimal solution for $\tilde{P}$ exists. Accordingly, the linear program $\tilde{P}$ is called the \emph{primal} linear program. 

\begin{table}
	\centering
	\begin{tabular}{|c | c  c|c c|}
		\hline			
		&   Primal linear program 		&& Dual linear program 			&\\ \hline
		Variables 			&	 $x_1,x_2,...,x_n$			&& $y_1,y_2,...,y_m$			&	\\
		Matrix	  			& $A$	 				  		&&	$A^T$ 						&\\
		Right-hand side		& \textbf{b} 					&& \textbf{c} 					&\\
		Objective Function	& max $\textbf{c}^T\textbf{x}$	&& min $\textbf{b}^T\textbf{y}$& 	\\
		Constraints			& $i$th constraint has &$\leq$	& $y_i \geq 0$			&		\\
		&	 				   &$\geq$  & 	 $y_i \leq 0$		&		\\
		&					   & = 		& $y_i \in \mathbb{R}$	&		\\
		& $x_j \geq 0$					&&$j$th constraints has &$\geq$ \\
		& $x_j \leq 0$					&&						&$\leq$\\
		& $x_j \in \mathbb{R}$			&&						& = 	\\
		\hline  
	\end{tabular}
	\caption{Dualization Recipe, from \cite{linearprogramming}}
	\label{table1}
\end{table}
In the following, the dualization recipe is shown in Table \ref{table1} taken from \cite{linearprogramming}. This recipe summarizes all the transformations that are needed to construct the dual linear program to a certain primal linear program. If the reader is not familiar with the several steps of this transformation, the reader is recommended to read Chapter 6 of \cite{linearprogramming}. 

After discussing a certain linear program $\tilde{P}$, linear programs are now reviewed in a more general form according to the dualzation recipe. The primal linear program is now expressed by
\begin{alignat*}{7}
	\text{maximize } \textbf{c}^T\textbf{x}\text{ considering } A\textbf{x} \leq \textbf{b} \text{ and } \textbf{x} \geq 0. \quad\quad\quad\quad (\text{P})
\end{alignat*}
The bold variables stand for vectors. \textbf{x} is the vector of the $x_1,x_2,...,x_n$ continuous variables of the primal linear program and $\textbf{c}^T$ is the vector containing the constants by which the $x_i$ variables are multiplied in the objective function. $A$ represents the left-hand side of the constraint matrix, each row pertaining to a constraint. The values in the $m$ rows and $n$ columns are the constants by which the $x_i$ are multiplied in the constraints. \textbf{b} is the vector with the RHS of the $m$ constraints, restricting the domain of the linear variables and therefore limiting the maximum value of the objective function. As already mentioned, there exists a dual linear program $\tilde{D}$ to $\tilde{P}$ describing a minimization problem which yields the same value of the objective function, if an optimal solution exists. 

On the one side, in the maximization LP the linear variables on the LHS are increased with the aim to maximize the objective function, being restricted by the linear constraints. Accordingly, some linear constraints may become tight during the execution when increasing the linear variables. When a linear constraints becomes tight, the variables occurring in this constraints cannot be further increased. So, when no further increasing of any variable is possible anymore, the optimal solution is found. In this way, the optimal solution is approached from below. On the other side, in the minimization problem the RHS of the constraints is considered. Instead of increasing the linear variables, the vector $\mathbf{b}$ shall be minimized in order to restrict the domain of the linear variables as much as possible. In this way, the optimal solution is approached from above. So, minimizing $\mathbf{b}$ is the objective of the dual linear program. Accordingly, the dual linear program looks like:
\begin{alignat*}{7}
\text{minimize } \textbf{b}^T\textbf{y}\text{ considering } A^T\textbf{y} \geq \textbf{c} \text{ and } \textbf{y} \geq 0.\quad\quad\quad\quad (\text{D})
\end{alignat*}
Besides, the minimization of $\mathbf{b}$ is restricted by $\mathbf{c}$. In other words, the LHS of the dual minimization linear program's constraints have to be greater than $\mathbf{c}$ in order to keep the inequalities valid. The mapping of the operators in Table \ref{table1} shows how to dualize the linear program in both directions and that the minimization/maximization problem can be expressed as the primal/dual linear program.

After transforming $\tilde{P}$ to $\tilde{D}$, the result is that the value of the objective function of $\tilde{D}$, corresponding to the found dual solution, is an upper bound for maximum value of the objective function of $\tilde{P}$. This bound can be generalized over all feasible primal and dual solutions as the \emph{weak duality theorem} \cite{linearprogramming}. Simply said, for each dual feasible solution \textbf{y} of D, the value $\textbf{b}^T\textbf{y}$ provides an upper bound on the maximum of the objective function for any feasible solution of the primal linear program P. Formally:
\begin{alignat}{7}
	\textbf{c}^T\textbf{x} \leq \textbf{b}^T\textbf{y}, \quad\textbf{x},\textbf{y} \geq \vec{0}, A\textbf{x}\leq \textbf{b}, A^T \textbf{y} \geq c.
\end{alignat}
It is called the \emph{weak} duality theorem because it only provides the upper bound of the minimization to the maximization problem. Its proof can be read in \cite{survey}.  The difference between the values of the primal and dual solution is called the \emph{duality gap}. It is always greater or equal to zero due to the weak duality theorem. The closer it gets to zero, the closer the primal and dual feasible solutions are to the optimal solution. Other important consequences of this theorem are that if D is bounded, P cannot be unbounded because D has an optimal solution which is then an upper bound for P. Furthermore, if P is bounded, D cannot be unbounded because P then has an optimal solution which lower bounds D. On top of that, P and D cannot be simultaneously unbounded because one of them always bounds the other with a feasible solution. 
Concluding, the weak duality theorem rules out three out of nine cases between P and D being bounded, unbounded or infeasible.
The (strong) duality theorem \cite{linearprogramming} provides statements to the other six cases.
 
To sum up, the duality recipe is an algorithmic way to transform a primal to a dual linear program and vise versa.  Note that there is no rule that the minimization linear program is always called the dual and the maximization being called the primal, both can be the primal/dual linear program. 

\subsection{Duality and Online Problems}
When analyzing an algorithm, it is the conventional way to ask how high the complexity is in terms of running time. Upon determining the worst case complexity, one can predict how much time is needed to execute the algorithm given the technical details of the computer and the input size for the algorithm. 

However, when dealing with algorithms whose input is not completely known before executing it, the focus of interests changes. In an \emph{online} problem the input is partly unknown to the algorithm and provided in parts during its execution. Upon the arrival of a new input, the \emph{online} algorithm has to process this input in real-time. Furthermore, in some scenarios (as in this thesis) the online algorithm is not allowed to undo previous decisions. Hence, the challenge is to accomplish as much ``profit'' as possible under the condition of uncertainty about the input. In other words, the future is unknown to the online algorithm and the algorithm still has to achieve good results. In contrast to this, the \emph{offline} algorithm knows the sequence of arriving inputs in advance and can therefore select the subset of the best fitting inputs for optimizing the result.

The \emph{competitive ratio} can be used to measure the performance of an online algorithm. More precisely, it compares the solution of the online algorithm to the solution of the corresponding offline algorithm which optimally solves the problem. Thus, one achieves a relative comparison between an online solution and the best offline solution. Therefore, the competitive ratio works as a metric for measuring how well an online algorithm relatively performs to the optimal offline algorithm. So, the competitive ratio is of most interest besides the run time complexity when dealing with online problems.

More formally, the definition of the competitive ratio is as follows. Consider a maximization optimization problem $Q$ and a certain instance $I \in \mathbb{I}(Q)$ of this problem. Instance $I$ includes a sequence of requests $\sigma$ that arrive in an online fashion as described above, such that the online algorithm has to process these requests in real-time. Furthermore, the outcome of such an algorithm solving this problem is interpreted as a \emph{profit}. Therefore, there is a set of feasible solutions $S(I)$ for $I \in \mathbb{I}(Q)$ and each $s \in S(I)$ results in a certain profit value $\mathtt{profit}(s)$. Let $s_{\mathrm{OPT}} \in S(I)$ be the solution with the maximum profit upon all solutions in $S(I)$. In other words, $s_{\mathrm{OPT}}$ is the optimal solution for $I$ which is computed by an optimal offline algorithm knowing $\sigma$ in advance. An online algorithm is called \emph{c-competitive} if it achieves, under any instance $I \in \mathbb{I}(Q)$ , at least a profit of $\frac{\mathtt{profit}(s_{\mathrm{OPT}})}{c}$ of the maximum profit solution $s_{\mathrm{OPT}}$. The definition of competitiveness of minimization optimization problems is analogous. Accordingly, the results of the algorithm solving the minimization problem are interpreted as a \emph{cost}. Therefore, a $c$-competitive online algorithm will produce a solution of at most $c \cdot \mathtt{cost}(s_{\mathrm{OPT}})$ for any instance $I \in \mathbb{I}(Q)$ , $\mathtt{cost}(s_{\mathrm{OPT}})$ being now the optimal solution minimizing the cost.

Having such a guarantee on the performance can be significant in practical applications, especially when dealing with a certain level of uncertainty in online problems. In order to obtain a competitive online algorithm one can apply the primal-dual approach for online problems, developed in \cite{survey}. The primal-dual approach is an algorithmic technique to design approximation algorithms for NP-hard problems and is introduced in \cite{survey} to develop approximation algorithms for online problems (competitive online algorithms). In the following, an abstract explanation on how to construct such an algorithm with the primal-dual approach is explained followed by a sample algorithm in the next section.

To begin with, the precondition is that the online problem can be transformed to a linear optimization problem because it is essential that the offline version of this problem is formulated as a LP and the survey \cite{survey} only considers the duality theorem for linear programs. Afterwards, the LP is dualized. Then, an online approximation algorithm is constructed based on the primal and dual LP of the offline problem. More precisely, the algorithm aims to simultaneously approximate the maximum profit and the minimum cost with respect to the linear constraints. However, some competitive online algorithms may only approximate feasible solutions, meaning that the constraints are violated up to a certain factor. 

As the algorithm runs in an online scenario, it has to \emph{decide} whether the current input (request) is accepted or rejected because not every request provides a high profit together with a low cost. This decision is influenced by the current state of the constraints of the corresponding primal and dual LPs. Accordingly, it may happen that the online algorithm accepts a request that the optimal offline algorithm would not have selected. Thus, if the online algorithm accepts a request which is not part of the optimal offline solution, the cost and profit of the online algorithm may diverge from each other, so that the duality gap is not zero anymore.

As mentioned before, it may be the case that the competitive online algorithm does not only approximate the value of the objective function but also approximates the feasibility, meaning it violates the linear constraints. In \cite{survey}, Buchbinder et al. show for certain problems that when trying to optimize the competitive ratio on the objective function, there is always a violation by a logarithmic factor of the constraints of the maximization LP. On the opposite, they provide algorithms that achieve primal and dual feasible solutions but a logarithmic competitive ratio on the objective function. Hence, they introduce $(c_1,c_2)$-competitiveness where $c_1$ is the competitive ratio on the objective function and $c_2$ is the competitive ratio on the feasibility. More precisely, a solution may violate the constraints by a factor of at most $c_2$.

The competitive ratio is determined with the help of the weak duality theorem. The latter is only applicable if the primal and dual solutions are feasible. Assuming that the primal and dual solutions are feasible, let $\mathtt{cost}(s_\mathrm{D})$ be the cost of a dual feasible solution and $\mathtt{profit}(s_\mathrm{P})$ be the profit of a primal feasible solution. The weak duality theorem now states that the value of the dual minimization solution $\mathtt{cost}(s_\mathrm{D})$ upper bounds the value of the primal maximization solution $\mathtt{profit}(s_\mathrm{P})$. In fact, both solutions bound each other by that inequality, so that the optimal solution lies in the duality gap. Therefore, both solutions are at most by factor $c_1 = \frac{\mathtt{cost}(s_\mathrm{D})}{\mathtt{profit}(s_\mathrm{P})}$ away from the optimal solution. As a result, the online algorithm is $(c_1, 1)$-competitive.

If the primal solution is not feasible, another approach can be applied. The competitive ratio describes the factor by which the profit of the optimal offline solution $\mathtt{profit}(s_{\mathrm{OPT}})$ is greater than the one of the online solution $\mathtt{profit}(s_{\mathrm{P}})$ in the worst case. The optimal offline solution is unknown but it is known that it is feasible. Thus, the value $\mathtt{cost}(s_\mathrm{D})$ of the dual feasible solution is an upper bound for $\mathtt{profit}(s_{\mathrm{OPT}})$ due to weak duality. Formally:
\begin{align}\label{eq50}
	\frac{\mathtt{profit}(s_{\mathrm{OPT}})}{\mathtt{profit}(s_{\mathrm{P}})} \quad\quad \myeq \quad	\quad	\frac{\mathtt{cost}(s_{\mathrm{D}})}{\mathtt{profit}(s_{\mathrm{P}})}
\end{align}
Thus, by Equation \ref{eq50}, the competitive ratio is upper bounded by the quotient $\frac{\mathtt{cost}(s_{\mathrm{D}})}{\mathtt{profit}(s_{\mathrm{P}})}$.
 
In the survey \cite{survey}, the authors have designed a framework for designing competitive online algorithms using the primal-dual approach. Based on this framework, they present several generic primal-dual algorithms, which are based on the primal and its dual linear program, for solving the online problem. In addition, they derive for each of these generic algorithms how to systematically proof their competitive ratios with the help of the weak duality theorem as shown previously. This opens a wide field of opportunities for designing competitive online algorithms. They showed the application of the primal-dual approach for many online problems as the weighted caching problem, the set-cover problem, several graph optimization problems, routing and load balancing.

The idea is that one of these generic algorithms should be applicable to the specific online problem of the user, so that she can design her specific competitive online algorithm based on the primal-dual framework. One of these generic algorithms is also applied in this thesis and therefore a similar online problem, which is discussed in the survey \cite{survey}, is presented in the following as an introduction into the primal-dual approach for online problems. Note that in the following, the minimization linear program is referred to as the primal and the maximization linear program as the dual linear program. 

\subsection{The Online Routing Problem}
In this section, an example of the design of an competitive online algorithm via the primal-dual approach is given. This example is given in \cite{survey}. It deals with an online routing problem. The setting consists of the substrate network $G = (V,E)$ with edge capacities $u(e)$, for $e \in E$ and a sequence of requests $R$ with requests $r_i = (s_i,t_i) \in R$ that arrive in an online fashion. When processing $r_i$, the algorithm has to find a path from $s_i$ to $t_i$ with an available bandwidth of 1. If there exists no path that can serve $r_i$, $r_i$ is rejected. Furthermore, as soon as a request is accepted and bandwidth has been allocated for this request, this decision cannot be reversed. In addition, an accepted request never leaves the substrate network, so that completed allocations are maintained forever. 

The corresponding optimization problem asks to maximize the throughput in the substrate network by accepting as many requests as possible which is restricted by the capacities of the edges. The throughput and the utilization of the edges' capacities increase linear with the addition of a request. Furthermore, the bandwidth being allocated for a request is either 0 or 1, hence it is a discrete integer value. So, this optimization problem can be formulated as an ILP, as shown in Figure \ref{fig57}. 
\begin{figure}[ht]
	\begin{alignat}{7}
	&\text{maximize the value} \sum_{r_i \in R} \sum_{p \in \mathcal{P}} f(i,p) \nonumber\\
	&\text{subject to}\nonumber\\
	&\forall r_i\in R: \quad\sum_{p \in \mathcal{P}(r_i)} \quad\quad\quad f(r_i, p) \leq 1 \\
	&\forall e \in E: \sum_{r_i \in R, p \in \mathcal{P}(r_i) | e \in p} f(r_i,p) \leq u(e)\\
	&\forall r_i \in R,p \in \mathcal{P}(r_i): \quad \quad f(r_i,p) \in \{0,1\}\nonumber	
 	\end{alignat}
	\caption{ILP of the described routing problem}
	\label{fig57}
\end{figure}
There exists a discrete variable $f(i,p)$ for every request $r_i \in R$ and possible path $p \in \mathcal{P}$ that can serve $r_i$ with a bandwidth of 1 from $s_i$ to $t_i$. In the objective function, the $f(r_i,p)$ are summed up over all requests and their corresponding possible paths. In other words, the objective is to maximize the amount of requests being served. This maximization is restricted by the constraints in (8) and (9). In (8), the sum over all possible paths $p \in \mathcal{P}$ for a certain $r_i$ shall be less or equal to 1. Since $f(r_i,p)$ is defined to be either 0 or 1, only one $f(r_i,p)$ can be set 1 for a certain $r_i$. Thus, per request $r_i$, only one path $p \in \mathcal{P}(r_i)$ serves a bandwidth of 1 from $s_i$ to $t_i$. The second set of constraints in (9) sums up the $f(r_i,p)$ over all requests $r_i$ and paths $p \in \mathcal{P}(r_i)$ for which a certain edge $e$ occurs in $p$. So, this sum represents the sum of allocations done on $e$ which shall be smaller than its capacity. Thus, this set of constraints prevents the edges from being oversubscribed. 

This ILP is now relaxed in order to apply the primal-dual approach on its corresponding LP. To this end, all discrete variables (in this case ($f(r_i,p)$) are changed to continuous variables. So, the only change in the ILP is the domain of $f(r_i,p)$. Next, this LP is dualized. The primal and dual LP are shown in Table \ref{table3}. Consider at first the dual LP. The domain of $f(i,p)$ changed to be between 0 and 1. Due to the first set of constraints no $f(i,p)$ can be greater than 1 because the sum of all $f(i,p)$ for a certain $r_i$ is at most 1. Thus, it is sufficient to define the domain of $f(i,p)$ to be $f(i,p) \geq 0$. Since $f(i,p)$ is between 0 and 1, multiple paths $p \in \mathcal{P}(r_i)$ can serve $r_i$ by distributing the bandwidth over several paths. In addition, a request $r_i$ does not need to be fully served with a bandwidth of 1 because the first set of constraints only states that the sum of the $f(i,p)$ has to be smaller than 1. In other words, \emph{fractional} solutions are allowed.  
\begin{table}[htpb]
	\resizebox{\textwidth}{!}{%
	\begin{tabular}{l l | l l}
		&   Primal (P) 		& \multicolumn{1}{|c}{Dual (D)} &			\\ \hline
		Minimize: 			&	$\sum_{e \in E}u(e)x(e) + \sum_{r_i}z(r_i)$		& Maximize: $\sum_{r_i}$$\sum_{p \in P(r_i)}f(r_i,p)$	&\\
		subject to 			&	 				  		    & subject to	&					\\
		$\forall r_i\in R, p \in P(r_i)$:& $\sum_{e \in p} x(e) + z(r_i) \geq 1$& $\forall r_i\in R: \sum_{p \in P(r_i)} f(r_i, p)$&$\leq 1$  \\
		&&	$\forall e \in E: \sum_{r_i \in R, p \in P(r_i) | e \in p} f(r_i,p) $&$\leq u(e)$				\\
	&	$\forall r_i,z(r_i) \geq 0, \forall e, x(e) \geq 0$ & $\forall r_i,p: f(r_i,p) \geq 0$ &\\
		
	\end{tabular}}
	\caption{Primal and dual problems}
	\label{table3}
\end{table}

On the other side, in the primal LP there exists a set of primal variables for each set of constraints in the dual LP. Thus, there are two sets of variables $x(e)$ and $z(r_i)$ corresponding to the first and second set of dual constraints. These primal variables are multiplied to the vector on the RHS of the dual constraints which is recognizable in the  primal objective function. If a request $r_i$ is served, the corresponding $z(r_i)$ is increased and for all edges $e$ which are involved in the path $p$, the corresponding $x(e)$ variables are increased too. The cost of the primal objective function then increases too. Thus, one can interpret $x(e)$ as the cost for utilizing $e$'s capacity.
By asking to minimize the costs for the provider, the resource footprint is reduced to a minimum per request.

The problem that is induced with this LP formulation is that there exists an exponential amount of possible paths to serve the request because a trivial online algorithm would try all possible paths in order to find the one minimizing the costs. This leads to an exponential amount of dual variables $f(r_i,p)$ that are created for every request which is inefficient in memory consumption and running time. As shown in the following Algorithm \ref{alg3}, this problem is easily mitigated. Let $n = |E|$ be the amount of edges in the graph.
\RestyleAlgo{boxruled}
\LinesNumbered
\begin{algorithm}[ht]
	\caption{Competitive Online Routing Algorithm\label{alg3} \cite{survey}}
	Initially $x(e) \leftarrow 0$.\\
	When a new request $r_i = (s_i, t_i, P(r_i))$ arrives:\\
	\If{there exists a path $p \in P(r_i)$ of length $<$ 1 with respect to $x(e)$}
	{Route the request on ``any'' path $p \in P(r_i)$ with length $<$ 1.\\
		$f(r_i,p) \leftarrow$1.\\
		$z(r_i) \leftarrow 1$.\\
		\For{$e \in p$}
		{ $x(e) \leftarrow x(e) \mathrm{exp}\bigg(\frac{\mathrm{ln}(1+n)}{u(e)}\bigg) + \frac{1}{n} \bigg[\mathrm{exp}\bigg(\frac{\mathrm{ln}(1+n)}{u(e)}\bigg)-1\bigg]$.
		} 		
	}
\end{algorithm}
By only asking for the existence of a path $p$, it is enough to calculate one path. This can be for example the shortest path between $s_i$ and $t_i$. If the shortest path does not have a length smaller than 1, then no other path will have a length smaller than 1, so that $r_i$ will be rejected. 
Line 4 in Algorithm \ref{alg3} says that ``any'' path $p \in P(r_i)$ with length smaller than 1 can be taken because the competitive ratio does not change if the minimum-cost path or any other path of length smaller than 1 is taken. This keeps the path selection generalized and does not restrict the algorithm to use a shortest path algorithm.
If a path exists, a bandwidth of value 1 is allocated on each edge $e \in p$ for $r_i$. Since $\sum_{e \in p} x(e) < 1$, the primal constraint, that corresponds to $r_i$, is not feasible yet but has to become feasible in order to not violate the competitiveness of the algorithm. Therefore, $z(r_i)$ is set to 1 which will set the corresponding primal constraint of $r_i$ satisfied since the inequality demands the LHS to be greater than 1. Next, the $x(e)$ values of all involved edges are increased according to the shown function in line 8 of Algorithm \ref{alg3}. This function is of most interest in this algorithm because it directly influences its competitive ratio. To become more familiar with the structure of this function, consider the following Theorem \ref{theorem}. The corresponding proof in \cite{survey} is outlined.
\begin{theorem}\label{theorem}
	The algorithm is $O(u_{\mathrm{min}}[\mathrm{exp}(\mathrm{ln}(1+n)/u_{\mathrm{min}})-1])$-competitive and produces a primal and dual feasible solution \cite{survey}.
\end{theorem}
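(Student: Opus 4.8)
The plan is to verify the three claims separately --- dual feasibility, primal feasibility, and the competitive ratio --- and then combine them through weak duality. The whole argument hinges on tracking how a single acceptance changes the primal objective $\sum_{e}u(e)x(e)+\sum_{r_i}z(r_i)$ relative to the dual objective $\sum_{r_i}\sum_{p}f(r_i,p)$, so I would set up an incremental (amortized) comparison $\Delta P \le c\cdot \Delta D$ per request, where $c$ will turn out to be the claimed ratio.

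For dual feasibility I would first solve the recurrence in line 8 of Algorithm~\ref{alg3} in closed form. Writing $a(e)=\exp(\ln(1+n)/u(e))=(1+n)^{1/u(e)}$, the update reads $x(e)\leftarrow a(e)\,x(e)+\tfrac{a(e)-1}{n}$, so after the $k$-th routing through $e$ (starting from $x(e)=0$) one obtains $x(e)=\tfrac{a(e)^{k}-1}{n}=\tfrac{(1+n)^{k/u(e)}-1}{n}$. Setting $k=u(e)$ gives exactly $x(e)=1$. Since the algorithm only routes on a path of length strictly less than $1$, once $x(e)$ reaches $1$ no further request can use $e$; hence at most $u(e)$ requests are ever routed through $e$, which is precisely the edge-capacity (dual) constraint $\sum_{r_i,p\,|\,e\in p}f(r_i,p)\le u(e)$. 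The remaining dual constraint $\sum_{p}f(r_i,p)\le 1$ holds because at most one path receives $f(r_i,p)=1$ per request. Primal feasibility is the easier half: the $x(e)$ are monotonically non-decreasing, so once a constraint $\sum_{e\in p}x(e)+z(r_i)\ge 1$ holds it keeps holding. For an accepted request I would set $z(r_i)=1$, satisfying every constraint indexed by $r_i$ immediately; for a rejected request, rejection means every $p\in P(r_i)$ already has $\sum_{e\in p}x(e)\ge 1$, so the constraints hold with $z(r_i)=0$.

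The competitive ratio is the crux and the step I expect to be the main obstacle, since it is where the peculiar form of the update function must be exploited. Using $\Delta x(e)=(a(e)-1)\big(x(e)+\tfrac1n\big)$, an acceptance on path $p$ changes the primal objective by $\Delta P=1+\sum_{e\in p}u(e)(a(e)-1)\big(x(e)+\tfrac1n\big)$ while $\Delta D=1$. I would then observe that $g(u):=u\big((1+n)^{1/u}-1\big)$ is decreasing in $u$, so $u(e)(a(e)-1)\le B:=u_{\min}\big(\exp(\ln(1+n)/u_{\min})-1\big)$ on every edge. Bounding $\sum_{e\in p}x(e)<1$ (the path was short enough to be accepted) together with $|p|\le n$ yields $\Delta P\le 1+B\big(\sum_{e\in p}x(e)+\tfrac{|p|}{n}\big)\le 1+2B=(1+2B)\,\Delta D$.

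Summing this inequality over all requests from the initial state $P=D=0$ gives $P\le(1+2B)\,D$ for the solutions produced by the algorithm. Finally, weak duality bounds the optimal offline throughput $\mathrm{OPT}$ by any feasible primal value, so $\mathrm{OPT}\le P\le(1+2B)\,D$, i.e.\ the online throughput $D$ is at least $\mathrm{OPT}/(1+2B)$. This establishes $(1+2B)=O\!\big(u_{\min}[\exp(\ln(1+n)/u_{\min})-1]\big)$-competitiveness, matching the statement. The delicate point throughout is that the exact constant $\ln(1+n)$ in the exponent is chosen so that the two effects coincide: it makes $x(e)$ hit $1$ after precisely $u(e)$ routings (feasibility) while simultaneously keeping $u(e)(a(e)-1)$ controlled by $B$ (the ratio); so the heart of the proof is recognizing and using this coupling rather than any single calculation.
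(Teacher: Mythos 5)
Your proof is correct and takes essentially the same route as the paper's: the same per-request comparison $\Delta P \le (1+2B)\,\Delta D$ using the monotone decrease of $u(e)\left[\mathrm{exp}(\mathrm{ln}(1+n)/u(e))-1\right]$ together with the bounds $\sum_{e\in p}x(e)<1$ and $\sum_{e\in p}\tfrac{1}{n}\le 1$, the same geometric-growth argument showing $x(e)=1$ after exactly $u(e)$ routings (hence no edge is ever oversubscribed), and the same final appeal to weak duality. The only differences are cosmetic --- you solve the update recurrence in closed form where the paper sums the geometric series, and you explicitly check the primal constraints of rejected requests, which the paper leaves implicit.
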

\begin{proof}
	At first, observe that the function in line 8 of Algorithm \ref{alg3} is monotonically increasing. Consequently, the $x(e)$ value is never decreased on further request iterations in order to hold a primal feasible solution. Accordingly to the previous section, the competitive ratio is computed by looking at the quotient between the primal and dual solution in the worst case since both are feasible. It is equivalent to compute the quotient between the changes of the values of the objective functions in the worst case because both objective function are increased by their worst case changes for every accepted request and therefore add up to their final value. When a request is accepted, the change in the dual profit is exactly 1 because the additional $f(r_i, p)$ variable is added to the objective function. The latter describes the reserved bandwidths for $r_i$ which is exactly 1 on each edge of the path $p$. The change in the primal cost is at most
	\allowdisplaybreaks
	\begin{alignat}{7}
		&\sum_{e \in p} u(e)(x(e)-x_{\mathrm{prev}}(e)) + z(r_i)\\
		&\resizebox{.9 \textwidth}{!} 
		{$
		= 1+ \sum_{e \in p} u(e)\bigg(x_{\mathrm{prev}}(e) \mathrm{exp}\bigg(\frac{\mathrm{ln}(1+n)}{u(e)}\bigg) + \frac{1}{n} \bigg[\mathrm{exp}\bigg(\frac{\mathrm{ln}(1+n)}{u(e)}\bigg)-1\bigg]-x_{\mathrm{prev}}(e)\bigg)
		$}\\
		&\resizebox{.9 \textwidth}{!} 
		{$
		=1+ \sum_{e \in p} u(e)\bigg(x_{\mathrm{prev}}(e)\bigg[ \mathrm{exp}\bigg(\frac{\mathrm{ln}(1+n)}{u(e)}\bigg)-1\bigg] + \frac{1}{n} \bigg[\mathrm{exp}\bigg(\frac{\mathrm{ln}(1+n)}{u(e)}\bigg)-1\bigg]\bigg)
		$} \\ 
		&\leq 1+ u(e)\bigg[\mathrm{exp}\bigg(\frac{\mathrm{ln}(1+n)}{u(e)}\bigg)-1\bigg] + u(e)\bigg[\mathrm{exp}\bigg(\frac{\mathrm{ln}(1+n)}{u(e)}\bigg)-1\bigg] \\ 
		&= 1+  2\cdot u(e)\bigg[\mathrm{exp}\bigg(\frac{\mathrm{ln}(1+n)}{u(e)}\bigg)-1\bigg]\\
		&\leq 1+  2\cdot u_{\mathrm{min}} \bigg[\mathrm{exp}\bigg(\frac{\mathrm{ln}(1+n)}{u_{\mathrm{min}}}\bigg)-1\bigg]
	\end{alignat}
	The change of the primal variable $z(r_i)$ is at most its value since its set from 0 to 1 upon the acceptance of $r_i$. The change of the $x(e)$ variables that are involved in $p$ is described by the difference between the new and old value $x(e)-x_{\mathrm{prev}}(e)$. This difference is shown in (10) where the primal objective function is given with the difference between the new and old $x(e)$ value. In (11) $x(e)$ is substituted by its definition. In (12), $x_{\mathrm{prev}}(e)$ is excluded. Inequality (13) upper bounds the change by using that $\sum_{e \in p} x(e) \leq 1$ and $\sum_{e \in p} \frac{1}{n} \leq 1$. Inequality (15) uses the fact that the function $u(e)\bigg[\mathrm{exp}\bigg(\frac{\mathrm{ln}(1+n)}{u(e)}\bigg)-1\bigg]$ is monotonically decreasing with respect to $u(e)$. Thus, the change in the primal cost is upper bounded by $1+  2\cdot u_{\mathrm{min}} \bigg[\mathrm{exp}\bigg(\frac{\mathrm{ln}(1+n)}{u_{\mathrm{min}}}\bigg)-1\bigg]$. Now, the weak duality theorem is applied stating that the value of any feasible solution of the minimization problem upper bounds the value of any feasible dual solution of the maximization problem. Since the theorem states that both solutions are feasible, consider the weak duality theorem for the maximum changes in the primal cost and dual profit:
	\begin{alignat}{7}
		1 \leq 1+ & 2\cdot u_{\mathrm{min}} \bigg[\mathrm{exp}\bigg(\frac{\mathrm{ln}(1+n)}{u_{\mathrm{min}}}\bigg)-1\bigg].
	\end{alignat}
	According to the weak duality theorem, the value of the dual solution is in every iteration at most by the factor $1+  2\cdot u_{\mathrm{min}} \bigg[\mathrm{exp}\bigg(\frac{\mathrm{ln}(1+n)}{u_{\mathrm{min}}}\bigg)-1\bigg]$ smaller from the maximum possible change in the profit. In other words, the algorithm is \\$1+  2\cdot u_{\mathrm{min}} \bigg[\mathrm{exp}\bigg(\frac{\mathrm{ln}(1+n)}{u_{\mathrm{min}}}\bigg)-1\bigg]$-competitive.
	
	Next, the theorem states that the capacity constraints are not violated. 
	Observe that the $x(e)$-function in line 8 of Algorithm \ref{alg3} is a geometric sequence with initial value $1/n[\mathrm{exp}(\mathrm{ln}(1+n)/u(e)-1]$ and the multiplier $q= \mathrm{exp}(\mathrm{ln}(1+n)/u(e))$. Since each request that involves edge $e$ in its path allocates a bandwidth of value 1 on $e$, maximally $u(e)$ allocations can be done on $e$. So, after $u(e)$ allocations on $e$, the value of $x(e)$ is 
	\begin{alignat}{7}
		 x(e) &= \frac{1}{n}\bigg( \mathrm{exp}\bigg(\frac{\mathrm{ln}(1+n)}{u(e)}\bigg)  -1\bigg) 
		 \frac{\mathrm{exp}\bigg(\frac{u(e)\mathrm{ln}(1+n)}{u(e)}\bigg)-1}{\mathrm{exp}\bigg(\frac{\mathrm{ln}(1+n)}{u(e)}\bigg)-1}\\
		 &= \frac{1}{n} (1+n-1) = 1.
	\end{alignat} 
	Note that the algorithm only accepts a request if its length is smaller than 1. So, if $x(e) = 1$ the length is already 1. Thus, $e$ with $x(e) = 1$ will never occur in any further path, such that it will never be oversubscribed.
\end{proof}
One can see that the cost-function $x(e)$ determines the competitive ratio on the objective function and on the constraints. Thus, it is of most interest to develop such a function according to the desired competitive ratios. 

The competitive ratio on the objective function is a logarithmic factor, but Algorithm \ref{alg3} is therefore 1-competitive on the constraints. The survey shows throughout several examples that there is the previously mentioned trade-off between the competitiveness on the objective function and the capacity constraints for the online routing problem, as they developed a (1, log $n$)-competitive online algorithm. More precisely, this algorithm is 1-competitive on the objective function and log $n$ competitive on the capacity constraints. In the contribution of this thesis, one competitive online algorithm is developed and another is designed, based on \cite{gvop}, for the virtual cluster embedding problem where one of them focuses on the competitiveness of the objective function and the other on the competitiveness of the capacity constraints.

After introducing the linear programming duality and its application, in the following section the related work is presented.
\section{Related Work}\label{relatedwork}
There has been a lot of work done in the past years based on the virtual cluster model \cite{vc}. An example for an extension of the VC model is presented in \cite{survivability} in which a survavibility parameter is added to the VCE problem. A system that recovers a virtual cluster embedding in case of single-PM failure was investigated by deploying so called ``primary'' and ``backup'' embeddings.

Another researched parameter was the running time of VCE embedding algorithms. As in the previously mentioned paper \cite{survivability}, they used a heuristic to compute VCE embeddings since solving the VCE problem was thought to be NP-hard. So, most papers related to VCE embedding problems suggested heuristic algorithms to find VCE solutions. However, Rost et al. developed in \cite{vcace} an algorithm that finds an optimal solution to the VCE problem in polynomial time, showing that the problem is not NP-hard. 

The results in \cite{vcace} have been used in \cite{kraken} to develop an algorithm that allows an elastic resource reservation. This means that the tenant is allowed to update its demands during the running time of its request. 

However, all the mentioned algorithms	 have in common that they are online algorithms. The execution of these non-competitive online algorithms can lead to a poor profit for the provider. Thus, the objective of this thesis is develop a competitive online VCE algorithm in order to guarantee a fraction of the maximum possible profit. 

In \cite{allocator}, the authors present competitive online algorithms, based on \cite{survey}, for the joint allocation of different resources (e.g., CPU, memory, disk, etc.) in a geo-distributed cloud. Particularly, they do not apply the VC model but propose a generalized resource placement methodology that considers the parameters: resource type, VM class, location of the resource, capacity of a resource dependent on the resource type, the location and timestamp. 
In contrast to the VC model, they do not model guaranteed bandwidths. Nevertheless, weighted edges can be incorporated by replacing the weighted edge with a weighted node and two incident edges, so that weighted edges are modeled as weighted nodes. Thus, edges with a bandwidth capacity can be represented by a resource type. 
However, in one of their proofs they assume that the minimum consumption of any resource type is at least 1 per request, which prohibits virtual cluster requests from being embedded onto a single PM since at least one unit of bandwidth has to be used according to their definitions.

In \cite{gvop}, the authors present a framework for general network embedding problems. Their framework is based on the primal-dual approach for online problems \cite{survey}. 
It is applicable to the VCE problem because it also considers a substrate graph with edge capacities and a sequence of requests arriving in an online fashion. A request has to be either rejected or accepted upon its arrival dependent on whether an acceptable embedding exists. They developed a generic competitive online algorithm whose competitive ratio is constant on the objective function but logarithmic on the capacity constraints. Based on this algorithm, the GVOP algorithm of this thesis is designed. In \cite{gvop}, the authors mention in Corollary 2 that the algorithm's competitive ratio can be transformed into a constant competitiveness on the capacity constraints by downscaling the capacities. However, with this transformation a restriction is induced, namely that no request is allowed to allocate more bandwidth on an edge than the minimum edge capacity in the substrate network divided by the logarithmic term of the competitive ratio. This is a hard restriction when considering that the minimum capacity can be 1. In contrast to this, the COVCE algorithm of this thesis does not have such a restriction.

\section{Theory}\label{theory}
This section consists of a short introduction into the online VCE problem by revisiting the example of the introduction Section \ref{intro} and then we present two competitive online algorithms, where each of them focuses on a different competitive ratio of the $(c_1,c_2)$-competitiveness. The first one is the \textbf{G}eneral all-or-nothing \textbf{V}Net Packing \textbf{O}nline \textbf{P}acking (GVOP) algorithm based on \cite{gvop} and the second one is the \textbf{C}ompetitive \textbf{O}nline \textbf{V}irtual \textbf{C}luster \textbf{E}mbedding (COVCE) algorithm which is a novel contribution of this thesis. 

To begin with, in the last paragraph of Section \ref{model}, in which the VC model and the VCE problem are explained, an approach for an optimization problem has been formulated in which the costs of serving several requests shall be minimized. This approach is the basis for designing the VC-ACE algorithm \cite{vcace} which finds a minimal-cost embedding in polynomial time solving an integral minimum-cost single-commodity flow problem. It minimizes the resource consumption per request considering the initial capacities for every request it processes.

Consider the VC-ACE algorithm in which residual capacities are used for embedding a request and therefore respecting the \emph{load} on the substrate nodes and edges. The load on a node or an edge is the sum over all allocations on this node/edge done by the previous requests. Thus, this version is called the \emph{Greedy} VC-ACE algorithm and will embed \emph{any} request with minimal resource consumption that has a valid embedding. However, by applying this procedure, it is not ensured that the profit for the provider is maximized. Consider the Figures \ref{fig34} - \ref{fig36} (page 2-4) again in which multiple arriving requests are embedded with the least amount of needed resources. After filling the substrate, there is request 4 with a higher benefit than any other previous request. Since the network is fully utilized, request 4 has to be rejected. 

More formally, these requests arrive over time and the algorithm only knows the current state of the network at the arrival of a request. So, the algorithm has to make a decision upon its current situation. Hence, this variant of the virtual cluster embedding problem is an online problem.

By this observation, one can follow that accepting any request, under the assumption that there exists a valid embedding for this request, does not ensure that the profit for the provider is maximized. Greedily minimizing the resource consumption per request does not guarantee a \emph{globally} maximization of the profit. To achieve a guarantee on the profit, one has at first to consider the maximum profit which is possible for a certain sequence of requests and a certain substrate network. The maximum profit $\mathtt{profit}_{\mathrm{MAX,ILP}}$ which is achievable is computed by the optimal offline algorithm. 
By solving the ILP of that problem, assuming the profit increases linear and the capacity constraints are linear too, the optimal offline solution with its profit $\mathtt{profit}_{\mathrm{MAX,ILP}}$ is calculated.
The latter is also a solution of the corresponding LP that is the relaxed version of this ILP. Furthermore, the profit $\mathtt{profit}_{\mathrm{MAX,LP}}$ of this LP is greater or equals than $\mathtt{profit}_{\mathrm{MAX,ILP}}$. 
Now, the primal-dual approach for online problems of \cite{survey} can be applied in order to upper bound $\mathtt{profit}_{\mathrm{MAX,LP}}$ with the cost of the minimization LP due to weak duality. In this way, a competitive online algorithm is developed that guarantees a $c$-fraction of $\mathtt{profit}_{\mathrm{MAX,LP}}$ for any request sequence. In other words, the competitiveness of the online algorithms, that are designed upon this primal-dual approach, is with respect to an optimal fractional solution. Thus, the optimal offline solution of the ILP is included is in this solution space, so that this competitive ratio is also valid for the optimal offline solution of the ILP (describing the original problem).

However, the Greedy VC-ACE algorithm, which solves the problem in an online fashion, lacks competitiveness and therefore has no worst case guarantee of an $c$-fraction of the optimal offline solution. 

To define a competitive online algorithm, the offline dual and primal linear programs of this optimization problem are shown in the next section and then the main contribution of this thesis, the COVCE algorithm will be explained. Before showing the corresponding LP, some additional variables are defined which are used for describing the primal and dual LPs. 

The \emph{load} $l(i,k,y): V_S \cup E_S\rightarrow \mathbb{N}$ on a node or edge $y$ describes how many resources of that element are consumed by embedding $m_i^k$ of request $r_i$. Note that a physical edge or node $y$ can be used multiple times in an embedding if enough resources are available. In this case, $l(i,k,y)$ describes the sum of the allocations on node/edge $y$ by embedding $m_i^k$ of request $r_i$. In a valid embedding $m_i^k$ of request $r_i$, the inequality $l(i,k,y) + \sum_{r_j \in R\setminus\{r_i\}}\sum_{m_j^q \in M(r_j)} l(j,q,y) \leq \mathtt{cap}(y)$ holds $\forall y \in E_S \cup V_S$. 

Next, a dual variable is described. $f(i,k) \in [0,1]$ is assigned with 1 if the embedding $m_i^k$ of request $r_i$ shall be fully served by the algorithm. Accordingly, $f(i,k)$ is between 0 and 1 if $r_i$ is partly served. In other words, fractional solution are allowed in the offline linear program since the competitive ratio of the online algorithm is with respect to the optimal fractional solution.  

The primal variables $x(e), e \in E_S$ and $x(v), v \in V_S$ describe the cost for using edge $e$ and node $v$ in an embedding. Formally, $x(e),x(v): V_S\cup E_S \to \mathbb{R}^+$.

\subsection{Linear Program}
In this section, the offline LP and its dual LP are discussed. At first the dual LP is shown. As already mentioned, the objective is to maximize the profit, so that the objective function of the LP is a sum of benefits over all requests $r_i$. The benefit $b(r_i)$ is multiplied with the dual variable $f(i,k)$ working as a decision variable. $f(i,k)$ has two arguments: $i$ is the argument iterating over all requests and $k$ over all valid embedding of a certain request $r_i$. So, for each request $r_i$ and each valid embedding $m_i^k \in M(r_i)$ there exists a variable $f(i,k)$.

The first constraints in Equation \ref{eq1} describe a sum of $f(i,k)$ over all valid embeddings for a certain request $r_i$. This sum shall be less or equal than 1. Since $f(i,k) \in [0,1]$, fractional solutions are allowed as described before in the background Section \ref{background}. The consequence for the online routing problem was that multiple paths were allowed to serve the request. In this case, multiple embeddings are allowed to serve the request with the restriction that the sum of all $f(i,k)$, that represents all used valid embeddings for $r_i$, has to be smaller than 1. Next, the second constraints in Equation \ref{eq2} describes the capacity constraints for each edge and each node. For each $x \in E_S \cup V_S$, there is a sum of $f(i,k) \cdot l(i,k,x)$ over all requests and all embeddings for a request $r_i$. By iterating over all valid embeddings and evaluating $f(i,k)$,  only the summands for which $f(i,k)$ is not 0 are considered, meaning the loads of the chosen embedding $m_i^k$ inducing an allocation on $x$ are summed up. So, when $f(i,k) > 0$, the load $l(i,k,x)$ will be added to the sum. Thus, this sum describes the sum of the loads of all accepted embeddings that use $x$. By forcing this sum to be smaller or equal to $x$'s capacity, resource violation is prohibited. Hence, this set of constraints describes the capacity constraints of the edges and nodes in the substrate network.
\paragraph{Dual}

\begin{alignat*}{7}
max \sum_{r_i \in R}\sum_{m^k_i \in M(r_i)} f(i, k) \cdot b(r_i)
\end{alignat*}
\begin{alignat}{7}\label{eq1}
\sum_{m^k_i \in M(r_i)} f(i, k) &\leq 1	\hspace{4em} &\forall r_i \in R
\\\label{eq2}
\sum_{r_i \in R}^{} \sum_{m^k_i \in M(r_i)} f(i, k) \cdot l(i,k,x) &\leq \mathtt{cap}(x) \hspace{4em} &\forall x \in V_S \cup E_S, 
\\\label{eq3}
f(i,k) &\geq 0 \hspace{4em} &\forall r_i \in R, m^k_i \in M(r_i)
\end{alignat}

\paragraph{Primal}

\begin{alignat*}{7}
	min \sum_{e \in E_S} \mathtt{cap}(e)\cdot x(e) + \sum_{v \in V_S} \mathtt{cap}(v)\cdot x(v) + \sum_{r_i \in R} z(r_i)
\end{alignat*}
\begin{alignat}{7}\label{eq5}
	\resizebox{.9 \textwidth}{!} 
	{$
	\sum_{e \in \mathcal{M}_{E,i}^k} l(i,k,e) \cdot x(e) + \sum_{v \in \mathcal{M}_{V,i}^k} l(i,k,v) \cdot x(v)+ z(r_i) \geq b(r_i) \forall r_i \in R, m^k_i \in M(r_i)
	$} \\\label{eq6}
	\resizebox{.46 \textwidth}{!} 
	{$
	x(e), x(v), z(r_i) \geq 0 ~\forall r_i \in R,e \in E_S, v \in V_S
	$}
\end{alignat}

Consider that the load $l(i,k,x)$ is not a dual variable but has to be computed during the execution of the algorithm. Finally, the last line of the dual LP describes the domain for the dual variable $f(i,k)$.
After describing the dual packing LP, the primal covering LP is designed with the dualization recipe in Table \ref{table1}. As described in the background Section \ref{background}, one can alternately solve the primal problem to calculate the value of the optimal solution of the dual LP.
For each constraint in the dual LP, a primal variable is created and for each dual variable a constraint is created. There are three sets of new variables, the $x(e), x(v)$ and $z(r_i)$ variables. Each set of variables is respectively multiplied with the right side of their corresponding dual constraints as shown in the objective function of the primal LP. It consists of a sum of three sums where the sums $\sum_{e \in E_S} \mathtt{cap}(e)\cdot x(e)$ and  $\sum_{v \in V_S} \mathtt{cap}(v)\cdot x(v)$ summarize the analogous dual constraints in Equation \ref{eq2} and lastly $\sum_{r_i \in R} z(r_i)$ is the analog to the set of constraints in Equation \ref{eq1}. 

There is only one set of constraints in this primal LP because there is only one set of dual variables. This set is shown in Equation \ref{eq5}. It connects the resource consumption of embedding $m_i^k$ for request $r_i$ with its benefit $b(r_i)$ and does not have a deeper interpretation. As in the duality recipe, the right side of the constraints contains the factor of the dual objective function $b(r_i)$ and the left side of the constraints includes the left side of the dual constraints.

Considering the domain of the primal variables in Equation \ref{eq6}, their functionality describes some kind of costs for embedding requests because the more requests are being embedded, the more resources are utilized, and accordingly, the corresponding primal variables are increased in order to fulfill the primal constraint. Furthermore, the products between the primal variables and the capacities increase too, reflecting the increasing usage of the resources. Thus, this increasing resource consumption can be interpreted as a cost for the provider because she has less resources left for further requests.
\subsection{Competitive Online VCE Algorithms}
After describing the primal-dual LP for the offline case, the competitive online algorithms are now presented. Both of them are based on the online setting of the previously shown offline problem. In the online problem, previously set dual variables are not allowed to be changed, referencing to the characteristic of online algorithms that previously made decisions cannot be reversed. In addition, it is assumed that an accepted request stays forever in the substrate network. Thus, primal variables are only allowed to increase during the execution of the algorithm. Furthermore, the dual variables are only allowed to be set during the iteration $i$ in which the request $r_i$ is being served because a request shall be either fully served or rejected. In other words, the algorithm shall be an all-or-nothing algorithm. Moreover, a single embedding $m_i^k \in M(r_i)$ shall be selected because the original VCE problem, which is non-relaxed, asks to find a single VCE for fully embedding $r_i$'s virtual cluster. Therefore, no fractional solutions are allowed. The last two conditions (all-or-nothing and non-fractional) induce that there is only one summand $f(i,k)$ per request $r_i$ contributing to the load per node and edge in the set of constraints in Equation \ref{eq2} and that $f(i,k) \in \{0,1\}$. 

At first, the main contribution of this thesis, the COVCE algorithm is explained and afterwards the GVOP algorithm of \cite{gvop}. Both are competitive online VCE algorithms where the COVCE algorithm has a constant competitiveness on the capacity constraints but a logarithmic competitiveness on the objective function and the GVOP algorithm vise versa.
\subsubsection{COVCE Algorithm}
In this section, the main contribution of this thesis is presented. For simplicity, it is called the \textbf{C}ompetitive \textbf{O}nline \textbf{V}irtual \textbf{C}luster \textbf{E}mbedding algorithm (COVCE). The ideas of it are based on the competitive online Algorithm 2 in chapter 4 of \cite{survey}. The algorithm is an \emph{all-or-nothing} algorithm, meaning that it either accepts or rejects a requests completely. In addition, no fractional solutions are allowed, meaning that a single embedding serves the demanded virtual cluster.

Before going through the algorithm, consider the following definitions.
Let $\mathcal{M}_{E,i}^k$ be the single set containing the image of an embedding's $m_i^k$ edge mapping and analogously  $\mathcal{M}_{V,i}^k$ be the single set containing the image of $m_i^k$'s node mapping. Moreover, let 
\begin{equation}\label{eq27}
\mathrm{cost}(m_i^k) = \sum_{e \in \mathcal{M}_{E,i}^k} l(i,k,e) \cdot x(e) + \sum_{v \in \mathcal{M}^k_{V,i}} l(i,k,v) \cdot x(v)
\end{equation} 
be the cost of the embedding $m_i^k$. In this formula, the cost gets higher with higher loads on the edges or nodes and when the primal variables $x(e)$ and $x(v)$ get higher too, which is the case when there is already a high load on the edge or node. Consequently, a high cost($m_i^k$) value reflects an embedding which consumes a lot of resources and/or a substrate network which is close to be fully utilized. Note that $x(e)$ and $x(v)$ are now the cost functions for the nodes and edges being set in the competitive algorithm and are not being set by the LP.

Let $m_i^O$ denote the optimal embedding for request $r_i$ at the time of the
arrival, where optimal means that it provides the least amount of costs due to Equation \ref{eq27}.
While the dual program only considers the capacity constraints, it is also important to respect the costs induced for the provider, where costs means how big the resource footprint of the embedding is. To this end, the minimum-cost embedding $m_i^O$ shall be determined to serve $r_i$. In addition, cost($m_i^O$) shall help with the decision whether it is worth to embed $r_i$. Thus, the primal-dual combination is important to aim a profit maximization and simultaneously a cost minimization while respecting whether the embedding costs are worth the benefit $b(r_i)$ of $r_i$. This achieves an admission control behavior, so that not any request is accepted since embedding costs might get too high. How this characteristic is included into the algorithm, is shown later.

Let $C_E$ be the maximum edge capacity and analogously $C_V$ be the maximum node capacity in the substrate network. Additionally, let $|G_S| = |V_S| + |E_S|$ be the size of the substrate graph. Lastly, $\alpha = \mathrm{max}_i\{b(r_1),...,b(r_i)\}$ be the maximum benefit that has been seen yet by the algorithm until request $r_i$. As one can see, the value of $\alpha$ increases during the running time of the algorithm.

In the following, the Algorithm \ref{alg1} is explained. 
When a new request $r_i$ arrives, the first task is to check whether the benefit $b(r_i)$ is greater than the current $\alpha$. In that case $\alpha$ is updated and all $x(e)$ and $x(v)$ too. This if-clause is further explained in details when considering the cost functions.

After that, the algorithm receives an embedding from an oracle that provides a minimal-cost embedding $m_i^k$ for $r_i$ which is assigned to $m_i^O$. For the competitive online algorithm, it is not relevant how this minimal-cost embedding is found but the decision whether to accept and embed the minimal embedding is important because this "admission control" functionality contributes to the competitiveness of the algorithm. For this reason, the problem of finding a minimal embedding is given over to an oracle procedure. For the implementation of the competitive algorithm, the VC-ACE algorithm \cite{vcace} is applied as the oracle procedure for finding a minimal-cost embedding because it optimally solves the VCE problem in polynomial time. 

The VC-ACE algorithm takes as input the virtual cluster request $VC_i = (\mathcal{N}_i,\mathcal{B}_i, \mathcal{C}_i)$, the substrate network with the original node and edge capacities, meaning it assumes that no requests are embedded, and the $x(e)$ and $x(v)$ as cost functions for the nodes and edges. It outputs a cost-minimal embedding applying the successive shortest path algorithm for solving a minimum cost-flow problem. To recapitulate, the $x(e)$ and $x(v)$ primal variables increase upon including edge $e$ and node $v$ in an accepted embedding. So, the more accepted embeddings are including $e$ and $v$, the higher their $x(e)$ and $x(v)$ increase. This behavior can be interpreted as a higher node or edge capacity consumption. If the $x(e)$ and $x(v)$ variables are applied as costs on nodes and edges for the minimum-cost flow problem, then a higher cost will provide a lesser probability of including this node or edge in an embedding again. Thus, the application of the primal variables as cost functions for the VC-ACE algorithm can protect the node or edge from being oversubscribed and creates load balancing between the nodes and edges of the substrate network. Note that all primal and dual variables are assumed to be initially zero (using lazy initialization).

\makebox[\textwidth][c]{%
\begin{minipage}{1.2\textwidth}
\begin{algorithm}[H]
	\caption{COVCE Algorithm\label{alg1}}
	Initially all $x(e)$ and $x(v)$ are set 0\;
	Upon the arrival of a new request $r_i$, the corresponding primal constraint in Equation \ref{eq5} and the dual variable $f(i,k)$ are introduced:\\
	\Indp \If{$b(r_i) > \alpha$}{ 
		$\alpha \gets b(r_i)$\; 
		$\forall e \in E_S, v \in V_S:$ re-calculate the costs $x(e)$ and $x(v)$ with respect to the new $\alpha$\;
	}
	\If{there exists an embedding $m_i^O \in M(r_i)$ with $\mathrm{cost}(m_i^O) < b(r_i)$}{\label{eq33}

		// \textit{\textbf{accept} $r_i$}\\
		$f(i,\mathrm{O}) \gets 1$\;
		$f(i,k) \gets 0$ for all other valid embeddings $m_i^k \in M(r_i)\setminus \{m_i^O\}$\;   	
		$z(r_i) \gets b(r_i)$ - cost($m_i^O)$\;
		\For{$e \in \mathcal{M}_{E,i}^O$}{
			\setlength{\abovedisplayskip}{-8pt}
			\setlength{\abovedisplayshortskip}{0pt}
			\setlength{\belowdisplayshortskip}{0pt}
			\begin{align}
			\resizebox{.75 \textwidth}{!} 
				{$x(e) \gets \frac{1}{|G_S|\cdot C_E} \cdot \left[\mathrm{\mathrm{exp}}\left(\frac{\mathrm{ln}(1+|G_S|\cdot C_E\cdot \alpha)}{\mathtt{cap}(e)} \cdot (\sum_{r_i \in R} \sum_{m^k_i \in M(r_i)}f(i,k) \cdot l(i,k,e))\right)-1\right]$ \label{eq52}}
			\end{align}
		}
		\For{$v \in \mathcal{M}_{V,i}^O$}{
			\setlength{\abovedisplayskip}{-8pt}
			\setlength{\abovedisplayshortskip}{0pt}
			\setlength{\belowdisplayshortskip}{0pt}
			\begin{align}
			\resizebox{.75 \textwidth}{!} 
				{$x(v) \gets \frac{1}{|G_S|\cdot C_V} \cdot \left[\mathrm{\mathrm{exp}}\left(\frac{\mathrm{ln}(1+|G_S|\cdot C_V\cdot \alpha)}{\mathtt{cap}(v)} \cdot (\sum_{r_i \in R} \sum_{m^k_i \in M(r_i)}f(i,k) \cdot l(i,k,v))\right)-1\right]$ \label{eq53}}
			\end{align}
		}
	}
	\Else{
		// \textit{\textbf{reject} $r_i$}\\
		$z(r_i) \gets 0$\;
		$f(i,k) \gets 0$ for all $k$\;   
		// \textit{all $x(e), e \in E_S$ and $x(v) , v \in V_S$ are unchanged}\\
		// \textit{$\rightarrow$ change in primal and dual objective functions is 0}
	}
\end{algorithm}
\end{minipage}}

A cost-minimal embedding can only be found if the request's demand of VMs, compute units and bandwidth would not violate the initial capacities. In other words, a cost-minimal embedding can only exist if there exists at least one valid embedding.

In the next line 7, the algorithm \emph{decides} in the if-condition whether to accept the request or not by a comparison of its incurred costs to its benefit. If the costs are greater than its benefit, the request is rejected. By this property of not accepting \emph{any} request, the competitiveness is created. The rejection of a request can be influenced by different factors. The request might not provide as much benefit relatively to its embedding costs as previous requests did. Additionally, the request might need to use resources which already have high costs, meaning the algorithm will reject any request whose minimal-cost embedding includes this certain node or edge. On top of that, the algorithm's limits might be already tight for all nodes and edges and therefore it will reject any request because it maintains its competitiveness.

If $r_i$ is accepted, the corresponding dual variable $f(i,k)$ is set to 1 (line 9) to indicate in the objective function of the dual LP that $r_i$ is accepted and to add $b_i$ to the result of the objective function.

In line 11 the primal variable $z(r_i)$ is set to the difference between $b(r_i)$ and cost($m_i^O)$ in order to satisfy the corresponding primal constraint with the least needed value. Furthermore, $z(r_i) \geq 0$ holds because  $b(r_i) >$ cost($m_i^O)$ holds since line 11 is only reached if cost($m_i^O) < b(r_i)$ in line 7 holds. For these reasons, the primal solution is feasible.

Finally, the $x(e)$ and $x(v)$ of the nodes and edges which are involved in $m_i^O$ are increased by the shown formulas in the for-loops. As already mentioned in the background Section \ref{background}, the cost function of the competitive online algorithm determines its competitive ratios of the objective function and the capacity constraints. With this cost function, the COVCE algorithm is ($4 \cdot \mathrm{ln}(1+|G_S|\cdot C_{max}\cdot \alpha) + 1, 2)$-competitive, meaning a constant factor competitiveness on the capacity constraints and a logarithmic factor competitiveness on the objective function. In the next sections, the cost functions are analyzed in more details and the competitive ratios are proven.

\paragraph{Cost Functions}
In this section, the functions that describe the increase of the $x(e)$ and $x(v)$ primal variables are deeper explained and motivated. Since, they are used as costs for nodes and edges in the minimum-cost flow problem, these functions are called \emph{cost functions}. The most important property of a cost function $f(x)$ is that it is monotonically increasing for $x \geq 0$ because the primal variables are not allowed to decrease during the execution of the algorithm since in an online problem, previous decision cannot be reverted. In this case that means that a previous allocation cannot be undone. This characteristic is captured by non-decreasing primal variables and by the dual variables which can only be increased during the processing of their certain request. The cost of a node or edge represents how many allocations are already done on this node or edge. Since allocations are maintained forever, a monotonically increasing cost function is necessary to represent this online behavior because the allocation on a node or edge is also only monotonically increasing over time.

Consider now the structure of the cost function $x(e)$ in Equation \ref{eq52} to further analyze the monotonicity. It is mainly built of constant factors as $|G_S|$, $C_E$, $\mathtt{cap}(e)$ and an exponential function. As these factors are constant, they cannot cause a decrease of $x(e)$ during the execution of the algorithm. The exponential function only increases for positive exponents. Since the constant factors describe sizes and capacities, they are all non-negative. The same holds for $\alpha$ because it describes the maximum benefit and the benefit is always non-negative. Moreover, the dual variable $f(i,k)$ and the load are defined to be non-negative too. To sum up, the exponent is non-negative. Before concluding that the exponential function is monotonically increasing, consider the terms inside the exponent.

At first, the term $\alpha$ only occurs in the numerator of the exponential function and only increases during the execution since its value increases upon a maximization of its current value and the benefit of the current request. Hence, the variable $\alpha$ cannot cause a decrease of $x(e)$. Next, consider the term 
\begin{equation*}
	\sum_{r_i \in R} \sum_{m^k_i \in M(r_i)}f(i,k) \cdot l(i,k,e)
\end{equation*}
for a certain edge $e$. The first sum iterates over all requests $r_i$ that have been observed yet by the algorithm. The second sum iterates over all valid embeddings $m_i^k$ for a certain $r_i$. In addition, the product $f(i,k) \cdot l(i,k,e)$ is greater than 0 if the $k$-th embedding of the  $i$-request has been accepted and if $e$ is included in $m_i^k$ (because then $l(i,k,e) > 0$ would hold). Since $f(i,k) > 0$ only holds for exactly one embedding per request, the second sum only consists of one summand. In other words, $f(i,k)$ \emph{filters} the embedding that has been accepted by the algorithm for $r_i$. So, the first sum describes the load on $e$ by $r_i$. Summing over all the requests that have been observed yet, the whole term describes the sum of the loads on $e$ of all requests, giving the total load on edge $e$. As the total load on $e$ cannot decrease during the execution (because an accepted request stays forever in the substrate) this term cannot cause a decrease of $x(e)$ because it is also contained in the numerator of the exponential function. In conclusion, the cost function is growing \emph{monotonously} and is therefore valid to be used for changing the primal variable $x(e)$ during the execution. Analogously, the $x(v)$ cost function is built where $C_E$ and $\mathtt{cap}(e)$ are substituted by $C_V$ and $\mathtt{cap}(v)$.

After showing the monotonicity of the cost function, its structure is further explained. It looks similar to the cost function of the generic competitive online  algorithm in \cite{survey}, section 4.2 "Three Simple Algorithms", Algorithm 2. 
The next formula from \cite{survey} illustrates the cost function:
\begin{equation}
	x_i = \frac{1}{d} \bigg[exp\bigg(\frac{\mathrm{ln}(1+d)}{c_i} \sum_{j|i\in S(j)}y_j\bigg )-1\bigg ]
\end{equation}
The variable $j$ iterates over all requests and $i$ over all edges of the substrate graph. The set $S(j)$ contains all $x_i$ for which $a(i,j) = 1$ and $a(i,j) = 1$ if edge $i$ is included in the solution ($a(i,j)$ being the matrix consisting of the LHS of the constraints). To capture the biggest set found yet over all requests, $d$ is defined as $d =\mathrm{max}_j|S(j)|\leq n$. Analogously, for the embedding problem, $d$ would be the size of the biggest embedding found yet because a primal constraint in which all $a(i,j)$ = 1 would mean that the whole substrate is included in the embedding. For that reason the variable $d$ of Algorithm 2 is the analogon to the product $|G_S| \cdot C_E$ for edges and $|G_S| \cdot C_V$ for nodes in the node and edge cost functions.

As Algorithm 2 uses an exponential function for achieving a monotonically increasing function, the COVCE algorithm uses one too because the exponential function is monotonously increasing for positive exponents. The more detailed explanation why this function is chosen is discussed in \cite{survey}. The authors present a derivation upon a linear differential equation approach. 

The logarithmic terms in the exponents of both functions have a similar functionality since $d$ describes the size of the biggest solution and analogously $|G_S| \cdot C_E$ describes the maximum possible allocation. In addition, the logarithmic term contains the variable $\alpha$ which is needed in that term to achieve the mentioned competitive ratio. The sum in the exponent of Algorithm 2 sums the dual variables for which edge $i$ has been part of the solution. This is again analog to the summation of the loads on edge $e$ of the accepted embeddings which include edge $e$.

Lastly, the parameter $\alpha$ is explained. One can ask oneself why $b(r_i)$ instead of $\alpha$
is not sufficient in the exponent of the cost function. The benefit $b(r_i)$ for serving request $r_i$ has to be upper bounded by a parameter $\alpha \geq 1$ because the benefit is part of the cost function and the cost function is not allowed to decrease the value of $x(e)$ nor $x(v)$. This could be possible if $b(r_i)$ of the current request is smaller than the one from the previous request. Thus, the algorithm has to set this parameter $\alpha$ during its execution to upper bound the benefit, so that the value of the function is not anymore dependent on the variable benefit. Initially, $\alpha$ is set to 1 and then always increases as soon as the current benefit $b(r_i)$ is greater than the current $\alpha$. In this way, it is ensured that $\alpha$ never decreases. As shown in Algorithm \ref{alg1}, $\alpha$ is increased before the costs of the request's embedding are checked against its benefit. If this step would be omitted, then any request $r_i$ with $b(r_i) \gg \alpha$ has a chance to be accepted, even though the computed embedding contains a node or edge whose capacity is already violated by the factor in the competitive ratio. Consequently, the competitive ratio is violated. Hence, $\alpha$ is needed in the cost function in order to adjust the costs to the upcoming scenario in case a higher benefit than the current $\alpha$ occurs. For this reason, the primal variables except $z(r_i)$ are adjusted after updating $\alpha$ and before deciding the acceptance of $r_i$. With the following sample calculations, the importance of the $\alpha$-update before the if-condition will be shown.

After motivating the cost functions, its co-domain is briefly analyzed by looking at border values. The arguments of the function are the variables that are not constant, namely $\alpha$ and $l(i,k,e)$. Consider the case in which the load on $e$ is zero and the value $x(e)$ is being calculated. Since there is no load yet on $e$, $l(i,k,e) = 0, \forall r_i \in R, m_i^k \in M(r_i)$ holds.
\begin{equation*}
	\sum_{r_i \in R} \sum_{m^k_i \in M(r_i)}f(i,k) \cdot l(i,k,e) = 0
\end{equation*}

Thus, the double sum in the exponent is equal to 0. Consequently, the whole exponent is 0. Therefore $e^0 -1 = 0$ and hence $x(e) = 0$. Another interesting border case is the case when the total load on $e$ equals its capacity. Formally:

\begin{equation*}
	\sum_{r_i \in R} \sum_{m^k_i \in M(r_i)}f(i,k) \cdot l(i,k,e) = \mathtt{cap}(e)
\end{equation*}
So, by substituting the double sum with $\mathtt{cap}(e)$, the following term results:
\begin{alignat*}{7}
x(e) = \frac{1}{|G_S|\cdot C_E} \cdot \left[\mathrm{\mathrm{exp}}\left(\frac{\mathrm{ln}(1+|G_S|\cdot C_E\cdot \alpha)}{\mathtt{cap}(e)} \cdot \mathtt{cap}(e)\right)-1\right]
\end{alignat*}
Next, the fraction ${\mathtt{cap}(e)}/{\mathtt{cap}(e)}$ is reduced to 1, so that there is only the logarithmic term left in the exponent. Thus, we can reduce the term again by using $e^{ln(x)} = x$.
\begin{alignat*}{7}
x(e) \quad= & \quad\frac{1}{|G_S|\cdot C_E} \cdot [(1+|G_S|\cdot C_E\cdot \alpha)-1]
\\ 
= &\quad \frac{|G_S|\cdot C_E\cdot \alpha}{|G_S|\cdot C_E}
\\
= &\quad \alpha
\end{alignat*}
As a result, if the total load on an edge $e$ or node $v$ is equal to its capacity, its $x(e)$ or $x(v)$ value is equal to $\alpha$. Since the cost function is monotonically increasing, $x(e) \geq \alpha$ holds for any total load on $e$ being greater than its capacity $\mathtt{cap}(e)$. Observe that if the total load is equal to the capacity, then this node or edge is fully satisfied and further allocations on this node or edge will oversubscribe it. For this case, consider the if-condition of the algorithm again. To accept a request $r_i$, the following inequality has to hold:
\begin{equation}
	\mathrm{cost}(m_i^O) < b(r_i) \label{eq60}
\end{equation}
If one of the nodes or edges is already fully satisfied and is included in $m_i^O$, the following inequality holds:
\begin{equation*}
\mathrm{cost}(m_i^O) \geq \alpha
\end{equation*}
Inserting this into the if-condition Equation \ref{eq60} results in:
\begin{equation}
\alpha \leq \mathrm{cost}(m_i^O) < b(r_i)
\end{equation}
This inequality is false because $\alpha \geq b(r_i)$ holds. Thus, the if-condition is not satisfied and consequently the request will be rejected by the algorithm. In conclusion, if a fully satisfied node or edge might occur in the optimal embedding of the VC-ACE algorithm, the algorithm will reject the request. This property is used in the next section to proof the competitive ratios of the algorithm.
\paragraph{Theorems and Proofs}\label{proofs}

Now, the before mentioned competitive ratios of the COVCE algorithm are proven. At first, consider the following theorem.
\begin{theorem}
	Let $\alpha \geq b(r_i)$ be an upper bound on the benefit of any request, $|G_S|$ the size of the substrate graph and $C_{max} = \mathrm{max}\{C_E,C_V\}$ be highest capacity of the union of nodes and edges. Then the proposed algorithm produces a primal feasible solution and is $(4 \cdot \mathrm{ln}(1+|G_S|\cdot C_{max}\cdot \alpha) + 1, 2)$-competitive. The first argument of the tuple describes the competitive ratio of the objective function and the second argument describes the competitive ratio of the violation of resource constraints. This ratio is 2, meaning that each dual constraint is violated by at most factor 2.
\end{theorem}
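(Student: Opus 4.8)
The plan is to combine the weak duality theorem with an amortized comparison of the primal and dual objectives, just as in the proof of Theorem~\ref{theorem} for the routing problem, but now keeping track of the three new ingredients of the COVCE algorithm: the admission rule $\mathrm{cost}(m_i^O)<b(r_i)$, the slack variables $z(r_i)$, and the global recomputation triggered by an update of $\alpha$. First I would establish \emph{primal feasibility}. For an accepted request the constraint in Equation~\ref{eq5} belonging to $m_i^O$ is made tight by $z(r_i)=b(r_i)-\mathrm{cost}(m_i^O)\ge 0$, and for any other embedding $m_i^k$ the minimality $\mathrm{cost}(m_i^k)\ge\mathrm{cost}(m_i^O)$ together with the same $z(r_i)$ gives feasibility; for a rejected request $z(r_i)=0$ while $\mathrm{cost}(m_i^k)\ge b(r_i)$ for every $k$. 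Since every $x(e),x(v)$ is monotonically non-decreasing (shown in the paragraph on cost functions), a constraint once satisfied stays satisfied, so the terminal primal solution is feasible and weak duality yields $\mathtt{profit}_{\mathrm{MAX,LP}}\le \mathtt{cost}(s_{\mathrm P})$, the value of our primal.

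Second, I would prove the factor $2$ on the capacity constraints. The key observation, already prepared in the analysis of the cost functions, is that if the total load on a node or edge $y$ reaches $\mathtt{cap}(y)$ then $x(y)\ge\alpha\ge b(r_i)$; hence any embedding using such a $y$ has $\mathrm{cost}(m_i^O)\ge\alpha\ge b(r_i)$ and fails the admission test, so it is rejected. Consequently, immediately before any acceptance that raises the load of $y$, that load is strictly below $\mathtt{cap}(y)$, and a single valid embedding adds at most $\mathtt{cap}(y)$, so the final load obeys $\sum_{r_i\in R}\sum_{m_i^k\in M(r_i)}f(i,k)\,l(i,k,y)<2\,\mathtt{cap}(y)$. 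This is exactly a violation of the constraint in Equation~\ref{eq2} by at most the factor $2$, while the constraints of Equation~\ref{eq1} hold with equality because exactly one embedding per request is selected.

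Third, and this is the heart of the argument, I would bound the primal objective by $(4\gamma+1)$ times the online profit, where $\gamma=\mathrm{ln}(1+|G_S|\cdot C_{max}\cdot\alpha)$ and the online profit equals the dual value $\sum f(i,k)b(r_i)$. Writing the primal as $\sum_e\mathtt{cap}(e)x(e)+\sum_v\mathtt{cap}(v)x(v)+\sum_i z(r_i)$, the last sum is at most $\sum_{\text{accepted}}b(r_i)$, which accounts for the additive $+1$. For the cost-variable part I would use the differential identity of the exponential rule, namely $\mathtt{cap}(e)\,\frac{dx(e)}{dL}=\gamma_e\big(x(e)+\frac{1}{|G_S|C_E}\big)$ with $\gamma_e=\mathrm{ln}(1+|G_S|C_E\alpha)\le\gamma$, integrated over the load increments produced by the accepted requests. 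Telescoping over the whole sequence turns $\sum_e\mathtt{cap}(e)x(e)$ into $\sum_e\gamma_e\int_0^{L_e}x(e)\,dL+\sum_e\gamma_e L_e/(|G_S|C_E)$; the second part is a constant multiple of $\gamma$ because $L_e<2\,\mathtt{cap}(e)$ and $\sum_e\mathtt{cap}(e)\le|E_S|C_E$, and the first part is controlled by the admission condition $\mathrm{cost}(m_i^O)<b(r_i)$. Combining the edge and node contributions with the convexity estimate is where the constant $4$ appears.

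The step I expect to be the main obstacle is exactly this last bound on $\sum_e\mathtt{cap}(e)x(e)$: a single accepted request can raise the load of an element by a full capacity, so the naive per-request increase of the exponential cost function is far larger than $b(r_i)$, and the obvious estimate $\mathrm{cost}_{\text{after}}\le\mathrm{cost}_{\text{before}}+\Delta P$ closes a loop $\Delta P\le\gamma(b(r_i)+\Delta P)$ that is vacuous for $\gamma>1$. The resolution I would pursue is to charge against the \emph{left} endpoints of the increments, where the admission test guarantees $\sum_{e}l(i,O,e)\,x_{\text{before}}(e)<b(r_i)$, and to absorb the convexity gap into the factor $4$ using that no element is ever loaded beyond $2\,\mathtt{cap}$. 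Finally I would note that the repeated global recomputation after an $\alpha$-update needs no separate accounting: each $x(e)$ is by construction the same deterministic function of the current $\alpha$ and the current load, so the terminal primal cost depends only on the final $\alpha$ and the final loads and is reached monotonically, and the $z(r_i)$ set with an earlier, smaller $\alpha$ only make $\sum_i z(r_i)\le\sum_i b(r_i)$ easier.
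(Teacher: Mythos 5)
Your treatment of primal feasibility and of the factor-$2$ capacity bound follows the paper's proof essentially step for step; if anything you are more complete, since you also verify the constraints of Equation \ref{eq5} for the non-optimal embeddings $m_i^k$ (via minimality of $m_i^O$) and for rejected requests, which the paper leaves implicit. The factor-$2$ argument (load $=\mathtt{cap}(y)$ forces $x(y)\geq\alpha\geq b(r_i)$, so the admission test fails; hence the pre-acceptance load is strictly below capacity, and one valid embedding adds at most $\mathtt{cap}(y)$) is exactly the paper's.

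For the objective-function ratio, your plan is also the paper's plan, only written as an integral instead of a sum of per-request changes: the paper computes the derivative $\delta P_E/\delta f(i,k)$ in Equation \ref{eq16}, rewrites it as $\mathrm{ln}(1+|G_S|C_E\alpha)\sum_e l(i,k,e)\bigl(x(e)+\tfrac{1}{|G_S|C_E}\bigr)$ in Equation \ref{eq18}, bounds $\sum_e l(i,k,e)\,x(e)<b(r_i)$ by the admission test, bounds $\sum_e l(i,k,e)\leq|\mathcal{M}_{E,i}^k|\cdot C_E$ by its load lemma, adds the node and $z$ contributions, and concludes a per-request ratio of $4\,\mathrm{ln}(1+|G_S|C_{max}\alpha)+1$ using $b(r_i)\geq 1$. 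That is precisely your ``charge every increment at left-endpoint prices.'' Where you part ways with the paper is that you explicitly flag that this charging is not justified for discrete all-or-nothing updates, and here your proposed repair fails: over a single increment of size $\mathtt{cap}(e)$, the quantity $x(e)+\tfrac{1}{|G_S|C_E}$ grows by the factor $\mathrm{exp}\bigl(\mathrm{ln}(1+|G_S|C_E\alpha)\bigr)=1+|G_S|C_E\alpha$, which is exponential in $\gamma$ and cannot be absorbed into the constant $4$, with or without the $L_e<2\,\mathtt{cap}(e)$ bound. Concretely, a first request with $b(r_i)=\alpha=1$ whose embedding loads one edge to its full capacity $C$ is accepted (pre-acceptance cost $0<1$), after which $x(e)=\alpha=1$, so the primal increases by at least $\mathtt{cap}(e)\cdot\Delta x(e)=C$ while the dual increases by $1$; the per-request ratio is $\Omega(C)$, not $O(\mathrm{ln}(1+|G_S|C\alpha))$, so the inequality $\delta P/\delta f(i,k)\leq(4\gamma+1)\,\delta D/\delta f(i,k)$ that both you and the paper rely on is false for discrete jumps. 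You should be aware, however, that the paper never confronts this either: its proof is exactly the left-endpoint derivative evaluation, presented without acknowledging that the admission condition only holds before the update. So the obstacle you identified is genuine, it is equally an obstacle for the published proof, and closing it requires something neither you nor the paper supplies --- e.g.\ an assumption that single-request loads are small relative to capacities (as in the routing example of Theorem \ref{theorem}, where increments are unit-size), or an accounting that compares the primal cost to the optimum directly rather than request-by-request to the online dual profit.
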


\paragraph{Proof of primal feasible solution:}
Upon the arrival of a new request $r_i$, if $b(r_i) > \alpha$, then $\alpha$ is set to $b(r_i)$ and all $x(e)$ and $x(v)$ variables are updated with this new $\alpha$. If a valid mapping $m^k_i$ exists and $m^k_i$ is accepted, the algorithm updates the corresponding $x(e)$ and $x(v)$ primal variables with the additional load on the corresponding nodes and edges. In order to achieve feasibility, namely to satisfy the primal constraint in Equation \ref{eq5}, $z(r_i)$ is set to the difference which is missing to satisfy this constraint. Furthermore, $z(r_i)$ is never less than 0 because if the if-condition of the algorithm is not satisfied by the cost optimal embedding, then the request will be rejected. The primal variables are never decreased because the cost functions of $x(e)$ and $x(v)$ are monotonically increasing for non-negative arguments and $z(r_i)$ is non-negative too. The RHS of a primal constraint $i$ is the constant benefit $b(r_i)$. Furthermore, when accepting a request, the constraint $i$ is always fulfilled since $z(r_i)$ adds the missing difference to satisfy $i$ and is only set in the $i$-th round and is therefore constant. Due to the monotonicity of the LHS of the primal constraints, they can never be violated. Concluding, the algorithm produces a primal feasible solution. $\square$

\paragraph{Proof of 2-competitiveness on the capacity violation:}

To recapitulate, when the load on edge $e \in E_S$ equals its capacity, then $x(e) = \alpha$ holds. Upon the arrival of a new request which is accepted by the algorithm, we can infer for the cost of a specific edge $e \in E_S$:
\begin{alignat}{7}\label{eq10}
\resizebox{.9 \textwidth}{!} 
{$
	x(e) = \frac{1}{|G_S|\cdot C_E} \cdot \left[\mathrm{\mathrm{\mathrm{exp}}}\left(\frac{\mathrm{ln}(1+|G_S|\cdot C_E\cdot \alpha)}{\mathtt{cap}(e)} \cdot (\sum_{r_i \in R} \sum_{m^k_i \in M(r_i)}f(i,k) \cdot l(i,k,e))\right)-1\right] < b(r_i)
$}
\end{alignat}
This inequality is true because the if-condition line 7 of Algorithm \ref{alg1} must have evaluated to true in order to accept $r_i$. In addition, if $m^k_i$ is embedded and $e \in \mathcal{M}^k_{E,i}$, the load on an edge $e \in E_S$ is lower bounded by 1 ($l(i,k,e) \geq 1$) because $\mathcal{B}_i \geq 1$. The parameter $f(i,k)$ is 1 for the mapping $m^k_i$ that has been chosen by the algorithm out of the set of valid mapping $M(r_i)$ and else 0. Hence, the product  $f(i,k) \cdot l(i,k,e)$ is only greater than 0 for the chosen mapping. Since the load on edge $e$ increases per additional embedding it is included in, the cost $x(e)$ also increases because the exponent grows. From Equation \ref{eq10} it follows that 
\allowdisplaybreaks
\begin{alignat}{7}\label{eq11}
 &(1+ |G_S|\cdot C_E\cdot \alpha)^{\frac{\sum_{r_i \in R} \sum_{m^k_i \in M(r_i)}f(i,k) \cdot l(i,k,e)}{\mathtt{cap}(e)}} &<& 1+|G_S|\cdot C_E \cdot b(r_i) 
\\\label{eq12}
&&<& 1 + |G_S|\cdot C_E \cdot \alpha
\\\label{eq13}
\Leftrightarrow	&\quad\quad\quad\frac{\sum_{r_i \in R} \sum_{m^k_i \in M(r_i)}f(i,k)\cdot l(i,k,e)}{\mathtt{cap}(e)} &<&  1
\\\label{eq14}
\Leftrightarrow	&\quad\quad\quad\quad\quad\sum_{r_i \in R} \sum_{m^k_i \in M(r_i)}f(i,k) \cdot l(i,k,e)&< & \mathtt{cap}(e)
\end{alignat}
From Equation \ref{eq11} to \ref{eq12}, the definition of $b(r_i) \leq \alpha$ is used in order to upper bound the benefit and to have the same basis for a logarithmic operation which is done in Equation \ref{eq13}.
The Equation \ref{eq14} is the same as the definition of the dual linear program constraints in Equation \ref{eq2} for the edges which shows that the capacity constraints will not be violated as long as $x(e) \leq \alpha$ which is guaranteed by the constraint $x(e) \leq b(r_i)$ in Equation \ref{eq10}. If $x(e) > b(r_i)$ then the algorithm will not put any more load on this edge. The proof for the node capacity constraints is analog to this one. So, upon the acceptance of $r_i$, which means that the if-condition in line 7 of Algorithm \ref{alg1} evaluated to true, it is ensured that the embedding does not oversubscribe the capacity of any node or edge.

After accepting $r_i$ and knowing that Equation \ref{eq14} holds, the maximum additional load that can be put on a node or edge $y \in E_S \cup V_S$, that is part of $m_i^O$, is its capacity $\mathtt{cap}(y)$ because the VC-ACE algorithm will not put more load on $y$ than its capacity. Hence, the new load after accepting $r_i$ is upper bounded by:
\begin{alignat}{7}
\resizebox{.9 \textwidth}{!} 
{$
	\sum_{r_j \in R \setminus\{r_i\}} \sum_{m^k_j \in M(r_j)}f(j,k) \cdot l(j,k,y) + \mathtt{cap}(y)< \mathtt{cap}(y) + \mathtt{cap}(y) = 2 \cdot \mathtt{cap}(y)
$}
\end{alignat}
Since, the load is now greater than the capacity on $y$, $y$ will not be included in any further embedding in upcoming requests because the costs for $y$ are now greater than $\alpha$, even when $\alpha$ is increased later on, as shown in the previous section. In conclusion, the maximum load on a node or edge is twice its capacity, resulting in a violation of at most factor 2. $\square$

After showing the competitive ratio for the dual constraints, the competitive ratio for the objective function is shown.
\paragraph{Upper bounding the Competitiveness of the Objective Function:}
\newtheorem{lemma}[theorem]{Lemma}
\begin{lemma}
	Consider the formula $\sum_{e \in \mathcal{M}_{E,i}^k} l(i,k,e)$ which is the sum of the loads on the edges $e \in \mathcal{M}_{E,i}^k$ when embedding mapping $m^k_i$ of request $r_i$. Furthermore, the VC-ACE algorithm will not put more load on an edge than its initial capacity. Thus, the load $l(i,k,e)$ can be upper bounded by the maximum capacity over all edges. Hence, the following inequality holds:
	$\sum_{e \in \mathcal{M}_{E,i}^k} l(i,k,e) \leq	|\mathcal{M}_{E,i}^k| \cdot C_E$.
\end{lemma}
Let $P$ and $D$ be the values of the objective functions of the primal and dual linear program, respectively. Initially $P=D=0$ since no request has been embedded yet.
The derivative with respect to $f(i,k)$ of the dual objective function's value $D$ is 
\begin{alignat}{7}
	\frac{\delta D}{\delta f(i,k)} = b(r_i)
\end{alignat} 
since the value of the new generated dual variable corresponding to the incoming request is set to 1 if the request is accepted. For computing the derivative of the primal objective function, let $P_E. P_V$ and $P_z$ be:
\begin{alignat*}{7}
	P_E &= \sum_{e \in E_S} \mathtt{cap}(e) \cdot x(e)\\
	P_V &= \sum_{v \in V_S} \mathtt{cap}(v) \cdot x(v)\\
	P_z &= \sum_{r_i \in R} z(r_i).\\
	\text{such that } 	P &= P_E + P_V + P_z.
\end{alignat*}
These variables are used to determine the derivative of the primal objective function in separated terms.
\begin{alignat}{7}\label{eq16}
	\frac{\delta P_E}{\delta f(i,k)} =&\quad \sum_{e \in \mathcal{M}_{E,i}^k} \frac{\mathtt{cap}(e)}{|G_S|\cdot C_E} \cdot \bigg[\frac{\mathrm{ln}(1+ |G_S|\cdot C_E\cdot \alpha)}{\mathtt{cap}(e)} \cdot l(i,k,e) 
	  \\ \nonumber &\cdot\mathrm{\mathrm{\mathrm{exp}}} \bigg(\frac{\mathrm{ln}(1+|G_S|\cdot C_E\cdot \alpha)}{\mathtt{cap}(e)} 
	\quad\sum_{r_i \in R} \sum_{m^k_i \in M(r_i)}f(i,k) \cdot l(i,k,e)\bigg)\bigg]
\\\label{eq17}
	=& \quad\mathrm{ln}(1+ |G_S|\cdot C_E\cdot \alpha) \sum_{e \in \mathcal{M}_{E,i}^k} l(i,k,e) \bigg(\frac{1}{|G_S|\cdot C_E}
	 \\ \nonumber &
	\resizebox{.78 \textwidth}{!} 
	{$
	\cdot\bigg[\mathrm{\mathrm{\mathrm{exp}}} \bigg(\frac{\mathrm{ln}(1+|G_S|\cdot C_E\cdot \alpha)}{\mathtt{cap}(e)} 
	\quad
	\sum_{r_i \in R} \sum_{m^k_i \in M(r_i)}f(i,k) \cdot l(i,k,e)\bigg)-1\bigg] + \frac{1}{|G_S|\cdot C_E}\bigg)
	$}
\\\label{eq18}
	=&\quad \mathrm{ln}(1+|G_S|\cdot C_E\cdot \alpha) \sum_{e \in \mathcal{M}_{E,i}^k} l(i,k,e) (x(e) + \frac{1}{|G_S|\cdot C_E})
\\\label{eq19}
	\leq&\quad \mathrm{ln}(1+|G_S|\cdot C_E\cdot \alpha)\cdot (b(r_i) + \frac{1}{|G_S|\cdot C_E} \cdot \sum_{e \in \mathcal{M}_{E,i}^k} l(i,k,e))
\\\label{eq21}
	\leq&\quad \mathrm{ln}(1+|G_S|\cdot C_E\cdot \alpha)\cdot (b(r_i)+\frac{1}{|G_S|\cdot C_E} \cdot |\mathcal{M}_{E,i}^k| \cdot C_E) 
\\\label{eq22}
	\leq&\quad \mathrm{ln}(1+|G_S|\cdot C_E\cdot \alpha)\cdot (b(r_i)+1) 
\end{alignat}
Equation \ref{eq16} shows the derivative of the primal cost caused by the changes of the edges which where affected by the embedding of request $r_i$. In Equation \ref{eq17}, the term is described in an equivalent way, such that its subterm 
\begin{equation*}
\frac{1}{|G_S|\cdot C_E}
\cdot \bigg[\mathrm{exp} \bigg(\frac{\mathrm{ln}(1+|G_S|\cdot C_E\cdot \alpha)}{\mathtt{cap}(e)} \sum_{r_i \in R} \sum_{m^k_i \in M(r_i)}f(i,k) \cdot l(i,k,e)\bigg)-1\bigg]
\end{equation*}
represents the function of $x(e)$ which is then replaced by $x(e)$ in Equation \ref{eq18}. In Equation \ref{eq19} the true-evaluated if-condition in line 7 of Algorithm \ref{alg1} is used to upper bound the term $\sum_{e \in \mathcal{M}_{E,i}^k} l(i,k,e)\cdot x(e) < b(r_i)$. In Equation \ref{eq21} the load $l(i,k,e)$ is upper bounded by the maximum capacity over all edges $e \in E_S$ due to Lemma 2. Therefore, the sum $\sum_{e \in \mathcal{M}_{E,i}^k} l(i,k,e)$ is upper bounded by the product $|\mathcal{M}_{E,i}^k| \cdot C_E$. Finally in Equation \ref{eq22} the fraction $\frac{|\mathcal{M}_{E,i}^k|\cdot C_E}{|G_S|\cdot C_E}$ is upper bounded by 1 because $G_S > |\mathcal{M}_{E,i}^k|$.

The derivative of the nodes is analogously 
\begin{equation}\label{eq23}
\frac{\delta P_V}{\delta f(i,k)} \leq	\mathrm{ln}(1+|G_S|\cdot C_V\cdot \alpha)\cdot (b(r_i)+1) 
\end{equation}
The derivative of the primal variable $z(r_i)$ equals the value that it is assigned to in line 11 of Algorithm \ref{alg1} since $z(r_i)$ is introduced and set in the iteration in which $r_i$ is handled and is initially 0. $z(r_i)$ is upper bounded by the benefit $b(r_i)$ since the value of $z(r_i)$ can be at most $b(r_i)$:
\begin{equation}\label{eq24}
	\frac{\delta P_z}{\delta f(i,k)} \leq b(r_i) - \mathrm{cost}(m_i^O)\leq b(r_i) 
\end{equation}
Summing up all the partial derivatives concludes in the change of the primal profit:
\begin{equation}\label{eq25}
	\frac{\delta P}{\delta f(i,k)} \leq \mathrm{ln}(1+|G_S|\cdot C_E\cdot \alpha)\cdot (b(r_i)+1) + \mathrm{ln}(1+|G_S|\cdot C_V\cdot \alpha)\cdot (b(r_i)+1) + b(r_i)
\end{equation}
Let $C_{max} = \mathrm{max}\{C_E,C_V\}$. Then the derivative of the primal profit is upper bounded by
\begin{equation}
 \frac{\delta P}{\delta f(i,k)} \leq 2 \cdot \mathrm{ln}(1+|G_S|\cdot C_{max}\cdot \alpha)\cdot (b(r_i)+1) + b(r_i)
\end{equation}
So, after accepting the incoming request, the change in the dual solution is $b(r_i)$ and in the primal solution at most $2 \cdot \mathrm{ln}(1+|G_S|\cdot C_{max}\cdot \alpha)\cdot (b(r_i)+1) + b(r_i)$. 

To recapitulate, the competitive ratio is the quotient of the profit of the optimal offline solution and the profit of the online solution in the worst case. The change of the latter is known as ${\delta D}/{\delta f(i,k)}$, but the optimal offline solution is unknown. Nevertheless, it is known that the optimal offline solution is feasible. Hence, the idea is to apply weak duality on the optimal offline solution and the primal feasible solution of the primal-dual algorithm, so that the optimal offline profit is upper bounded by the primal cost. Now, the quotient of the worst case primal cost and dual profit gives an upper bound on the competitive ratio of the primal-dual algorithm. Alternatively, the derivatives of the profit and the cost can be used to, as shown in the background Section \ref{background}. Formally:
\begin{alignat}{7}
	\frac{\delta D}{\delta f(i,k)} &\leq \frac{\delta P}{\delta f(i,k)}\\
	\Leftrightarrow  b(r_i) &\leq 2 \cdot \mathrm{ln}(1+|G_S|\cdot C_{max}\cdot \alpha)\cdot (b(r_i)+1) + b(r_i)\\
	\Leftrightarrow 1 &\leq   2 \cdot \mathrm{ln}(1+|G_S|\cdot C_{max}\cdot \alpha)\cdot (1+\frac{1}{b(r_i)}) + 1 \\
	&\leq  2 \cdot \mathrm{ln}(1+|G_S|\cdot C_{max}\cdot \alpha)\cdot (1+\frac{1}{1}) + 1 \\
	&=  4 \cdot \mathrm{ln}(1+|G_S|\cdot C_{max}\cdot \alpha) + 1=: \beta
\end{alignat}
In (49) to (50), $1/b(r_i)$ is upper bounded by 1. In (51) we define the factor $\beta$, i.e. the factor by which the cost and profit maximally differ. $\beta$ is an upper bound on the competitive ratio between the optimal offline profit and the online profit because the primal cost upper bounds the optimal offline profit due to weak duality. Thus, in the worst case, when the dual online profit increases by 1, the primal cost and therefore also the optimal offline profit increases by at most $\beta$. 
Hence, the competitive ratio on the objective function is $\beta$. In conclusion, the COVCE algorithm is $4 \cdot \mathrm{ln}(1+|G_S|\cdot C_{max}\cdot \alpha) + 1$-competitive on the objective function. $\square$

\paragraph{Remark 1.}

An interesting modification would be to initially set $\alpha$ as a read-only parameter. To achieve this, the user has to determine \emph{in advance} the highest possible benefit over the whole execution time. If this information is given, $\alpha$ can be hard-coded before execution and therefore is already the maximum benefit during the whole execution. This is beneficial for the provider since the algorithm already knows the highest possible benefit and will be therefore stricter with accepting requests. However, estimating the highest benefit can lead to a violation of the competitive ratio on the capacity constraints  if the estimation is wrong because then a request with a higher benefit than the estimated value might be embedded, even if this embedding involves edges/nodes with a capacity violation of 2.
\paragraph{Remark 2.} The COVCE algorithm is in general applicable to embedding problems in which cost-optimal embeddings are computed because it expects an embedding from an oracle procedure and then adjusts the cost for the nodes and edges. Thus, this oracle procedure is generic and can therefore include diverse embedding models. Relating to the VC model, the VC-ACE algorithm \cite{vcace} has been used as the oracle procedure. 

\subsubsection{COVCEload Algorithm} Next, we present a variant of the COVCE algorithm, the COVCEload algorithm. The only difference to the COVCE algorithm is that the VC-ACE algorithm, working as the oracle procedure in the COVCE algorithm, receives the residual capacities and not the initial capacities when being called from the COVCE algorithm. Due to this variation, no capacities are violated but the algorithm is not competitive anymore. Its advantages and disadvantages are further discussed in the evaluation Section \ref{evaluation}.

\subsubsection{The GVOP Algorithm}
After proving and discussing the competitive ratios of the COVCE algorithm, the second competitive algorithm of this thesis, designed with the help of \cite{gvop}, is presented. 
The \textbf{G}eneral all-or-nothing \textbf{V}Net Packing \textbf{O}nline \textbf{P}acking algorithm as described in \cite{gvop} is a generic algorithm for designing competitive online virtual network embedding argorithms. It is based on the primal-dual approach for online problems of \cite{survey}. Its main functionality is also to decide whether to accept the found minimal-cost embedding or not. The GVOP algorithm is applicable to the virtual cluster embedding problem because the VCE problem is a special case of the general VNet embedding problem. The main difference to the COVCE algorithm is the formula of the costs for the primal variables $x(e)$ and $x(v)$ which also induces the differences in the competitive ratios. 

More formally, the authors assume a capacitated substrate network $G_S = (V_S,E_S)$ with undirected edges $\{u,v\} \in E_S$ for $u,v \in V_S$ , where each edge $e$ has a capacity $\mathtt{cap}(e) \geq 1$ and a cost $\mathtt{cost}(e) \geq 0$. Furthermore, a sequence of requests $R = \langle r_1,r_2,r_3,...\rangle$ arrives in an online fashion and each $r_i$ shall be either accepted and embedded or rejected by the online embedding algorithm. A request consists of the virtual network (VNet) that shall be embedded and a benefit $b(r_i)$ that the provider receives for embedding the request. An embedding involves a mapping, placing the VNet's nodes onto the substrate nodes, and a set of allowed traffic patterns between these nodes, which is the analogon to the set of physical paths on which the virtual links of the VC model are mapped to. 

This model can be extended to suit the VCE problem. To this end, node capacities $\mathtt{cap}(v) \geq 1, v \in V_S$ and costs $\mathtt{cost}(v) \geq 0$ have to be introduced to the substrate network. Consequently, another set of dual constraints is created for the node capacities and accordingly another set of primal variables $x(v)$ which represent the node costs as in the COVCE algorithm. The nodes' cost are assigned with the corresponding $x(v)$ values that are calculated by the GVOP algorithm. These new constraints and variables are maintained independently of the edge constraints and variables. Furthermore, the edge cost formula is independent from the set of nodes and vise versa.   

Moreover, it also provides the \emph{all-or-nothing} characteristic and assumes that no allocation of a single request on an edge or node $y \in E_S \cup V_S$ oversubscribes its capacity. Though, it is still possible that the composition of multiple allocations of multiple requests may oversubscribe its capacity. The VC-ACE algorithm also does not oversubscribe the initial capacities when being called but multiple calls may oversubscribe them. Thus, the VC-ACE algorithm is applied as oracle procedure to compute a minimum-cost embedding where the costs are given by the $x(e)$ and $x(v)$ calculated by the GVOP algorithm.

The structure of the pseudo-code in Algorithm \ref{alg2} looks similar to the COVCE algorithm. The only addition, that is made to the original GVOP algorithm in \cite{gvop}, is the for-loop handling the primal variables of the nodes. The notation being used is the same as for the COVCE algorithm.
\begin{algorithm}[ht]
	\caption{GVOP Algorithm\label{alg2}}
	
	Initially all $x(e)$ and $x(v)$ are set 0\;
	Upon the arrival of a new request $r_i$, the corresponding primal constraint in Equation \ref{eq5} and the dual variable $f(i,k)$ are introduced:\\
	\If{there exists an embedding $m_i^O \in M(r_i)$ with cost($m_i^O) < b(r_i)$}{
	//\textit{\textbf{accept} $r_i$}\\
	\Indp $f(i,O) \leftarrow 1$\;
	$f(i,k) \gets 0$ for all other valid embeddings $m_i^k \in M(r_i)\setminus \{m_i^O\}$\;   	
	 $z(r_i) \leftarrow b(r_i)$ - cost($m_i^O)$	\;
	 \For{$e \in \mathcal{M}_{E_S,i}^O$}{
	 	$x(e) \leftarrow x(e) \cdot 2^{l(i,O,e) / \mathtt{cap}(e)} + \frac{1}{\sum_{e \in \mathcal{M}_{E,i}^O} l(i,O,e)} \cdot(2^{l(i,O,e) / \mathtt{cap}(e)} - 1)$
	 	
	 }
	 \For{$v \in \mathcal{M}_{V,i}^O$}{

	 $x(v) \leftarrow x(v) \cdot 2^{l(i,O,v) / \mathtt{cap}(v)} + \frac{1}{\sum_{v \in \mathcal{M}_{V,i}^O} l(i,O,v)} \cdot(2^{l(i,O,v) / \mathtt{cap}(v)} - 1)$
	 }
	}
	\Else{
		// \textit{\textbf{reject} $r_i$}\\
		$z(r_i) \gets 0$\;
		$f(i,k) \gets 0$ for all $k$\;   
		// \textit{all $x(e), e \in E_S$ and $x(v) , v \in V_S$ are unchanged}\\
		// \textit{$\rightarrow$ change in primal and dual objective functions is 0}
	}
\end{algorithm}

According to the work presented in \cite{gvop}, the GVOP algorithm is (2, $\mathrm{log}_2 (1+ 3\cdot (\mathrm{max}_{i,k}w(i,k)\cdot \mathrm{max}_{i}b_i))$)-competitive, where \\$w(i,k) = \sum_{e \in \mathcal{M}_{E_S,i}^k} l(i,k,e) + \sum_{e \in \mathcal{M}_{V_S,i}^k} l(i,k,v)$ for an valid embedding $k$ of request $i$. So it is competitive on the objective function up to the constant factor 2 and has a logarithmic factor in the violation of the resource constraints of the dual program. The logarithmic factor has a similar structure to the one of the competitive ratio on the objective function of the COVCE algorithm. Both involve the product of the maximum benefit and the maximum allocation on the substrate network. The proofs for the competitive ratios of the GVOP are shown in \cite{gvop}.

As a result, a logarithmic factor in the capacity violation for the GVOP algorithm is worse than a violation of at most 2 for the COVCE algorithm when no capacity multiplication is allowed.
Nevertheless, the GVOP algorithm has therefore a 2-competitiveness on the objective function but the COVCE algorithm's competitive ratio is logarithmic. The competitive ratio tuples of both algorithms are the contrary of each other. Therefore, it is interesting to find out the advantages and disadvantages of both algorithms in certain scenarios which, among other things, is investigated in the evaluation Section \ref{evaluation}.

\paragraph{Remark 3.} In the GVOP algorithm's paper \cite{gvop}, Corollary 2 says that the GVOP algorithm serves a dual feasible solution under a certain assumption when scaling down the capacities by the competitive ratio $\beta$ on the resource constraints. This assumption says that $\mathrm{min}_e \mathtt{cap}(e)/\beta \geq \mathrm{max}_{i,k} l(i,k,e)$. In words, in order to result in a dual feasible solution, the minimum substrate edge capacity divided by $\beta$ has to be greater or equal to the maximum potential load induced by any feasible embedding for any request on the respective resource. This means that no request is allowed to allocate more bandwidth than the minimum edge capacity in the substrate divided by $\beta$ on an edge. This assumption may not be applicable in realistic scenarios considering that the minimum capacity is defined to be greater or equal to 1 in this model.

In the following evaluation, multiple scenarios and configuration are presented in order to figure out where each algorithm has its advantages and disadvantages.

\section{Evaluation}\label{evaluation}
In this section, the presented algorithms are evaluated together with a fourth algorithm. The latter one is the VC-ACE algorithm itself. By including it, the objective is to find out whether accepting any valid embedding (as a non-competitive online algorithm does) is more profitable for the provider in certain scenarios than using competitive-online algorithm. So, besides running the GVOP, the COVCE and the COVCEload algorithm, a fourth algorithm, the Greedy VC-ACE algorithm is implemented which receives a sequence of requests in an online fashion as the other algorithms too. It applies the VC-ACE algorithm with respect to residual capacities when processing a request. 

The comparison between the four algorithms is done upon running them on different patterns of requests and interpreting their results whether one algorithm outperforms the others for a certain type of input. 
The metrics that are used to make a formal comparison are 
\begin{enumerate}
	\item the acceptance ratio, i.e. how many requests are accepted in a certain subsequence of requests,
	\item the capacity violation, i.e. the maximum load to capacity ratio over all substrate nodes and edges,
	\item the average capacity violation, i.e. the average load to capcity ratio over all substrate nodes and edges,
	\item the relative profit, i.e. the sum of benefits of accepted requests divided by the previously defined capacity violation,
	\item the average relative profit, i.e. the sum of benefits of accepted requests divided by the previously defined average capacity violation.
\end{enumerate}
The metrics are discussed in details later.
The main results are that COVCE algorithm has the best performance with respect to the average relative profit on all considered request sequences. The Greedy VC-ACE algorithm shows the best performance with respect to the relative profit when the benefit function is dependent on the size of the request/embedding. The COVCEload and COVCE algorithms exhibit the best performance when the benefit to embedding cost ratio is high, which is for example the case when the benefit function does not increase proportional to the size of the request/embedding. All in all, there is no algorithm that outperforms the others in all considered scenarios.
Before going into the analysis of the algorithms' performances, consider the technical details.
\subsection{Experimental Setup}
The experiments are run on the network testbed of the INET chair which includes servers having a Intel(R) Xeon(R) CPU L5420 running at 2.50GHz with 16GB RAM. The algorithms are implemented in Java 7 with multithreading to make use of the 8 cores of the server. 
\subsubsection{Implementation}
During the execution of any of the four algorithms, the VC-ACE algorithm is executed running the successive shortest path (SSP) algorithm for each node in the network being considered as the target node in order to find the best logical switch placement. More precisely, the SSP algorithm itself solves a minimum-cost flow problem in which negative costs also occur. Thus, the Bellman-Ford algorithm can be applied since the Dijkstra algorithm cannot deal with negative costs. However, the Bellman-Ford algorithm induces a noticeable higher running time if the network is large and the SSP algorithm runs several shortest path iterations so that the running times stack up. 

More precisely, the run time complexity of the Dijkstra algorithm is $O(|V_S|\mathrm{log}(V_S) + E_S)$ when using a Fibunacci-Heap \cite{sneyers2006dijkstra} and for the Bellman-Ford algorithm it is $O(|V_S|\cdot |E_S)$. Consider an example with 500 nodes and 1000 edges in the substrate graph $G_S$. This results in a value of approximately 4107 for the Dijkstra run time complexity and 500000 for the Bellman-Ford algorithm. In addition, the shortest path algorithm is executed several times per SSP iteration, dependent on the demand of the request represented by the $\mathcal{N}_i$  requested VMs. In this example the run time complexity of the Bellman-Ford algorithm can be up to 100 times higher than the one of the Dijkstra algorithm. In addition to that, the shortest path algorithm is executed at most $\mathcal{N}_i$ times because in each SSP iteration a demand of at least 1 is served if a path exists. On top of that, the SSP algorithm is run for each node (server and switches) in the network and the VC-ACE algorithm is run for each of the four competing algorithms.

So, the shortest path algorithm becomes the bottle-neck of the VC-ACE algorithm and therefore the Dijkstra algorithm needs to be applied to reduce the running time. To this end, the \emph{reduced cost} model with \emph{node potentials} \cite{networkflows} is applied which does not change the shortest path tree but avoids negative costs. Thus, the Dijkstra algorithm is again applicable. 

Another improvement of the running time is the \emph{early-termination} of the Dijkstra algorithm. Inside the SSP iteration, it is only necessary to find a shortest path from the source to the target node, so that the Dijkstra algorithm can be interrupted as soon as the target node has been removed from the data structure, that maintains seen nodes, because at this state the shortest distance to the source node is determined. This early-termination does not violate the shortest path tree and therefore the node potentials and reduced costs of the edges maintain valid as shown in \cite{networkflows}. Furthermore, \cite{networkflows} show that only the potentials of the nodes, that have been removed from the data structure yet, have to be updated. As a result, with the early termination and potential updating, the running time of a SSP iteration reduces a lot.

As already mentioned, per VC-ACE algorithm execution, there are multiple SSP iterations running, which are independent of each other, since the best logical switch placement shall be found. Because of the independence of the SSP iterations, multithreading is applicable in the execution of the VC-ACE algorithm, such that 8 threads are run in parallel to compute all SSP results for one VC-ACE execution.

\subsubsection{Substrate Network}
The following settings are applied for the substrate network. Each edge in the substrate graph has a bandwidth capacity of 20. Furthermore, each node has a compute units capacity of 20, meaning it can serve up to 20 VMs dependent on $\mathcal{C}_i$. Each request's demands are randomly picked out of certain intervals. The amount of requested VMs are chosen between \{3,4,5,...,14\}, the amount of requested bandwidth per edge is randomly chosen between 20-60\% of an edge's capacity (which is 20 for all edges) and is rounded to an integer value. The amount of compute units are randomly chosen between 20-100\% of a node's capacity (which is also 20 for all nodes) and is also rounded to an integer value. The interval for the requested bandwidth is smaller since an embedding needs at least two edges to connect two nodes if $\mathcal{N}_i>$ 1 and $\mathcal{C}_i >$ 10 \footnote{the VC-ACE algorithm does not allocate 2 VMs onto 1 PM if $\mathcal{C}_i > 10$ because then the capacity of this node would be violated and the VC-ACE algorithm does not violate the initial capacities}. Due to this fact, the bandwidth interval is smaller than the compute units interval in order to balance the utilization of nodes and edges, so that more embeddings can be accepted when the substrate network is close to be fully reserved. In the following, the \emph{request size} as term is used to describe the sum of VMs, bandwidth and compute units of a certain request.

After discussing the problem of how to implement the algorithms, the ideas of how to experiment with them are presented. The four mentioned algorithms are being tested considering different \emph{configurations}.
\subsubsection{Configuration}
A \emph{configuration} includes four parameters where each of them has an influence on the results. 

\paragraph{Network Topology:}
The first parameter is the \emph{network topology} (NT). It describes which network topology shall be chosen for the experiment. The fat-tree \cite{fattree} and the MDCube \cite{mdcube} are the network topologies being applied because they are widely considered data center topologies and also applied for evaluations as in \cite{vcace}, \cite{vc}, \cite{timedvc}. 

Dependent on the structure of the topology, the computed embedding for a certain request $r_i$ might consume a different amount of resources. If topology $A$ is built in a way that the embeddings are usually consuming more resources than in topology $B$, then $B$ might be able to serve more requests overall. Additionally, it is interesting to find out whether there is a correlation between a certain (competitive) online embedding algorithm and the network topology.
\paragraph{Topology Size:}
Furthermore, the second parameter is the \emph{topology size}. The size of the graph in general has an influence on the running time because, among other things, the algorithms have to iterate over all nodes in the VC-ACE algorithm. Another point is that the size of the graph also determines how many requests can be accepted overall. So, to test with  ``big'' virtual cluster requests, a  ``big'' graph is also needed because testing on a  ``small'' graph with  ``big'' requests may cause a too fast progression in the algorithm's behavior, such that some steps are skipped and information is lost. 
The sizes of the topologies are chosen analogously to the ones presented in \cite{vcace}. Therefore, a fat-tree(12), meaning 12 ports, is used and a MDCube(1,12,1,4) which means it is one-dimensional and consists of 4 BCubes \cite{bcube} with $n$=12 and $k$=1. The topologies respectively have 432 and 576 servers, so that the experiments are evaluated on similar sized graphs. According to these sizes, the range of the parameters of the requests have been chosen as shown in the latter section.
\paragraph{Benefit Pattern:}
The third parameter is the most interesting one, the \emph{benefit pattern} (BP). The BP is a function that assigns a certain benefit to each request. Different BP functions are tested in order to find out whether there is a correlation between the BP and the behavior of the algorithms. As observed later, this parameter has an important influence on the behavior of the COVCE algorithm and is therefore deeply discussed later on.

Next, the several BPs are explained. At first, there is the $\mathtt{random}$ BP which basically means that the benefit is a randomly chosen integer in the interval $b_i \in [1,1000]$ for all $r_i \in R$. This BP shall depict the scenario in which there is no correlation between the benefit and the size of the request. 

In contrast to this, the $\mathtt{reqsize}$ BP captures the scenario where the benefit of serving the request is proportional to the requested resources. This BP represents a more economical scenario in which the provider earns more money for leasing more resources. In other words, the tenant's cost which is the provider's benefit, increases proportionally to her demands. For this BP the benefit function is chosen to be the following, motivated by \cite{reqsize}. Consider the request $r_i = (\mathcal{N}_i,\mathcal{B}_i,\mathcal{C}_i, b_i)$ then $b_i = \mathcal{N}_i \cdot (\mathcal{B}_i + \mathcal{C}_i)$ is the benefit that increases proportionally to the amount of requested bandwidth, compute units and VMs. Referring to \cite{reqsize}, it is assumed that the cost per compute and bandwidth unit is one. 

However, this BP assumes that the embedding costs for nodes and edges are equal when $\mathcal{B}_i = \mathcal{C}_i$, although at least two additional edges are needed to embed one additional node on the fat-tree and the MDCube. Therefore, the $\mathtt{vcesize}$ BP is introduced respecting the amount of allocated physical links and nodes in the benefit function. In other words, while the $\mathtt{reqsize}$ BP considers the size of the request, the $\mathtt{vcesize}$ BP considers the size of the embedding. Formally:
\begin{align}\label{eq59}
	b(r_i) = \frac{|E_S|\cdot C_E}{|V_S| \cdot C_V}\cdot \bigg(\frac{\mathcal{N}_i \cdot \mathcal{C}_i}{|V_S|\cdot C_V}\bigg)^2 + \bigg(\frac{\mathrm{empirical\_cost}_E (\mathcal{N}_i, \mathcal{B}_i,\mathcal{C}_i)}{|E_S|\cdot C_E}\bigg)^2.
\end{align}
The left summand in Equation \ref{eq59} describes the benefit contribution due to node allocations and the right one for edges. More precisely, in the left summand, $q = (|E_S|\cdot C_E)/(|V_S| \cdot C_V)$ is the ratio between the overall edge capacity $|E_S|\cdot C_E$ and the overall node capacity $|V_S| \cdot C_V$ in the substrate network. So, there are $q$-times more edge capacities than node capacities. The factor $(\mathcal{N}_i \cdot \mathcal{C}_i)/(|V_S|\cdot C_V)$ describes the ratio between the demanded to the overall node capacities. This fraction is squared in order to induce superlinear increasing benefits for bigger requests as \cite{fattree} show that the relation between the estimated cost for the provider for a maximum possible number of hosts in a substrate network increases superlinear. The factor $q$ is multiplied to the squared factor to scale the node benefit up/down when there are more/less edges than nodes, so that node and edge benefits contribute similarly to $b(r_i)$. The right summand in Equation \ref{eq59} includes the function $\phi = \mathrm{empirical\_cost}_E (\mathcal{N}_i, \mathcal{B}_i,\mathcal{C}_i)$ which maps this triplet onto a value $\phi$ that tries to capture the embedding costs for this VC.
To generate empirical values for $\phi$, 100 experiments, consisting of 6400 requests respectively, were executed using the Greedy VC-ACE algorithm on a 12-fat-tree in order to receive multiple values for $\phi$ for each combination of $(\mathcal{N}_i, \mathcal{B}_i,\mathcal{C}_i)$. $\phi$ is then assigned the amount of allocated edges for serving $VC_i = (\mathcal{N}_i, \mathcal{B}_i,\mathcal{C}_i)$. 
Hence, for each of the 640,000 requests the result is logged, while invalid embeddings were not considered. At different states of the substrate network, a virtual cluster embedding with the same virtual cluster consumes a different amount of edges since longer distances occur between the nodes when the network is close to be fully satisfied. Thus, multiple $\phi$ values are measured for the same VC which are therefore averaged. In Equation \ref{eq59}, $\phi$ is then divided by the overall edge capacity to express how much percentage of the overall edge capacity $r_i$ consumes. This ratio is again squared in order to make bigger requests more expensive in a superlinear fashion.

Another interesting scenario is the case when there is a fluctuation of the benefit over the time. Consider a scenario in which the costs for an allocation are at day time more expensive than at night time, since the amount of arriving requests might be higher at day time. So, at day time the benefit for the provider is higher than at night time. This behavior is captured by the $\mathtt{wave}$ BP which is represented by the sinus function $f(x) = (300 \cdot sin(0.1\cdot x)) + 400$ where $x$ is the index of the current request. The amplitude 300 is smaller than the range of the $\mathtt{random}$ BP in order to find out whether this lower amplitude already shows differences in the behavior of the four algorithms. The frequency at 0.1 is chosen in order to increase the amount of periods over the request sequence.  Thus, the wave peaks with a benefit value of 700 stand for day times on which the request ratio is the highest and the wave troughs with a benefit value of 100 depict night times where the request ratio is the lowest. 

Finally, the $\mathtt{peak}$ BP is a scenario to show an extreme case for benefit fluctuations, i.e. there are single peaks of a very high benefit in between a sequence of very low benefits. For example, such a sequence of benefits could look like this: $\langle1000,1,1,...,1,1,1,1000,1,1,1,1,...,1,1,1,1000,1,1,1,...\rangle$. For the $\mathtt{peak}$ experiments, every 100th request, a peak is included. There are three different $\mathtt{peak}$ scenarios with the peak values 1000, 100 and 10 in order to find out in which of these three scenarios noticeable changes of the algorithms' behavior show up. The difference to the $\mathtt{wave}$ scenario is that a wave has frequent wave peaks followed by frequent wave troughs whereas there is a single peak every 100th request in the $\mathtt{peak}$ scenario. 

\paragraph{Request Sequence Length:}
The fourth parameter is the \emph{request sequence length} (RSL) which describes the length of the sequence of requests which is the input to the tested algorithm. This parameter is determined dependent on the progress of the \emph{acceptance ratio} over the sequence of requests. The acceptance ratio describes how many requests are accepted over a certain interval of requests. More precisely, the sequence of requests is divided into windows of 100 requests and in each of these windows the amount of accepted requests is counted. Thus, the acceptance ratio for a certain window $w$ is the ratio $\frac{\text{accepted requests in } w}{100}$.	

The influence of the RSL on the results shows up as soon as the acceptance ratio of the competitive algorithms decreases since as observed later on, the COVCE algorithm will stop accepting any requests much earlier than the GVOP algorithm. In order to observe this, a certain RSL has to be maintained to not miss important events. In Figure \ref{fig37} the acceptance ratios of two fat-tree and two MDCube experiments are shown with four of the seven benefit patterns. 
\begin{figure}	
	\includegraphics[width=0.5\textwidth]{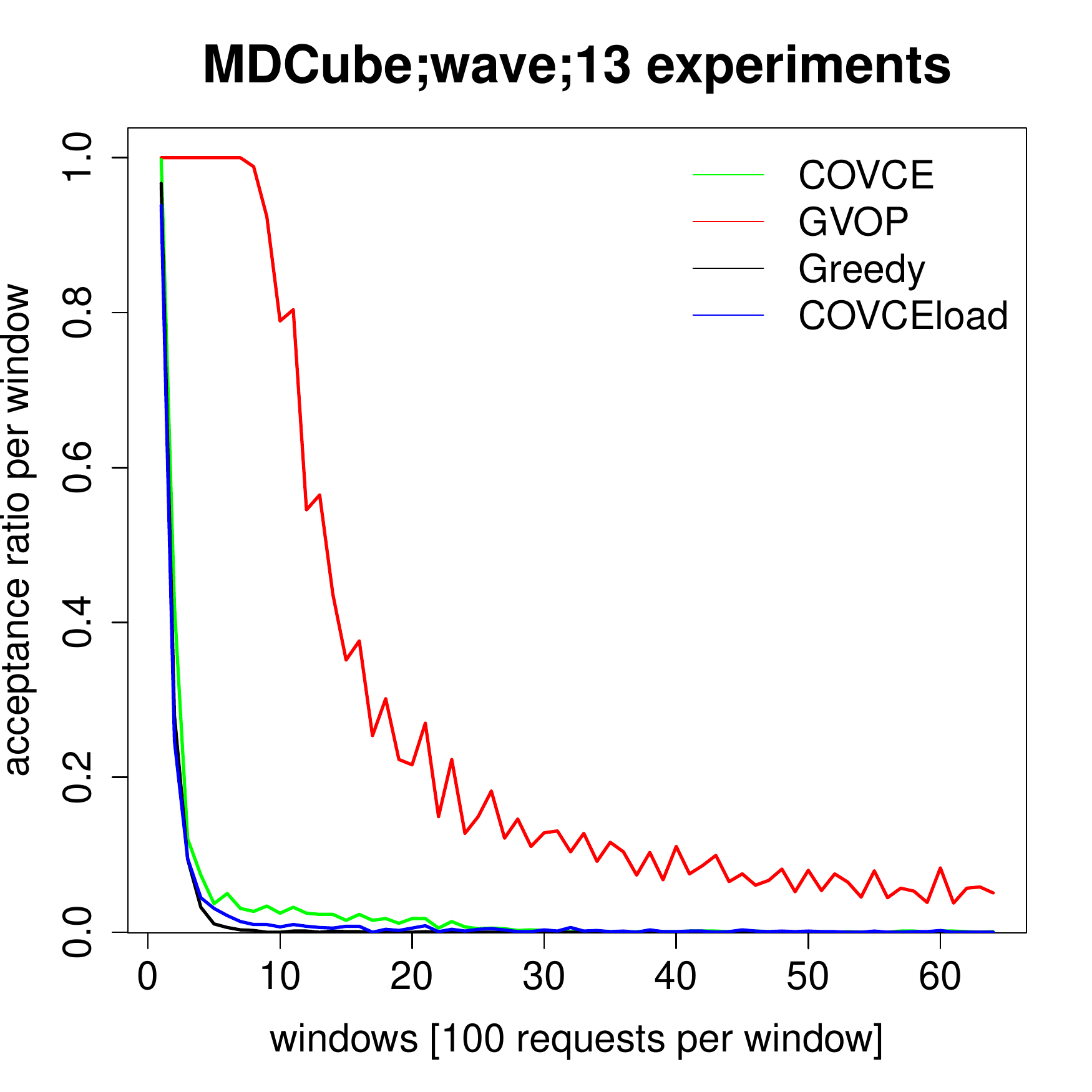}
	\includegraphics[width=0.5\textwidth]{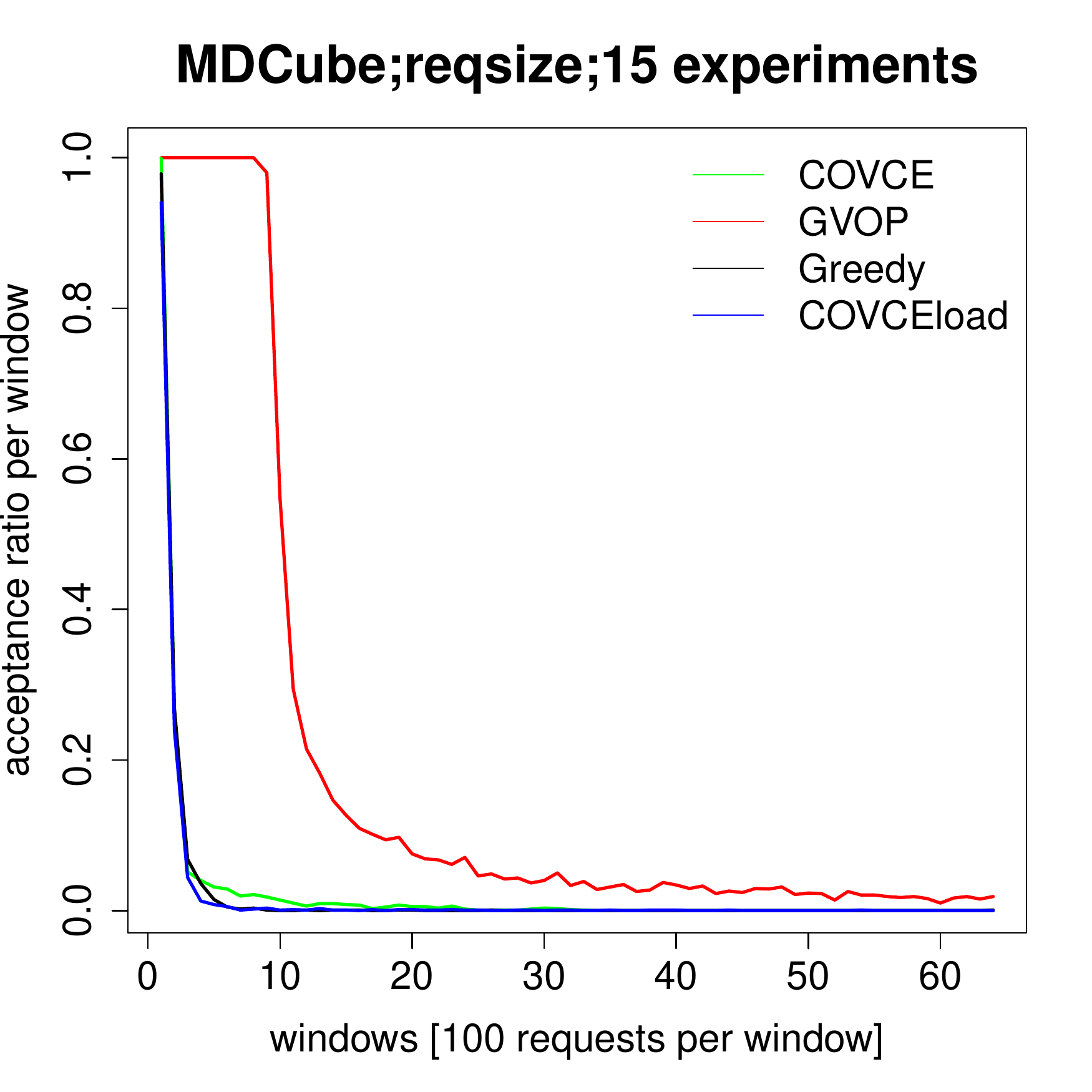}
	\includegraphics[width=0.5\textwidth]{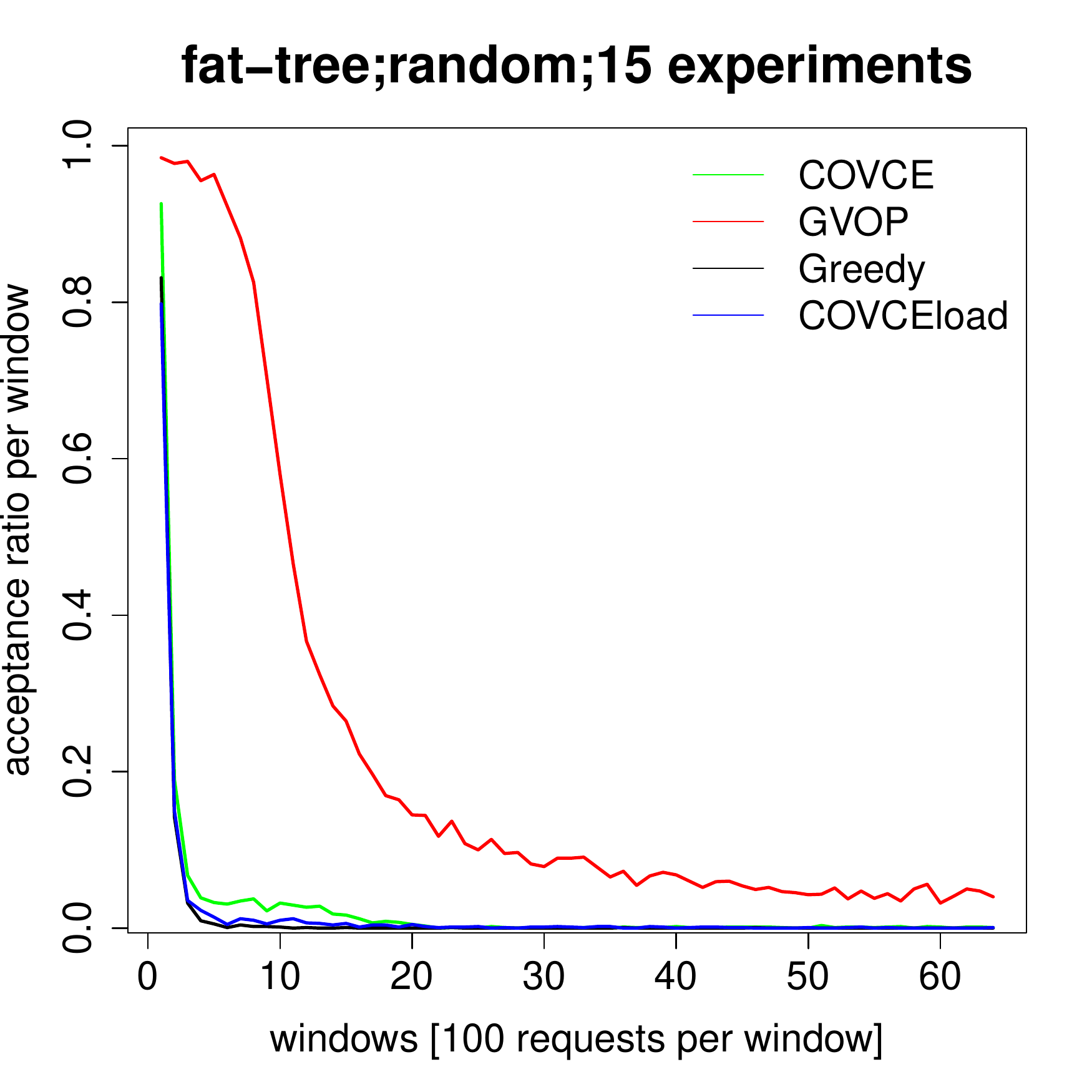}
	\includegraphics[width=0.5\textwidth]{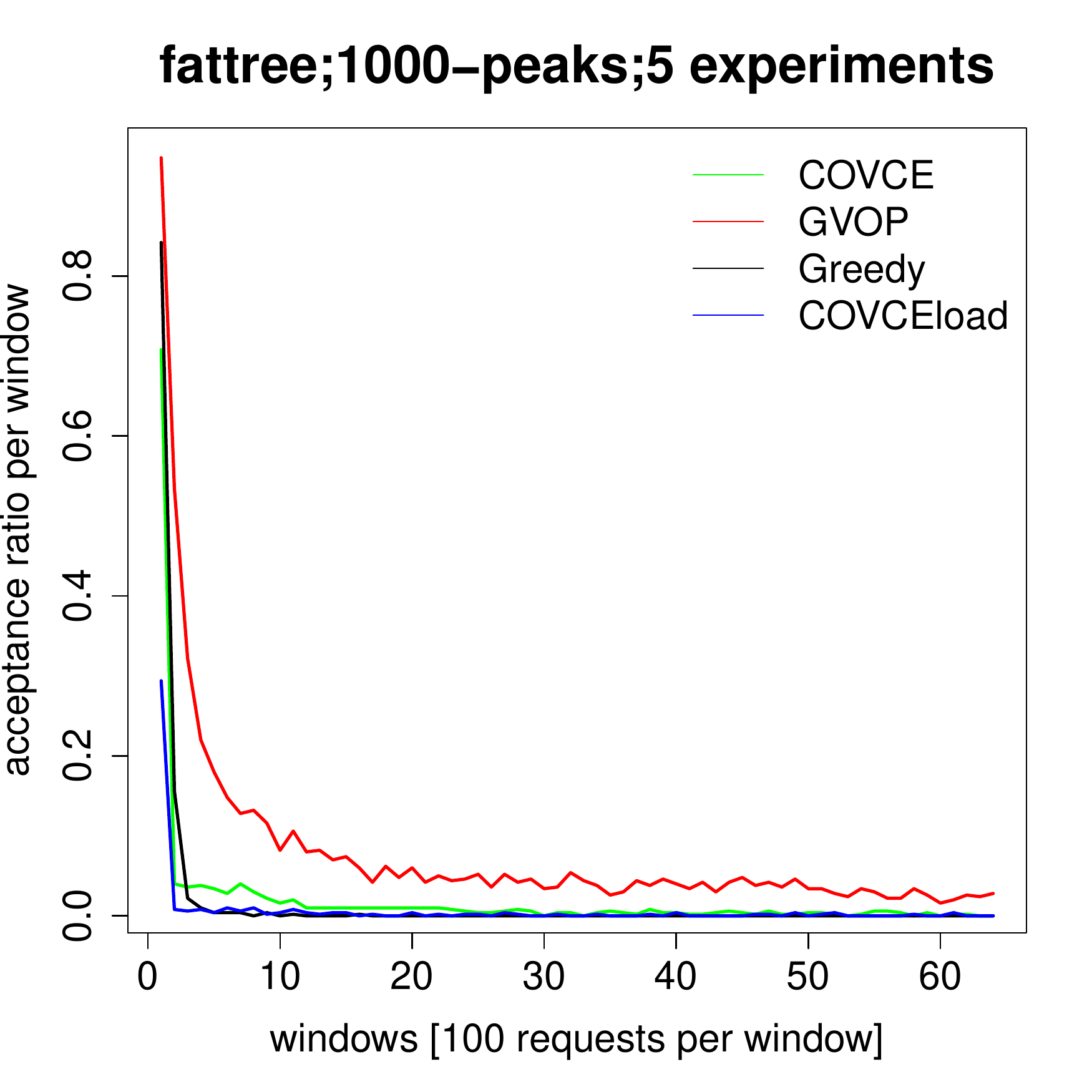}
	\caption{Illustrated are the acceptance ratios for the experiments on MDCube (top) the 12-port fat-tree (bottom) for the 4 studied algorithms and across four different benefit patterns: $\mathtt{wave}$ (top, left), $\mathtt{reqsize}$ (top, right), $\mathtt{random}$ (bottom, left), 1000-$\mathtt{peaks}$ (bottom, right).
}
	\label{fig37}
\end{figure} 
In these sample graphs one can see that the Greedy VC-ACE algorithm stops accepting requests after request 1000 because the algorithm has accepted enough requests without rejecting any that provided valid embeddings. In contrast to this, the competitive algorithms reject requests and additionally violate the capacity constraints, so that the acceptance ratio drops later. The 2-competitiveness of the COVCE algorithm shows up with the similar progress of its acceptance ratio in comparison to the Greedy VC-ACE algorithm. It stops accepting requests after approximately request 2000. Further accepted requests are the ones with a small request size, such that they still fit in the substrate network. Similarly to the COVCE algorithm, the COVCEload algorithm shows small fluctuations even after request 3000 in the $\mathtt{peak}$ scenario because of its rejection functionality. The biggest change in the acceptance ratio of the GVOP algorithm occurs during the first 2000 requests. There are still visible fluctuations after 6400 requests but in all four benefit patterns the acceptance ratio drops below 5\%. In addition, the red curves do not show significant amplitudes after the 4000-th request, such that it is expected that the acceptance ratio converges to zero as for the other three algorithms and that no significant increase will occur. For these reasons, it is assumed that a sequence length of 6400 is decent to extract meaningful information out of the results.

To sum up, these four parameters define a four-tuple configuration. For the experiments, the parameters are chosen as follows: RSL = \{6400\}, NT = \{fat-tree, MDCube\}, TS = \{12-fat-tree,(1,12,1,4)-MDCube\} and BP =\\ \{$\mathtt{1000-peaks,100-peaks,10-peaks,wave,reqsize,vcesize,random}$\}.
There \\are now four algorithms being tested upon seven different BPs and on two different network topologies which means 56 different scenarios. The $\mathtt{random}$ and $\mathtt{reqsize}$ scenarios are repeated 15 times. The $\mathtt{wave}$ BP is 13 times repeated and each $\mathtt{peak}$ scenario five times. The $\mathtt{vcesize}$ BP is repeated 13 times on the fat-tree and 8 times on the MDCube topology.

\subsubsection{Metrics}\label{metrics}
Since the COVCE and GVOP algorithms have their focus on respectively one of the two mentioned competitive ratios, meaning the COVCE algorithm is 2-competitive on the capacity constraints and the GVOP algorithm is 2-competitive on the objective function, a metric is needed to make these algorithms comparable. To achieve this, the \emph{relative profit}($i$) (RP) is introduced. It describes the sum of the benefits of accepted requests until request $r_i$ relatively to the current maximum capacity violation. Formally, let 
\begin{equation}\label{eq34}
l^i(e) = \sum_{r_i \in R} \sum_{m_i^k \in M(r_i)} l(i,k,e) \cdot f(i,k) 
\end{equation}
be the total load on edge $e \in E_S$ after the $i$-th request, $l^i(v)$ is analogously defined for all $v \in V_S$. The maximum capacity violation after the $i$-th request is 
\begin{equation}
\mathtt{violation}(i) = \mathrm{max}\{1,\{\mathrm{max}_{e \in E_S} \{\frac{l^i(e)}{\mathtt{cap}(e)}\}, \mathrm{max}_{v \in V_S} \{\frac{l^i(v)}{\mathtt{cap}(v)}\}\}.
\end{equation}
Then, the relative profit at the sequence index $i$ is
\begin{equation}
	\mathtt{relative \;profit}(i) = \frac{\sum_{r_i \in R | r_i \mathrm{ accepted}}b(r_i)}{\mathtt{violation}(i)}.
\end{equation}
Note that it is possible to compare all four algorithms since the COVCEload and the Greedy VC-ACE algorithms have a constant violation factor of 1. Besides, the violation describes by which factor the capacities among all nodes and edges have to be multiplied in order to avoid over-subscription until request $i$ of this specific request sequence.

With the help of the relative profit, the results of the four algorithms can be further compared and interpreted. More precisely, consider two algorithms $A$ and $B$ achieving the same relative profit with $A$ having a greater violation factor than $B$. So, accomplishing the same relative profit does not mean that both algorithms perform equally well but $B$ performs better than $A$ because $B$ has the same relative profit as $A$ with a less capacity violation and is even better to apply in a practical scenario since the higher the violation the more the capacities have to be increased to overcome the violation factor.

The relative profit is the main parameter which is analyzed for the different BPs because the algorithms differ in their behavior per BP. Therefore, the first parameter being discussed is the relative profit, to show which algorithm has the best benefit to violation ratio for a certain scenario. These results are presented as a line plot with one line per algorithm. Next, the underlying capacity violations are shown to have an impression what the actual absolute values look like behind the competitive ratios. Furthermore, the violation factor contributes to the decision (finding the best algorithm for a certain scenario) because \cite{fattree} show that the relation between the estimated cost for the provider for a maximum possible number of hosts in a substrate network increases superlinear. For this reason, the cost for extra capacities due to a certain capacity violation factor do not increase linear with the extra profit gained by this violation factor. So, deciding whether a capacity violation by the competitive online algorithms is worth the superlinear increasing cost, depends on the percentage of the increased relative profit, the violation factor and the superlinear increasing costs.

In addition to the relative profit, the \emph{average relative profit} (ARP) is introduced. 
The relative profit function involves the maximum violation over all nodes and edges until request $i$. 
Another perspective on the capacity violation is provided with the \emph{average} violation. It describes the average capacity oversubscription over all nodes and edges until request $i$ and will induce less capacity multiplication costs for the provider since a single node or edge with the maximum oversubscription does not influence the average violation factor as much as the violation factor. The average violation is formally defined as
\begin{equation}\label{eq35}
\mathtt{violation}_{\mathrm{avg}}(i) = \mathrm{max}\bigg\{1, \frac{1}{|E_S|}\cdot \sum_{e \in E_S} \frac{l^i(e)}{\mathtt{cap}(e)} + \frac{1}{|V_S|}\cdot\sum_{v_\in V_S}\frac{l^i(v)}{\mathtt{cap}(v)}\bigg\}
\end{equation}
using the definitions of $l^i(e)$ and $l^i(v)$ from Equation \ref{eq34}. At each request $r_i$, the average violation describes the sum of the loads on all nodes and edges relatively to the maximum feasible allocation. Consequently the average relative profit is defined as
\begin{equation}\label{eq36}
\mathtt{relative \; profit}_{\mathrm{avg}}(i) = \frac{\sum_{r_i \in R | r_i \mathrm{ accepted}}b(r_i)}{\mathtt{violation}_{\mathrm{avg}}(i)}.
\end{equation}

Finally, the acceptance ratio over the sequence of requests is illustrated in windows of 100 requests, as explained before, in order to see the influence of the competitive ratios on the ``admission control behavior'' of the algorithms. In addition, the acceptance ratio gives an overview of how many resources are still available in the substrate network (with the respectively added capacities) in order to state whether an algorithm's profit will still experience significant changes after the 6400th request. For a more detailed view on the acceptance patterns, a bar plot is shown in order to analyze the ``local'' behavior of the algorithms.

Upon these metrics, we discuss which algorithm is the most appropriate for a certain scenario. In doing so, we consider the trade-off between the relative profit in relation to the violation against the background that capacity multiplication costs increase superlinear with respect to the violation factor. The same procedure is executed for the average violation and the ARP. For instance, if a competitive online algorithm has a 1\% higher relative profit than the one of a non-competitive online algorithms, which means that the profit increases superlinear with respect to the violation factor, then this does not necessarily mean that the relative profit is worth the superlinear increasing cost when the actual violation factor is 100. The point is that the percentage of higher relative profit has to suit the extent of the capacity violation.

Since for each BP type multiple experiments are run, the relative profits of the results of the same BP type are averaged. For the violation and the acceptance ratio line plots, the results for the same BP type are averaged, too. 

\subsection{Results}
This section is divided into several parts. At first, the results on the before mentioned benefit patterns are shown on the fat-tree topology applying the RP metric. Afterwards, the results on the fat-tree are discussed again applying the ARP metric to give another view over the experiments. Next, another topology, the MDCube is applied as network topology. Finally, a results are summarized and discussed. 

Before going into the analysis of the several benefit patterns, observe the following introductory example in Figure \ref{fig2}.
\begin{figure}
	\centering
	\includegraphics[width=1.0\textwidth]{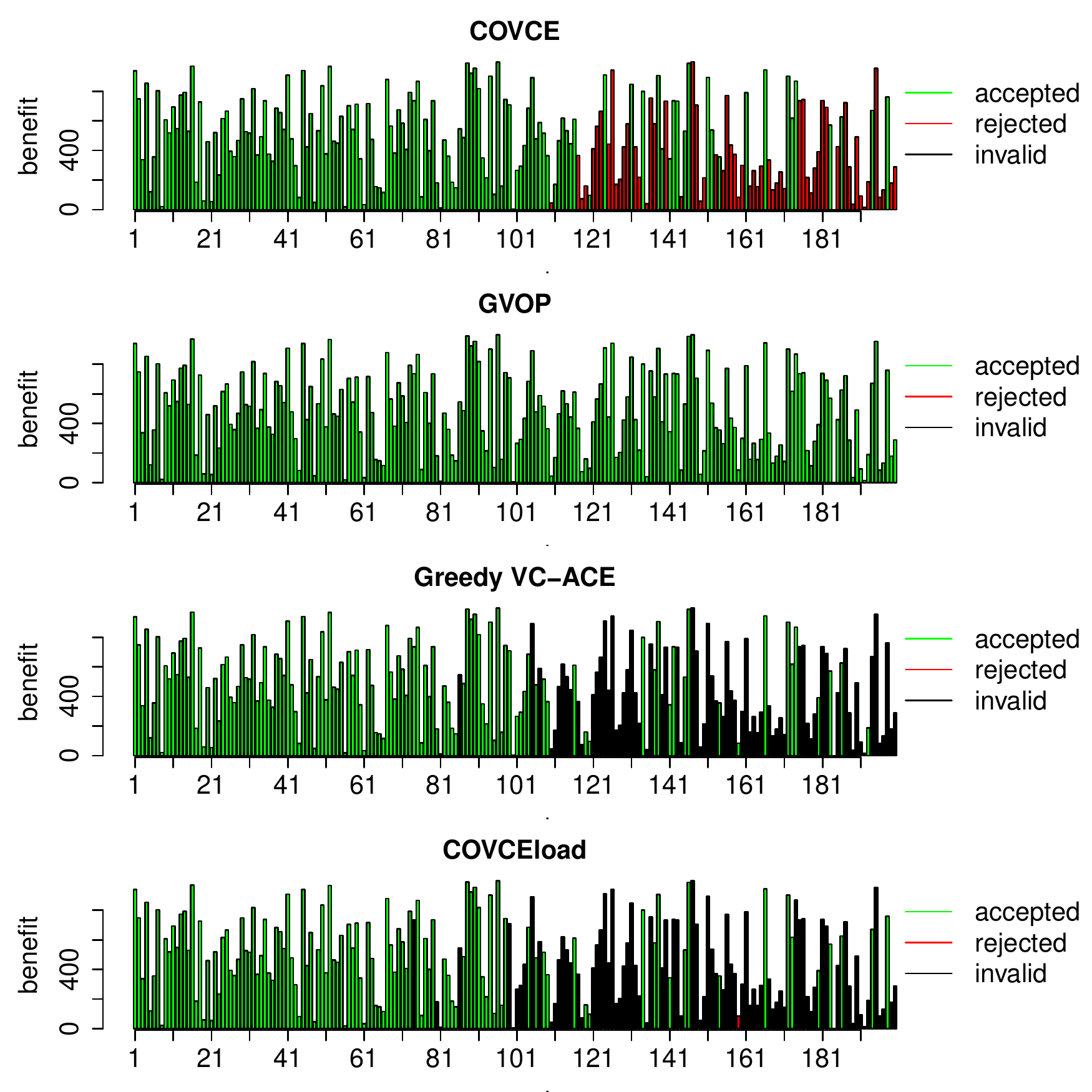}
	\caption{Depicted are the benefits as bars over the first 200 of 6400 requests of a single experiment on the MDCube topology for the $\mathtt{random}$ benefit pattern. The colors show whether the corresponding request is accepted (green), rejected (red) or invalid (black).}
	\label{fig2}
\end{figure}
It refers to a single experiment of the $\mathtt{random}$ BP where the benefit interval is chosen to be $b(r_i) \in [1,1000]$ to give an impression of how the typical result looks like. It shows the outcome of the four algorithms over the first 200 of 6400 requests on an (1,12,1,4)-MDCube. The height of the bars stands for the benefit of the current request. The color depicts whether the algorithm accepted, rejected or declared the request as invalid.

In general, all four algorithms accept the first 70 requests because the substrate network has enough resources for about 70 requests of the before mentioned sizes. In addition, the benefit of the requests is greater than their corresponding minimum-cost embeddings' costs. After the 70th request the behavior of the algorithms starts to differ.

Considering the result of the COVCE algorithm in Figure \ref{fig2}, the first request being rejected is $r_{110}$ because its benefit is lower than its minimum-cost embedding's cost. There are previous requests having a similar benefit and were accepted but the costs of the substrate nodes and edges are now higher, such that it is less likely for a request with a similar demand and benefit to be accepted again. So, the output of the if-condition in the COVCE algorithm is false and consequently rejects $r_{110}$. One can already recognize a pattern for the following requests, more precisely, that requests with a relatively low benefit are being rejected. Thus, one can expect that later on the "peaks" of this plot will be colored green and all other requests will be colored red. 

On the other hand, the GVOP algorithm does not manifest any rejections yet. Another point is that the Greedy VC-ACE algorithm declares many requests as invalid after request 110 which means that requests with a relatively high size do not fit anymore into the substrate network. In fact, this finding contributes to the starting point of rejections at request 110 of the COVCE algorithm because it does not violate any resource's capacity by more than a factor of 2. The COVCEload algorithm starts to declare some requests as invalid earlier than the Greedy VC-ACE algorithm. 

After about 780 requests, the GVOP algorithm starts rejecting requests with a very low benefit as seen in Figure \ref{fig3}. In contrast to this, the COVCE algorithm rarely accepts a request at this state of the execution which shows its competitiveness to not violate the resource capacities any further than a factor of 2. The ones that are accepted (green-marked) have a higher benefit to embedding cost ratio than the ones that are rejected. On the other side, the Greedy VC-ACE algorithm does not accept any further request in the shown sequence because after greedily accepting all requests with valid minimum-cost embeddings, relatively small sized requests do not even fit anymore into the substrate network. Thus, if the Greedy VC-ACE algorithm rejected the minimum-sized request $r_{i,\text{min}}$ with $\mathcal{N}_{i,\text{min}}$ = 3, $\mathcal{B}_{i,\text{min}}$ = 4 and $\mathcal{C}_{i,\text{min}}$ = 4, the algorithm will not accept any further request. Last but not least, the COVCEload algorithm still accept a few requests since it does not purely embed requests greedily but has a rejection functionality, which also shows up for some requests (red-marked).
\begin{figure}
	\centering
	\includegraphics[width=1.0\textwidth]{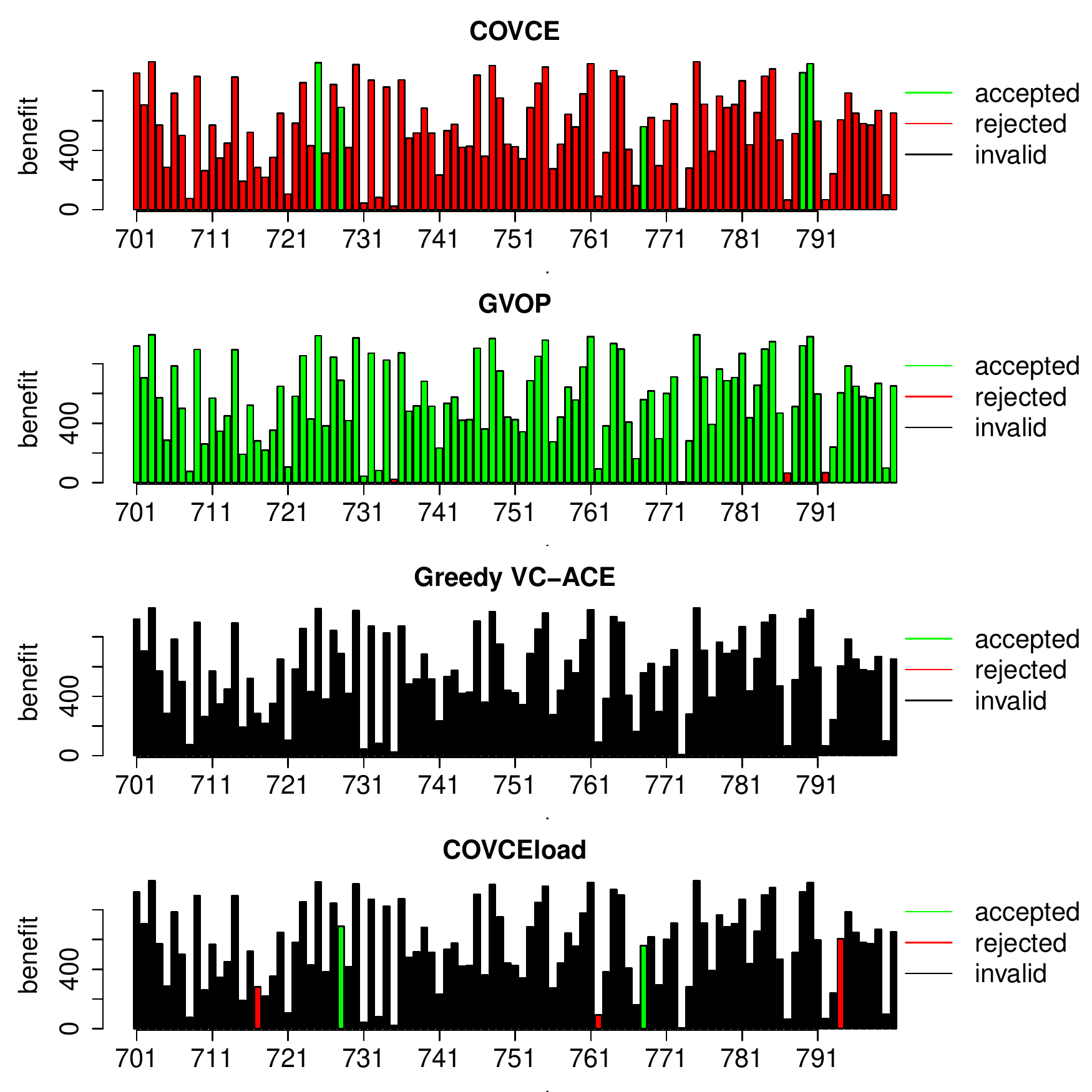}
	\caption{Depicted are the benefits as bars over the sub-sequence of the requests 700-800 of a single experiment on the MDCube topology. The $\mathtt{random}$ benefit pattern is applied for the four algorithms. The colors show whether the corresponding request is accepted (green), rejected (red) or invalid (black).}
	\label{fig3}
\end{figure}
\subsubsection{Results on Fat-Tree Topology}
To begin with, the results of the seven benefit patterns on the fat-tree topology are presented upon the relative profit metric and afterwards the results are reviewed with the ARP metric.
\paragraph{Random BP}
To recapitulate, the $\mathtt{random}$ BP consists of randomly chosen integer benefits out of a certain interval. For the following experiments, this interval is chosen to be $b(r_i) \in [1,1000]$. At first, the acceptance ratio of the algorithms is shown for the $\mathtt{random}$ BP in Figure \ref{fig4} (left). For this graph, the results of 15 experiments are averaged. On the x-axis the indexes of the windows are listed and on the y-axis the acceptance ratio per window is illustrated. To recapitulate, the acceptance ratio per window is the sum of all accepted requests in that window divided by the width of that window. The windows have a width of 100 requests, so that there are 64 windows for a certain request sequence.
\begin{figure}[htpb]
	\includegraphics[width=0.5\textwidth]{graphics/acceptance_ratio_window_fattree_random.pdf}
	\includegraphics[width=0.5\textwidth]{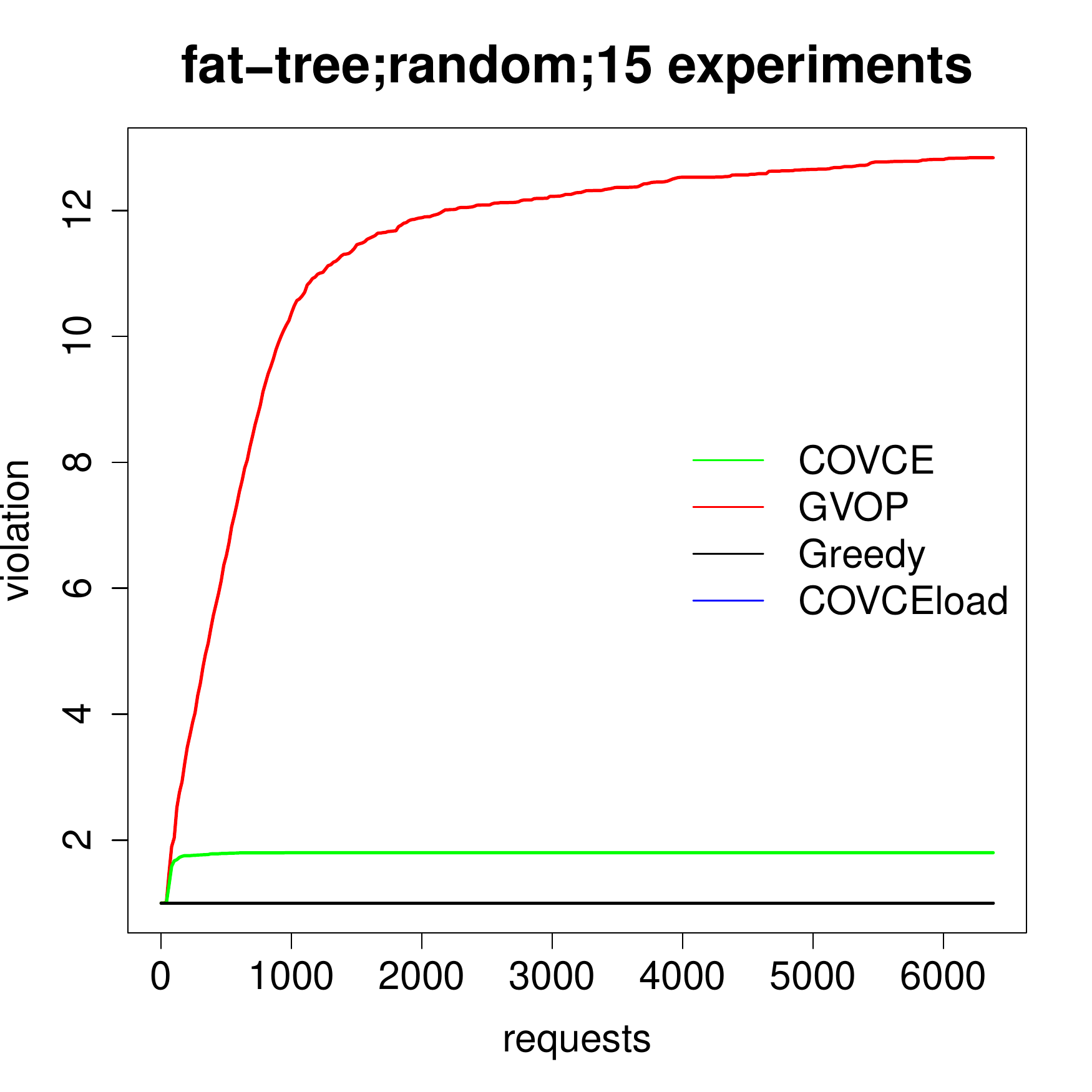}
	\caption{Depicted are the acceptance ratio (left) and violation (right) for $\mathtt{random}$ BP over 6400 requests on the 12-fat-tree topology. The results of the 15 experiments over the four algorithms are averaged for both metrics.}
	\label{fig4}
\end{figure} 
One can observe that the gradient of the greedy, COVCEload and COVCE algorithms are similar to each other but the GVOP algorithm's gradient noticeably differs from them. The curve of the GVOP algorithm drops later and slower than the others because it allows a capacity violation of approximately factor 12 after 6400 requests, as Figure \ref{fig4} (right) shows. The fluctuations during the progress of the curves in Figure \ref{fig4}(left) are due to the randomness of the request sizes and benefits since small-sized requests with a higher benefit are more likely to be accepted than big-sized requests with a lower benefit. The GVOP algorithm generally accepts more requests than the other algorithms because it allows for a higher capacity violation.

Accordingly to the difference in the acceptance ratios between the COVCE and the GVOP algorithm, the latter shows also a similar distance to the other algorithms considering the relative profit in Figure \ref{fig6} (left).
\begin{figure}[htpb]
	\includegraphics[width=0.5\textwidth]{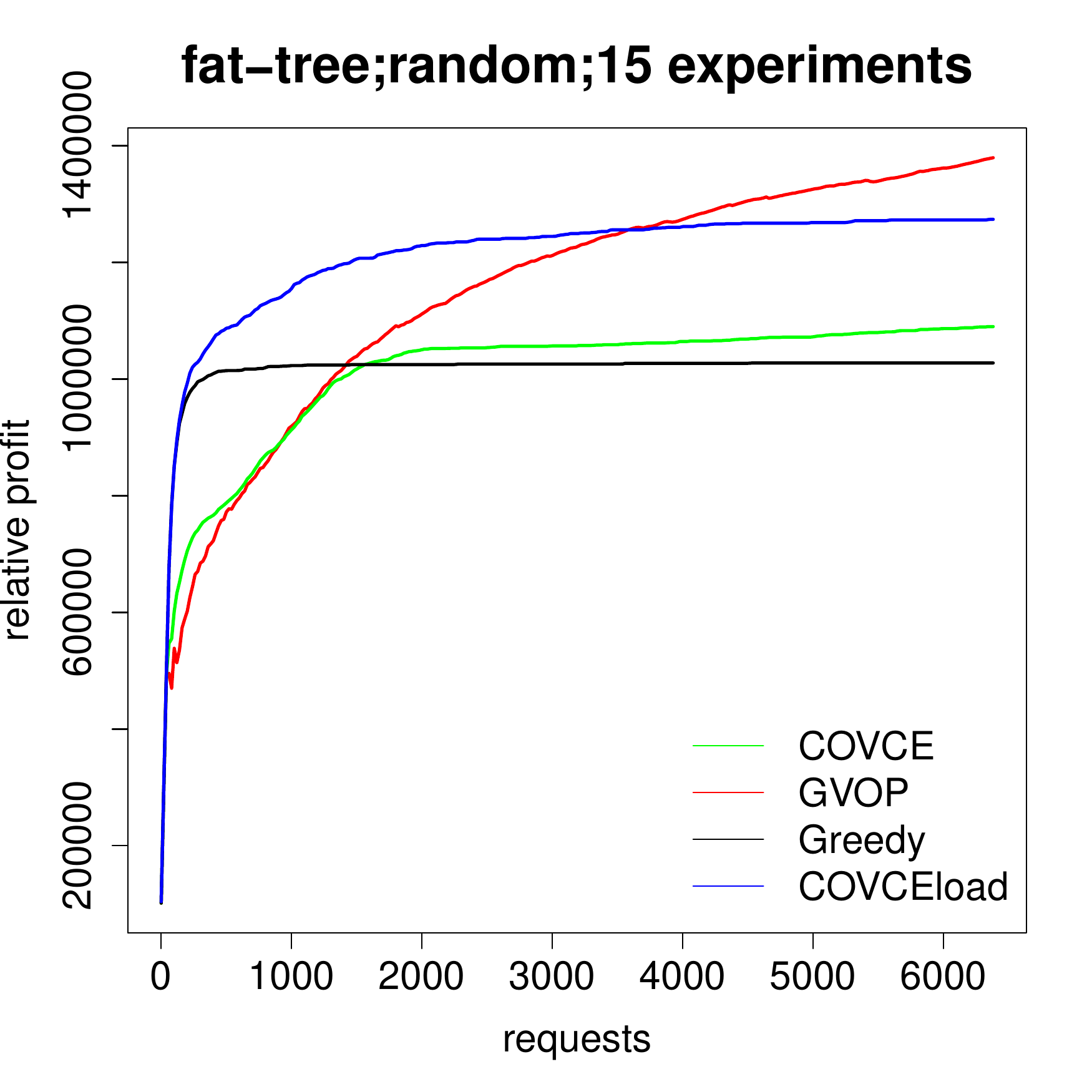}
	\includegraphics[width=0.5\textwidth]{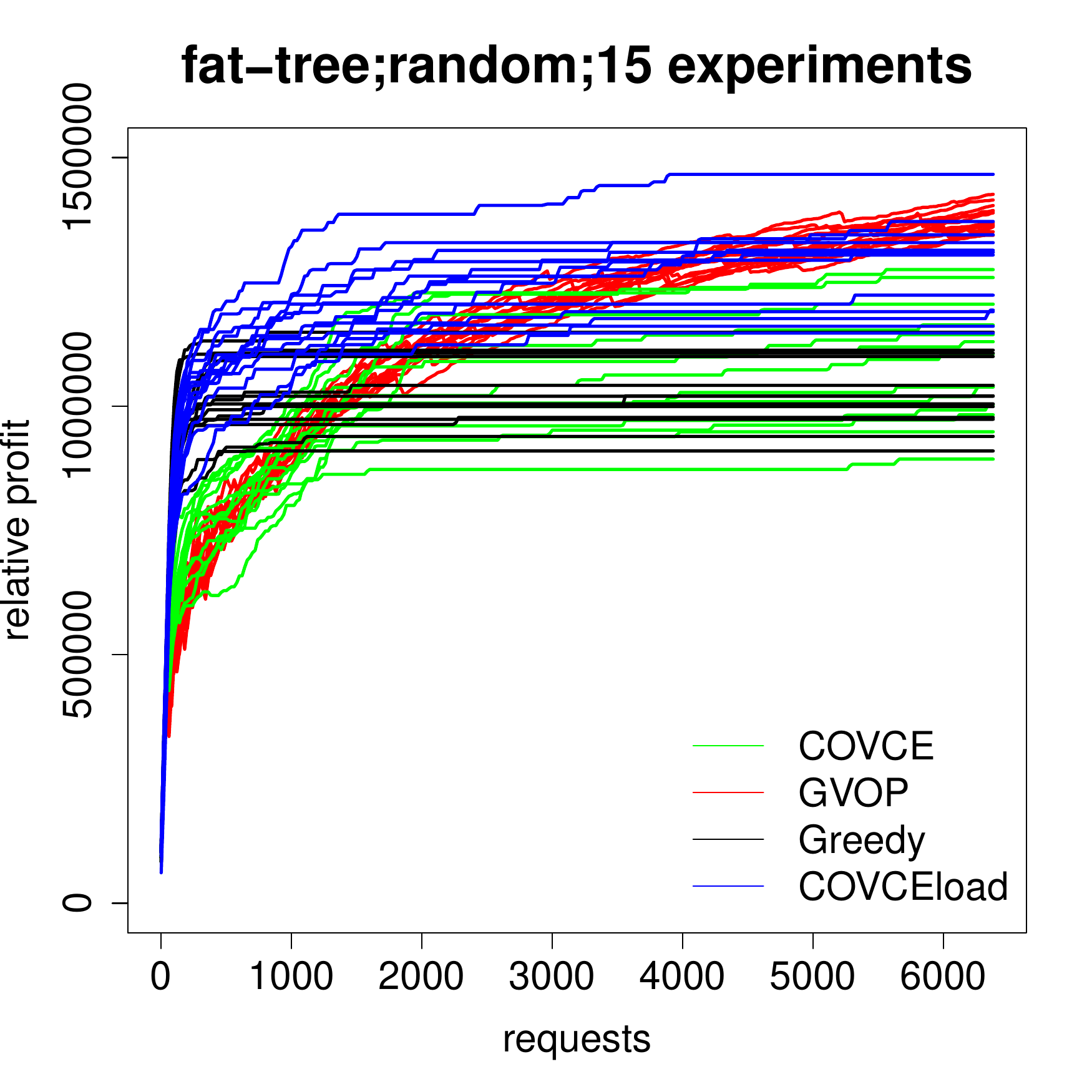}
	\caption{Depicted are relative profits of the four algorithms over 6400 requests for the $\mathtt{random}$ BP on the 12-fat-tree topology. The results of the 15 experiments are averaged (left) and respectively shown (right).}
	\label{fig6}
\end{figure} 
In this plot, one can see how the correlation between the competitiveness on the objective function and the capacity constraints works in terms of the relative profit.
The relative profits of the 15 experiments are averaged. From an overall view, the GVOP algorithm has a higher relative profit than the other algorithms after overtaking the COVCEload algorithm at approximately request 3700. Before request 3700, the COVCEload algorithm has the highest relative profit. The first violations occur in between the first 100 requests which are noticeable by the zig-zag progression of the GVOP algorithm. However, the closer the algorithm progresses to its maximum capacity violation per node and edge, the lesser the probability that big-sized and low-beneficial requests are accepted because the cost functions of each node and edge increase per further allocation on them and therefore the deciding if-condition in the algorithms is more likely to be false. Regarding the COVCE algorithm, it has a slightly higher relative profit than the Greedy-VC-ACE algorithm after the latter stops accepting requests because its capacities are exhausted. As soon as the capacity violation is getting closer to 2, the COVCE algorithm accepts less requests with a poor benefit and a high request size in contrast to the Greedy VC-ACE algorithm which accepted any request with a valid embedding. Thus, the subset of requests accepted by the COVCE algorithm with a high benefit to embedding size ratio is greater than the one for the greedy algorithm which results in a higher benefit to violation ratio (the relative profit). For the same reason, the GVOP algorithm's relative profit is later on higher than the greedy algorithm's one. The COVCEload algorithm has the second best result. The concept of combining residual capacities for the VC-ACE algorithm with the COVCE cost functions and rejection functionality shows a better performance than its related greedy and its related competitive algorithm. As a result, the GVOP algorithm performs the best in this scenario when only considering the relative profit. 

When comparing the competitive online algorithms to each other, the GVOP algorithm has a higher relative profit. One can interpret that the combination of its objective function and its capacity violation competitiveness performs better than the combination of the competitive ratios of the COVCE algorithm. Nevertheless, the sequence of requests is not long enough to know how the relative profit of the GVOP algorithm behaves in the future because the theoretical maximum violation is not reached yet since the curve is still increasing.

Without regarding the violation factors, the competitive online algorithms are outperforming the greedy non-competitive one considering that the COVCE has an approximately 6.8\% and the GVOP algorithm an approximately 37\% higher relative profit than the Greedy VC-ACE algorithm after 6400 requests. However, when considering the first 1000 requests in which the Greedy VC-ACE algorithm still accepted requests, it outperforms the competitive algorithms which suffer in performance because of resource violations. On top of that, the COVCEload algorithm has the best performance until request 3700 and a 24\% higher value than the Greedy VC-ACE algorithm. As a result, violating capacity constraints only results in a better performance after approximately request 3700 of the request sequence since then requests with a relatively high benefit to embedding cost ratio are only accepted. 

When involving the violation factors of the competitive online algorithms, as shown in Figure \ref{fig4} (right), one gets an overview on how the violations look like throughout the progress of the sequence of requests instead of only mentioning competitive ratios.
The results of 15 experiments are averaged. Note that the greedy VC-ACE and the COVCEload do not violate resource (violation factor = 1). The gradients for the competitive algorithms are the highest in the first 1000 requests. So, the violation factor increases relatively fast which explains the performance drop of the COVCE and GVOP algorithms' relative profit approximately in between the 300th to 2000th request. For the COVCE algorithm, the gradient of its curve radically drops in between the first 100 requests, but for the GVOP algorithm this behavior stretches over an interval of 1000 requests (between request 1000 and 2000). Due to this behavior, one can expect that the violation of the GVOP algorithm will converge to approximately 14. To show the falseness of this estimation, the competitive ratio can also be manually calculated upon the formula 
\begin{equation} \label{eq30}
	log_2(1+3\cdot \mathrm{max}_{i,k}\{\sum_{e \in \mathcal{M}_{E,i}^k} l(i,k,e) + \sum_{v \in \mathcal{M}_{V,i}^k} l(i,k,v)\} \cdot \mathrm{max}_i \{b_i\})
\end{equation}
which is taken from the competitive ratio of \cite{gvop} and interpreted with the notation of Section \ref{theory}. At first, the maximum benefit for the $\mathtt{random}$ BP is 1000. The term in the first maximization function describes the sum of the allocations over all node and edges that are included in embedding $k$ of request $i$. In this case, if an embedding involves all 1296 edges and 432 nodes of the 12-port fat-tree (each having a capacity of 20), then the maximum allocation is 34560. This results in 26.62, so that no resource capacity is violated more than a factor of 27. In conclusion, the estimation of a capacity violation of 14 is far away from the theoretical result. 

Consider the state at request 6400 for the relative profit and the capacity violation. When comparing them, the COVCE algorithm has a 6.8\% higher relative profit than the Greedy VC-ACE algorithm for violating the resource by a factor of at most 2. Thus, from a provider's perspective it may not even pay off to buy additional resources for a slightly higher relative profit. Accordingly, the GVOP algorithm achieves an approximately 37\% higher relative profit for a violation of 12.8 at this state of the execution. Again, it may not pay off for the provider to a have 37\% more relative profit when multiplying the capacities by 12.8. When comparing the competitive online algorithms, there is a factor of approximately 6 between their capacity violation at request 6400. It is again arguable whether 25.6\% higher relative profit are worth a 6 times higher capacity violation. Whether capacity violations are affordable or not, the COVCEload algorithm is at least the second best opportunity in this scenario and only has an approximately 8\% less relative profit than the GVOP algorithm with no capacity violation. 
Nevertheless, the provider has different opportunities dependent on, e.g., her financial situation. However, it is unknown how the relative profit of the GVOP algorithm looks like in the future since the capacity violation is far from the theoretical maximum as calculated before. 
On top of that, when assuiming superlinear increasing costs for the multiplication of capacities, the COVCEload algorithm pays off more than the GVOP algorithm, considering
a theoretical capacity violation and therefore multiplication of 27 for GVOP algorithm and no capacity violation for the COVCEload algorithm.

Another view on the relative profit is given in Figure \ref{fig6} (right) in order to help with this decision. In this graph, the relative profit of the 15 experiments for each algorithm are shown respectively. The first thing to notice is that the COVCEload, the COVCE and the greedy algorithm have a similar behavior and that the GVOP algorithm differs from them. The relative profits of the latter continue as a bundle, none of them is clearly separated. In contrast to this, the relative profits of the other three algorithms also start as bundles but then are more separated of each other after approximately 100 for the greedy and COVCEload and 1000 requests for the COVCE algorithm. In other words, the variance of them is higher than the one of the GVOP algorithm as soon as their maximum theoretical violation for each node and edge is approximated. Thus, the GVOP algorithm's curves still continue as a bundle because it does not reach its theoretical maximum violation of 27 in this request sequence. Another interesting point is that the before mentioned 8\% higher relative profit of the GVOP algorithm in Figure \ref{fig6} (left) do not show up to be consistent over all experiments as seen in Figure \ref{fig6} (right). In fact, there is at least one experiment of the COVCEload algorithm that outperforms the GVOP algorithm. 

So, a capacity violation of factor 2 or 12.8 (and theoretically 27) together with a superlinear increasing capacity multiplication cost and a not always higher relative profit of the GVOP algorithm does not pay off compared to the COVCEload algorithm in this scenario. Overall, the COVCEload algorithm's performance is the best in this scenario.

\paragraph{Benefit proportional to the Size of the Request}
As in the latter section, the analysis is the same for the $\mathtt{reqsize}$ BP. The benefit of this pattern is computed dependent on the request size. This benefit function is motivated by the fact that the tenant usually pays for the amount of resources she reserves \cite{reqsize}. Thus, for a certain request $r_i$, its benefit is determined upon the formula
\begin{equation}
	b(r_i) = \mathcal{N}_i \cdot ( \mathcal{B}_i +  \mathcal{C}_i).
\end{equation}
The product $\mathcal{N}_i \cdot \mathcal{C}_i$ captures that per VM $k \in V_{VC} \setminus \{LS\}$, $\mathcal{C}_i$ compute units have to be allocated. Additionally, if the request cannot be served on a single physical machine, further edges have to be used to connect several physical machine for the embedding. Thus, the bandwidth $\mathcal{B}_i$ might be allocated several times per request which is captured by the product $\mathcal{N}_i \cdot \mathcal{B}_i$ in the formula. Nevertheless the latter product does not include that per additional VM $k$ multiple bandwidth allocations $\mathcal{B}_i$ might be needed. So, the expressiveness of this benefit function is limited. Further detailed analysis on this kind of benefit function are shifted to future work. 

To get a first impression of this scenario, consider Figure \ref{fig7} (left) in which the average acceptance ratio of 15 experiments is shown.
\begin{figure}
	\includegraphics[width=0.5\textwidth]{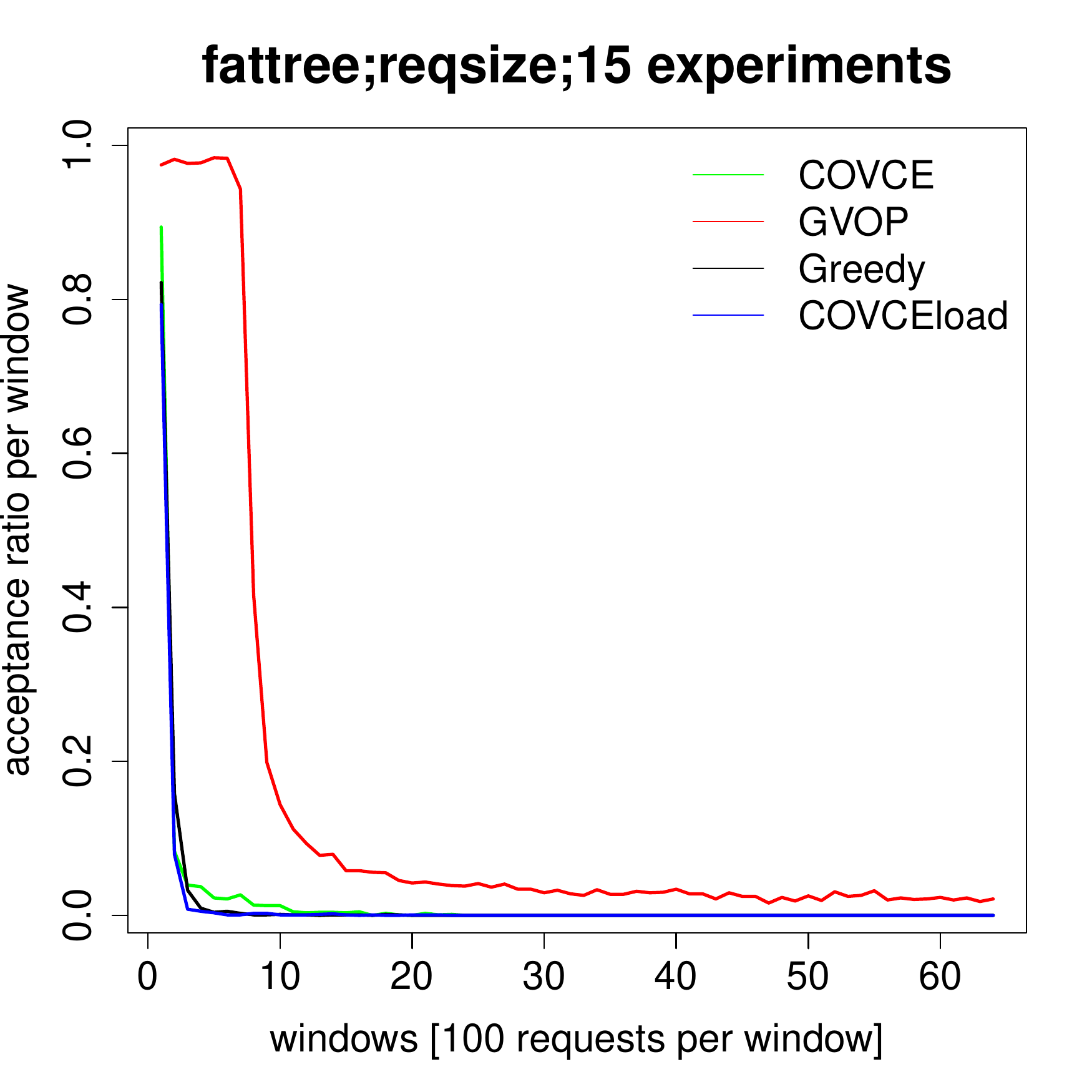}
	\includegraphics[width=0.5\textwidth]{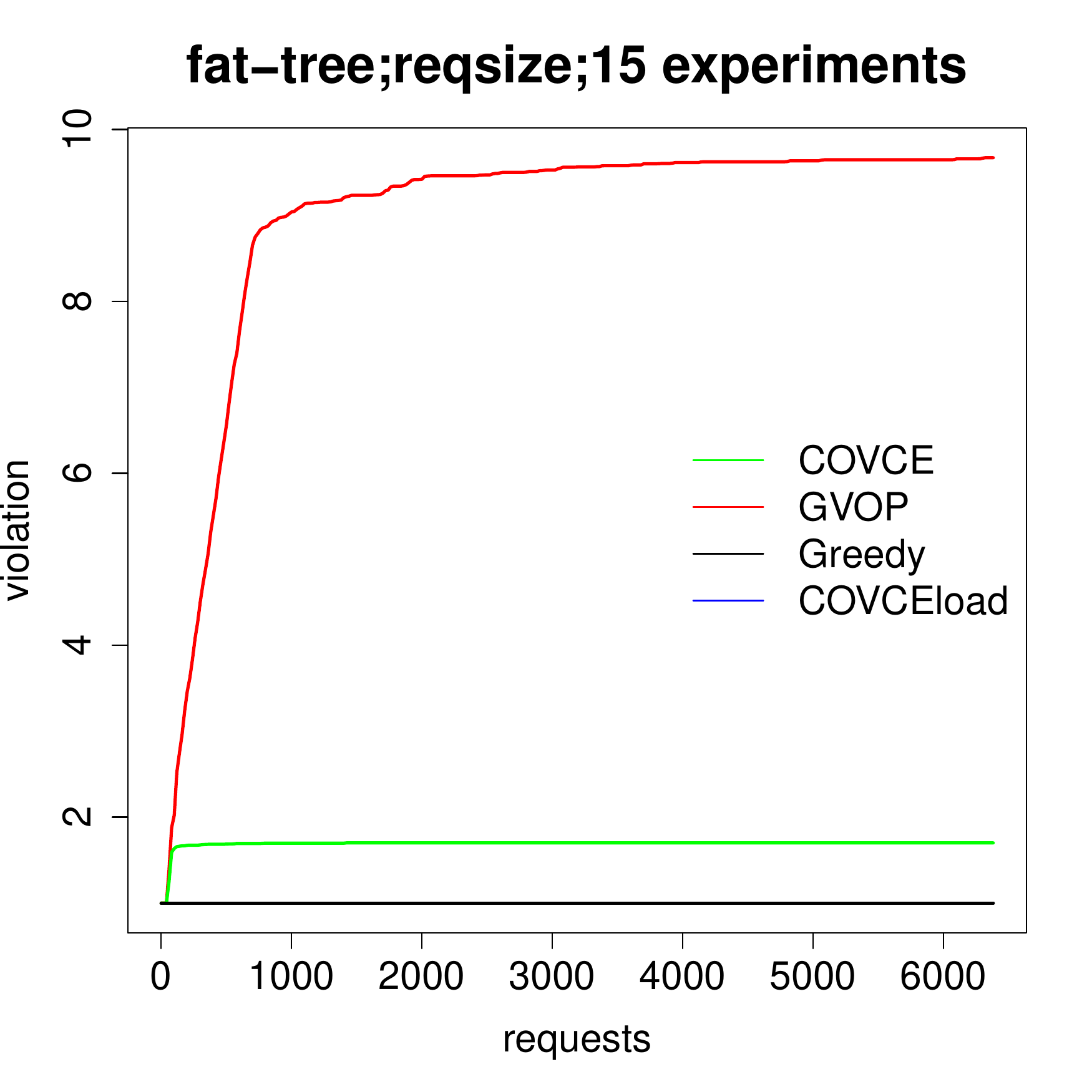}
	\caption{Depicted are the acceptance ratio (left) and violation (right) for $\mathtt{reqsize}$ BP over 6400 requests on the 12-fat-tree topology. The results of the 15 experiments over the four algorithms are averaged for both metrics.}
	\label{fig7}
\end{figure}
There is a radical drop which is even higher than the drop in the $\mathtt{random}$ BP scenario. A possible reason for this behavior is that the maximum and the minimum of the co-domain of the benefit function for this BP are closer to each other than in the co-domain of the $\mathtt{random}$ BP. The maximum benefit is $b_{\mathrm{max}} = 14 \cdot (12 + 20) = 448$ (12 and 20 being the maximum bandwidth and compute units) and the minimum benefit is $b_{\mathrm{min}} = 3 \cdot (4 + 4) = 24$, so the factor between the benefits can be maximally 19. This is only 1.9\% of the factor in the $\mathtt{random}$ scenario and therefore might influence the performance of the competitive online algorithms since there are less requests with an outstanding benefit to embedding cost ratio. In comparison to the $\mathtt{random}$ BP, it is therefore less likely that further requests are accepted after the first violation. Thus, a high drop rate of the acceptance ratio is induced for both competitive algorithms. 

The lower difference between the acceptance ratios in the $\mathtt{reqsize}$ BP also influences the relative profit of the competitive online algorithms. The results of the relative profit of all 15 experiments are averaged and shown in Figure \ref{fig24} (left) and respectively shown in Figure \ref{fig24} (right).
\begin{figure}
	\includegraphics[width=0.5\textwidth]{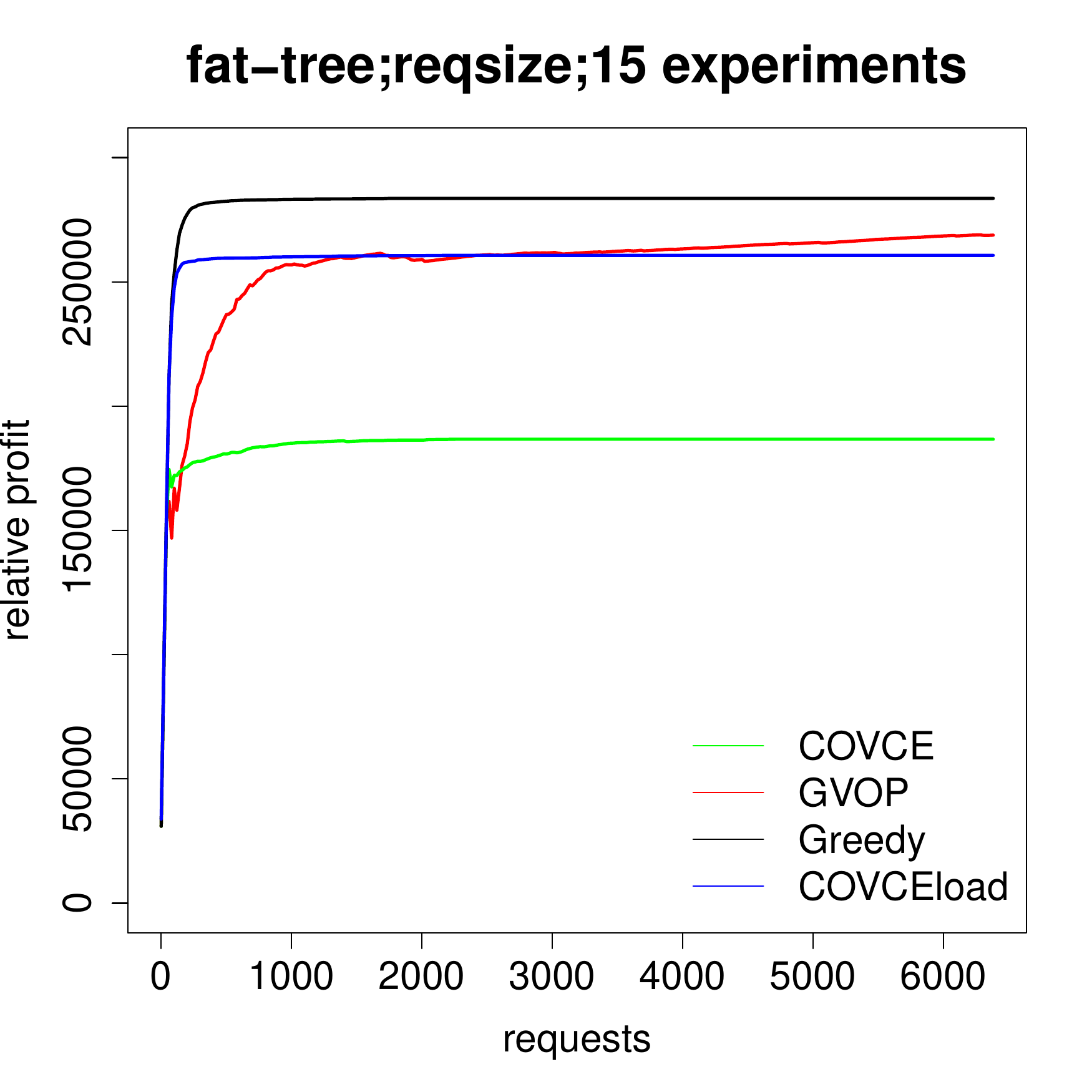}
	\includegraphics[width=0.5\textwidth]{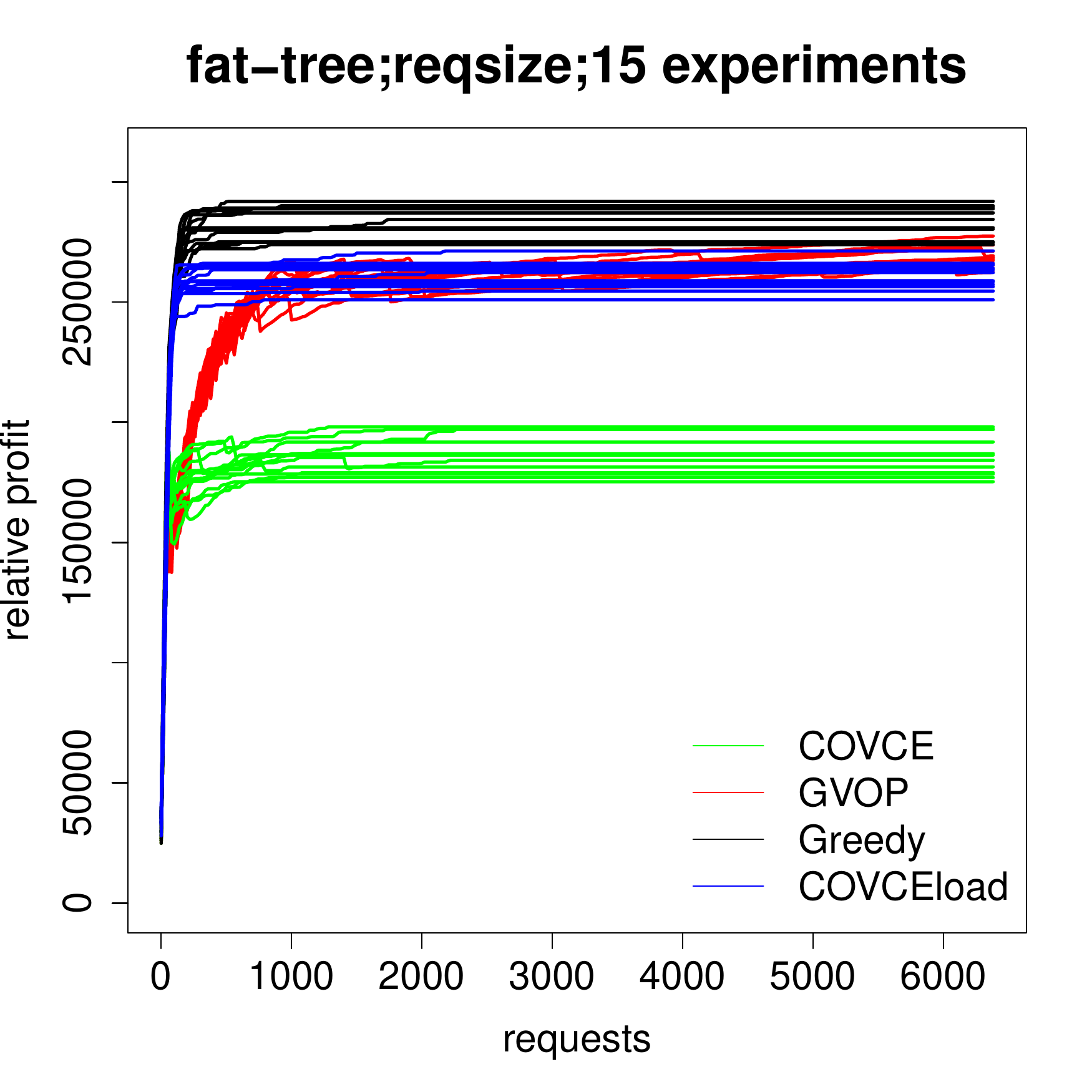}
	\caption{Depicted are relative profits of the four algorithms over 6400 requests for the $\mathtt{reqsize}$ BP on the 12-fat-tree topology. The results of the 15 experiments are averaged (left) and respectively shown (right).}
	\label{fig24}
\end{figure}
It is clearly noticeable that the non-competitive online algorithms outperform the competitive ones for the shown sequence of requests. Though, it might be possible that the GVOP algorithm outperforms the Greedy VC-ACE algorithm dependent on whether accepting further requests increases the average benefit to violation ratio or not.

The competitive online algorithms are not beneficial in this specific scenario because of the small co-domain of the benefit function. On top of that, outstanding benefit peaks will not be advantageous too because a high benefit in this scenario stands for a high request size inducing a high resource footprint and therefore high costs for the corresponding embedding.

Accordingly, the capacity violation is less than in the $\mathtt{random}$ scenario because the incentive for embedding more requests (in terms of the benefit to embedding cost ratio) is smaller. As shown in Figure \ref{fig7} (right), the violation of the COVCE algorithm is slightly less with 1.6 and the one of the GVOP algorithm dropped down to approximately 9.75 at request 6400. In fact, the competitive ratio of the GVOP algorithm depends on the maximum possible violation as shown in Equation \ref{eq30} and calculates to 25.46 assuming the maximum theoretical allocation of an embedding involves the whole substrate graph. Even if the violations are lower than in the $\mathtt{random}$ BP scenario, they do not pay off in comparison to the non-competitive online algorithm because they do not induce a higher relative profit. Furthermore, the Greedy VC-ACE algorithm has a higher relative profit than the COVCEload algorithm, so that the additional rejection functionality of the COVCEload algorithm does not pay off for small benefit to request size ratios in this scenario.

In conclusion, the competitive online algorithms show weaknesses when the benefit increases proportionally to the request size and show worse results on the relative profit than the non-competitive algorithms in this specific scenario. The Greedy VC-ACE algorithm has the best performance regarding the relative profit for this BP. In the next scenario, another similar motivated benefit pattern is discussed to further investigate their properties.

\paragraph{Benefits proportional to the Size of the Embedding}
In contrast to the previous BP whose benefits are calculated dependent on the request size, the following $\mathtt{vcesize}$ benefit pattern considers the size of the corresponding cost-minimum embedding. To recapitulate, the benefit is for this BP is defined accordingly to Equation \ref{eq59}:
\begin{align}
b(r_i) = \frac{|E_S|\cdot C_E}{|V_S| \cdot C_V}\cdot \bigg(\frac{\mathcal{N}_i \cdot \mathcal{C}_i}{|V_S|\cdot C_V}\bigg)^2 + \bigg(\frac{\mathrm{empirical\_cost}_E (\mathcal{N}_i, \mathcal{B}_i,\mathcal{C}_i)}{|E_S|\cdot C_E}\bigg)^2. \nonumber
\end{align}
After calculating the value of $b(r_i)$, it is scaled up in order to handle it as an integer value. Scaling up does not influence the result because the rate between the requests' benefits do not change. The lowest benefit value is 1 and the highest benefit value is approximately 1000, so that a factor 1000 is given between the lowest and highest benefit as in the $\mathtt{random}$ scenario.

To begin with, consider the results for the relative profit in Figure \ref{fig5}.
\begin{figure}
	\includegraphics[width=0.5\textwidth]{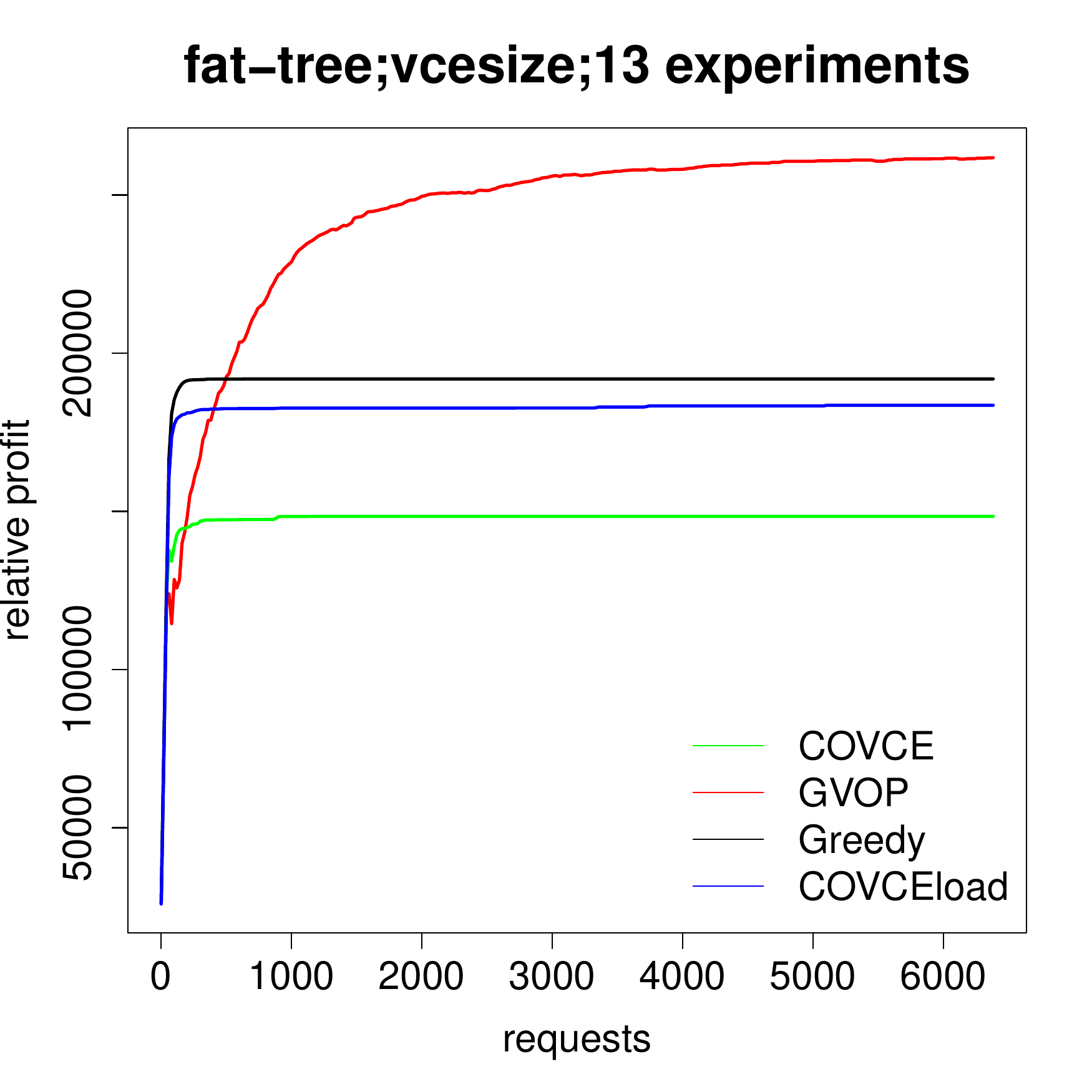}
	\includegraphics[width=0.5\textwidth]{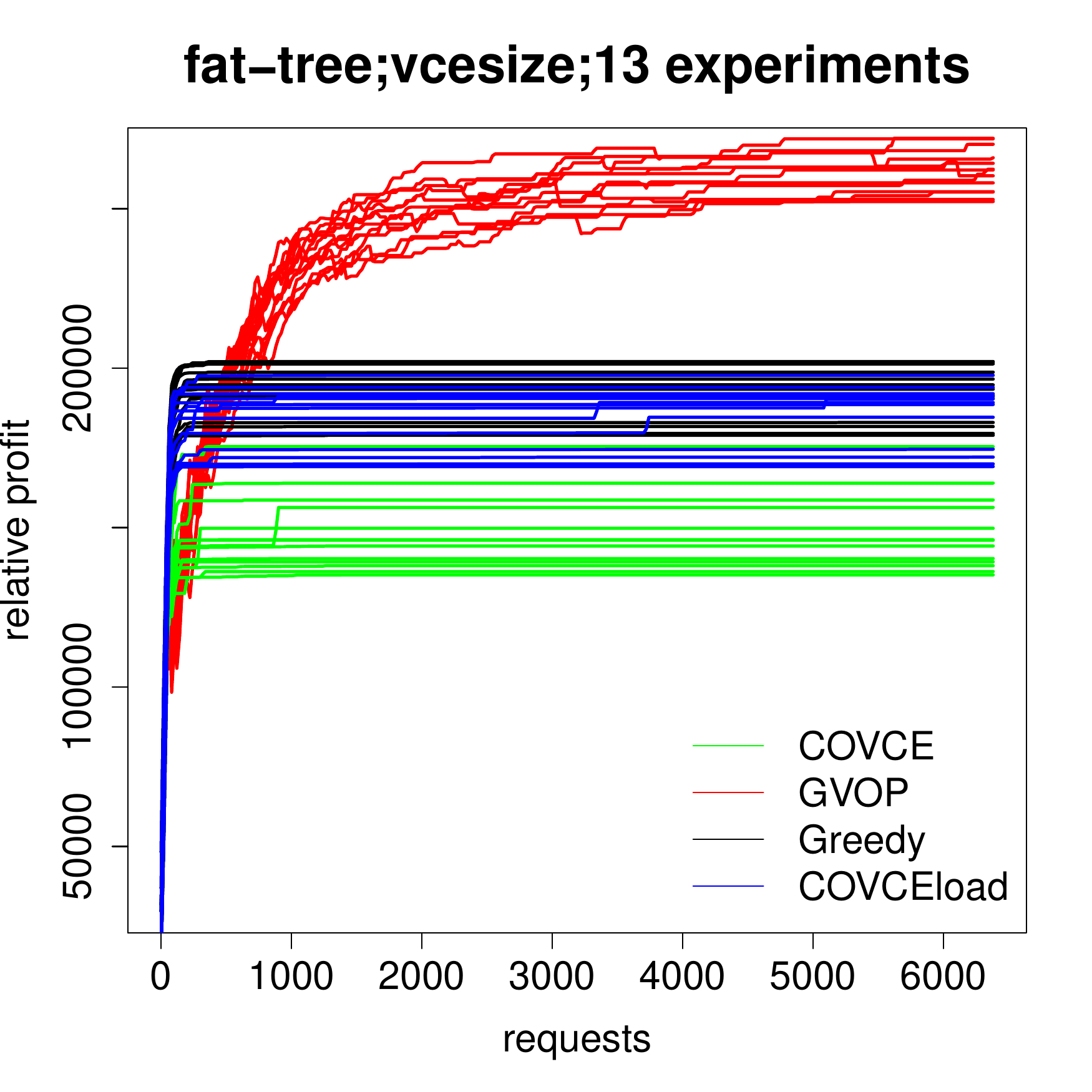}
	\caption{Depicted are relative profits of the four algorithms over 6400 requests for the $\mathtt{vcesize}$ BP on the 12-fat-tree topology. The results of the 15 experiments are averaged (left) and respectively shown (right).}
	\label{fig5}
\end{figure}
In general, the competitive algorithms show an improvement in their performance relatively to the non-competitive ones in comparison to Figure \ref{fig24}. More precisely, the GVOP algorithm's curve has a similar progress but outperforms the greedy algorithm in this case. The COVCE algorithm's performance also improved, its curve is closer to the non-competitive algorithms. Accordingly, the COVCEload algorithm also performs better than in the $\mathtt{reqsize}$ experiments because its relative profit is closer to the greedy algorithm's one. Generally speaking, the rejection functionality and competitiveness do pay off more for the $\mathtt{vcesize}$ than for the $\mathtt{reqsize}$ BP. One of the reasons could be the higher quotient between the minimum and maximum benefit of this BP. Figure \ref{fig5} (right) shows a similar pattern as Figure \ref{fig24} (right) since all algorithms progress as bundles of curves. 

The corresponding acceptance ratio of the GVOP algorithm in Figure \ref{fig8} drops faster than for the $\mathtt{reqsize}$ BP in Figure \ref{fig7} in which its acceptance ratio is visibly higher than the others even at the end of the request sequence. In contrast to this, the acceptance ratio is not distinguishable anymore after 4000 requests in Figure \ref{fig8}. The latter observations show that the GVOP algorithm accepted more requests in general. A possible reason for that might be a higher benefit to embedding cost ratio due to the higher benefit interval that is created upon the formula in Equation \ref{eq59}, so that the incentive for accepting requests is higher. The violation factor of the GVOP algorithm is also slightly higher in Figure \ref{fig8} than in Figure \ref{fig7} which contributes to the latter observation.
\begin{figure}
	\includegraphics[width=0.5\textwidth]{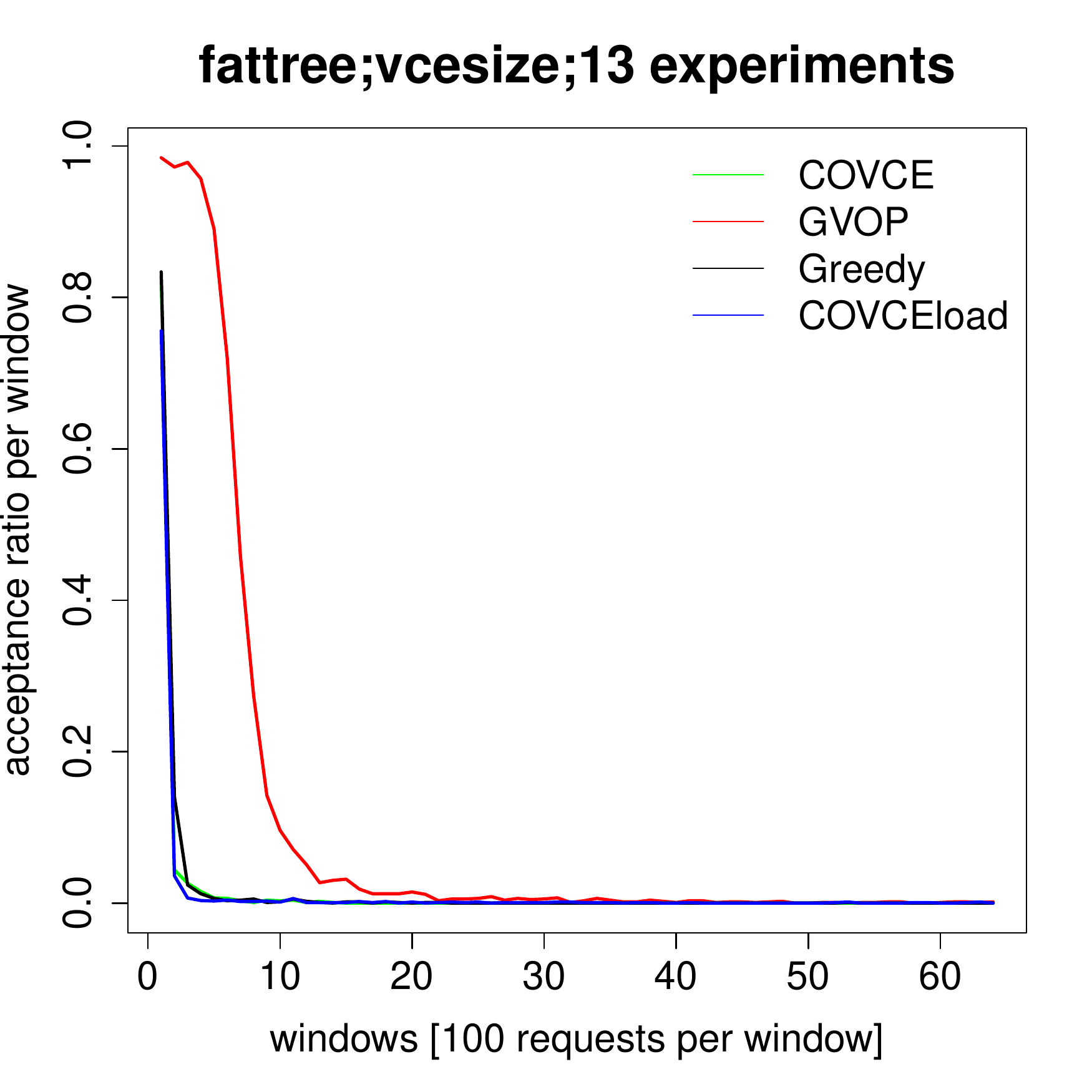}
	\includegraphics[width=0.5\textwidth]{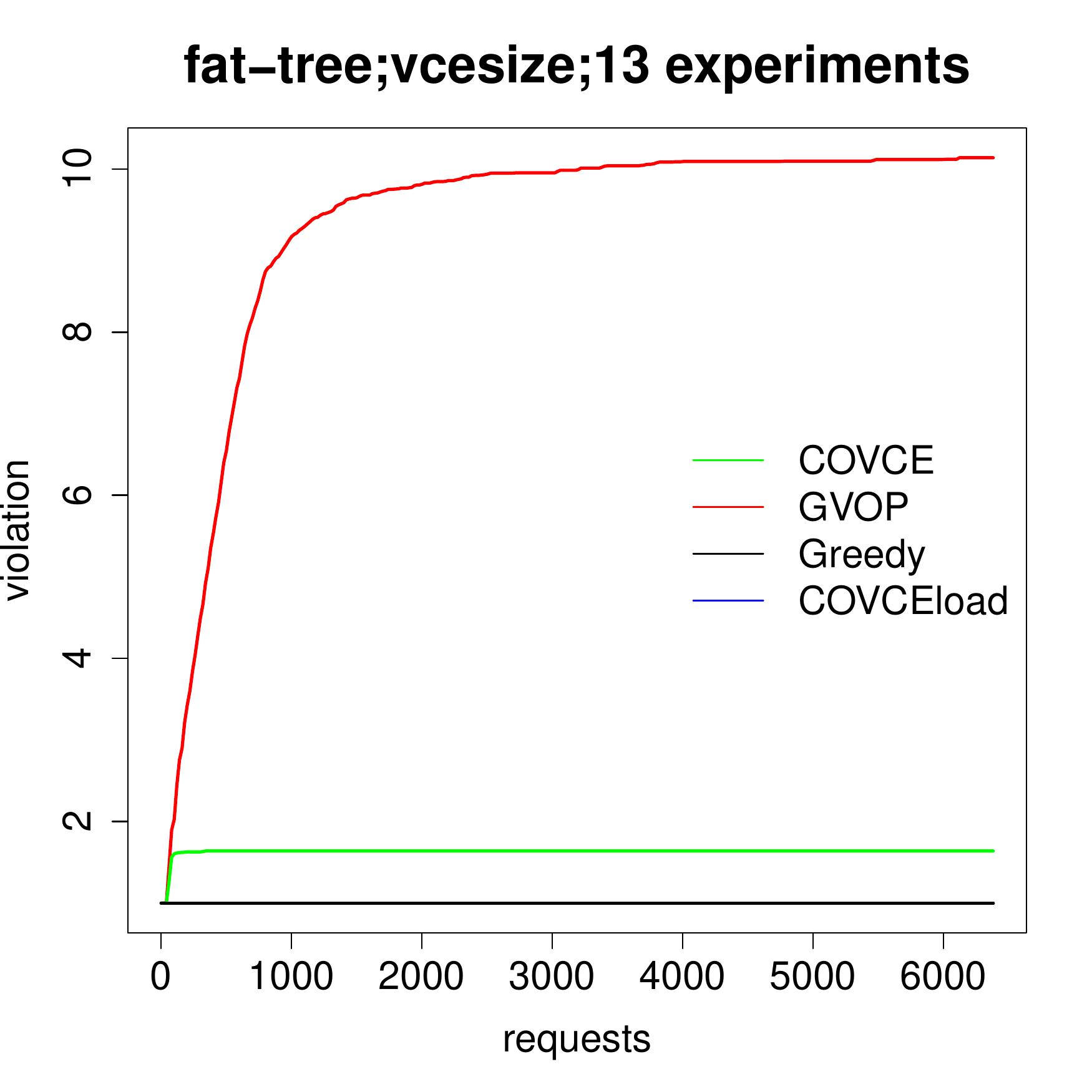}
	\caption{Depicted are the acceptance ratio (left) and violation (right) for $\mathtt{vcesize}$ BP over 6400 requests on the 12-fat-tree topology. The results of the 13 experiments over the four algorithms are averaged for both metrics.}
	\label{fig8}
\end{figure}
It has a violation factor of 10 at request 6400 and has an approximately 20\% higher relative profit than the Greedy VC-ACE algorithm. Thus, when assuming superlinear increasing costs for capacity multiplication, these 20\% higher relative profit may not pay off. On top of that the Greedy VC-ACE algorithm has a higher relative profit than the COVCE and the COVCEload algorithms, so that it pays off the most in this scenario.

In general, the non-competitive algorithms show a better performance than the competitive ones as soon as the benefit function is dependent on the request or embedding size because then smaller benefit to embedding cost ratios occur.
\paragraph{Wave BP}
In this section, the BP $\mathtt{wave}$ is discussed. The benefit function consists of the sinus function
\begin{equation}
	b(r_i) = 300\cdot \mathrm{sin}(0.1\cdot i) + 400
\end{equation}
The parameters of the sinus function are chosen like this because they result in an amplitude of 300 and a whole period is passed after approximately 60 requests. Thus, the corresponding benefit interval is similar to the previous experiments. In addition there are enough periods to make clear statements over 6400 requests. The first noticeable behavior is seen in the benefit bar plot in Figure \ref{fig10}.  
\begin{figure}
	\centering
	\includegraphics[width=\textwidth]{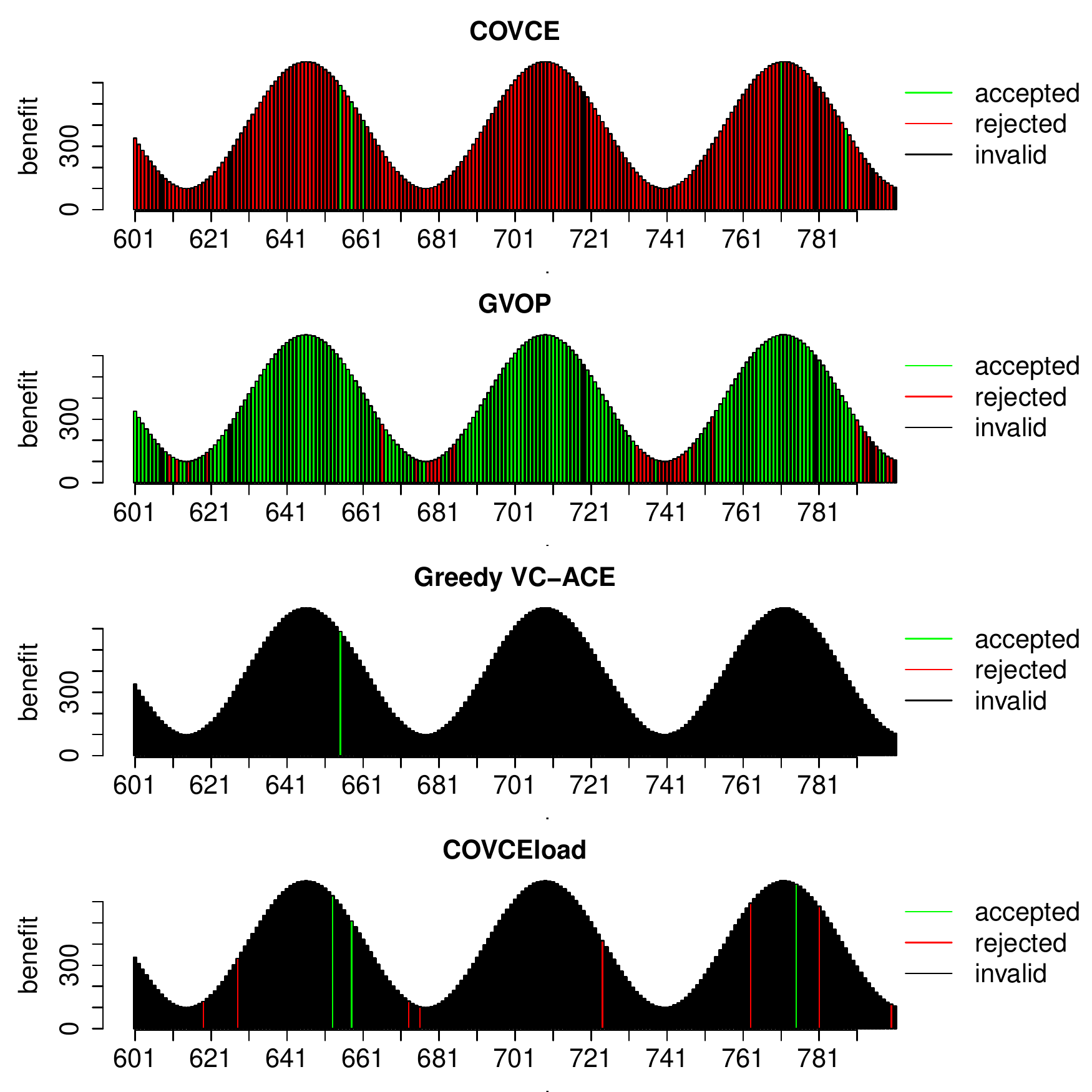}
	\caption{Illustrated are the benefits as bars of the requests 600-800 of a single $\mathtt{wave}$ experiment on the 12-fat-tree topology for all four algorithms. The colors depict whether the request was accepted (green), rejected (red) or invalid (black).}
	\label{fig10}
\end{figure}
In this plot, the acceptances and rejections of the subsequence 600-800 are shown. The COVCEload and the Greedy VC-ACE algorithm rarely embed a request at this state of the request sequence.
To recapitulate, the COVCE algorithm does not accept as many requests as the GVOP anymore because of its stricter competitive ratio on the capacity constraints. Nevertheless, the algorithms have in common that they mostly accept requests whose benefits are located around the peaks of the sinus curve. This behavior is more clearly observable for the COVCE algorithm because it is closer to reach its maximum violation at this state of the request sequence as seen in Figure \ref{fig26} and therefore accepts only requests with a high benefit to embedding-cost ratio.
\begin{figure}
	\includegraphics[width=0.5\textwidth]{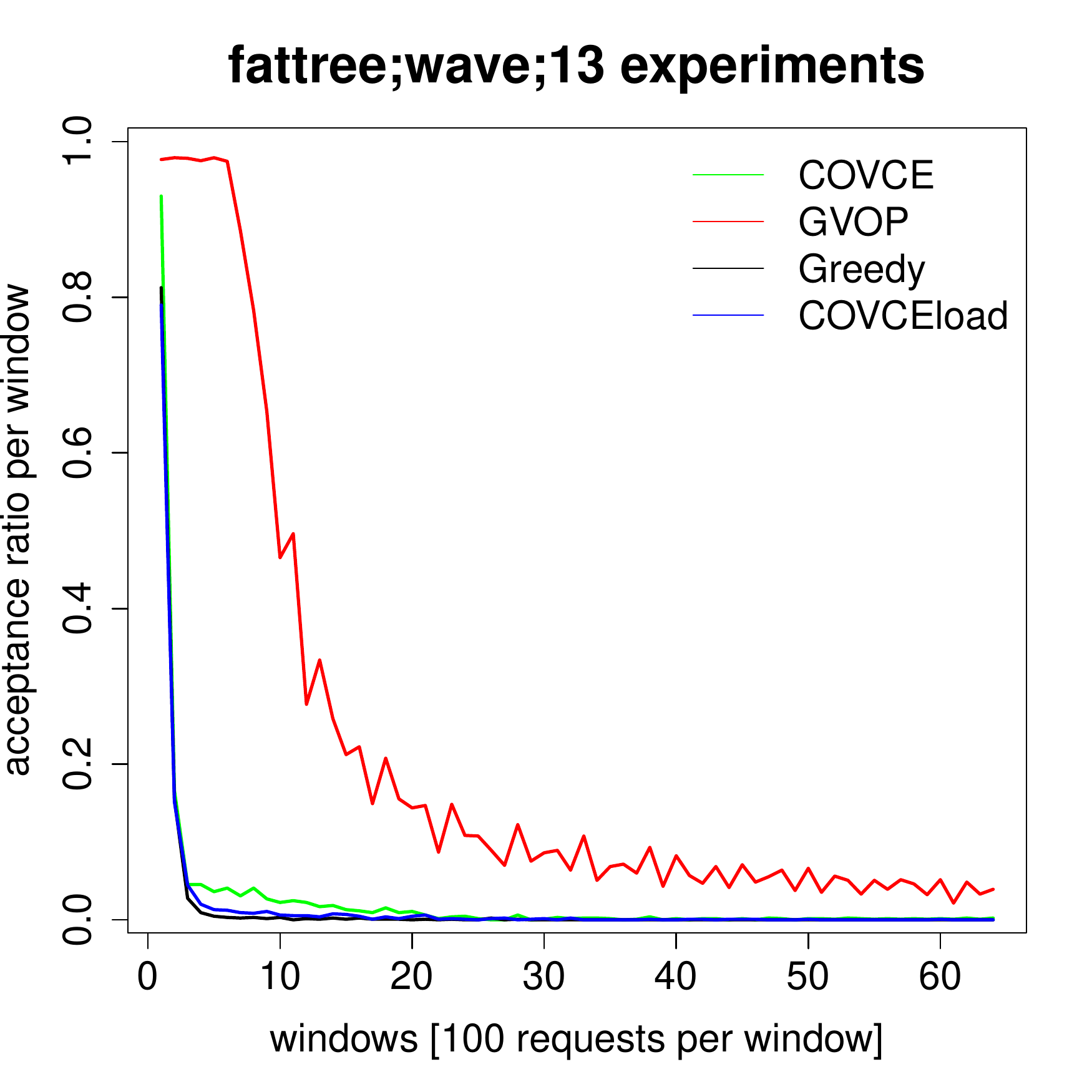}
	\includegraphics[width=0.5\textwidth]{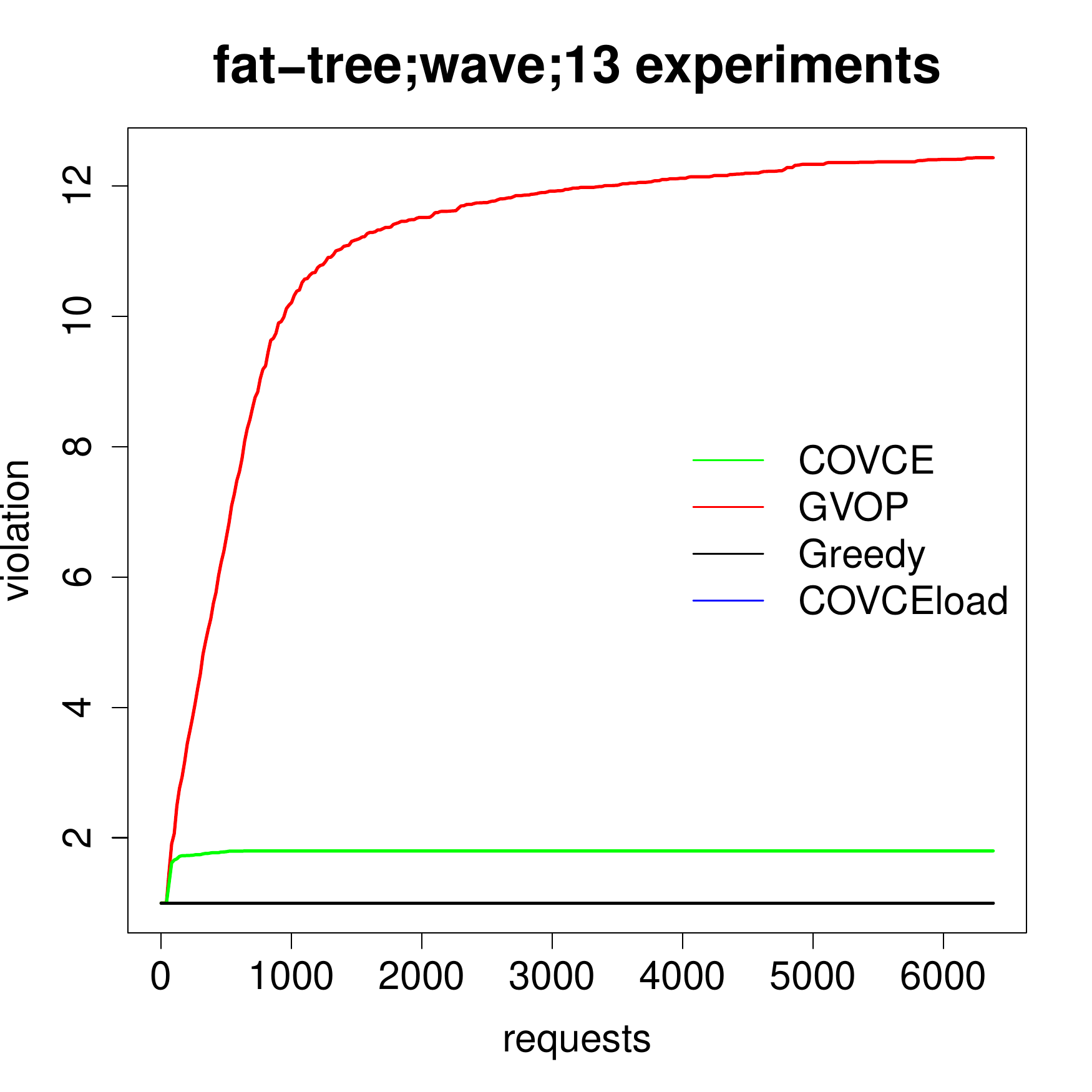}
	\caption{Depicted are the acceptance ratio (left) and violation (right) for $\mathtt{wave}$ BP over 6400 requests on the 12-fat-tree topology. The results of the 13 experiments over the four algorithms are averaged for both metrics.}
	\label{fig26}
\end{figure}
On the other hand, the GVOP algorithm still accepts many requests accordingly to its relatively low violation factor at this state. Nevertheless, there is already a clearly visible area in which the GVOP algorithm does not accept any requests, namely the wave troughs. Thus, the closer the maximum violation is approached, the wider this interval of non-accepted requests and accordingly the thinner the interval of accepted requests becomes. This interval is for the COVCE algorithm thinner than for the GVOP algorithm at this state of the execution. Furthermore, the ``acceptance interval'' gets thinner the more requests are being accepted which is beneficial for both algorithms. These acceptance intervals are noticeable in the zig-zag pattern of the GVOP algorithm in Figure \ref{fig26}. In addition, the difference between the four acceptance ratios is again higher in comparison to the acceptance ratios of the $\mathtt{reqsize}$ experiments. A reason for that might be a higher difference between the benefit to embedding cost ratio than for the $\mathtt{reqsize}$ BP. 

Next, the relative profit is shown in Figure \ref{fig12}. 
\begin{figure}
	\includegraphics[width=0.5\textwidth]{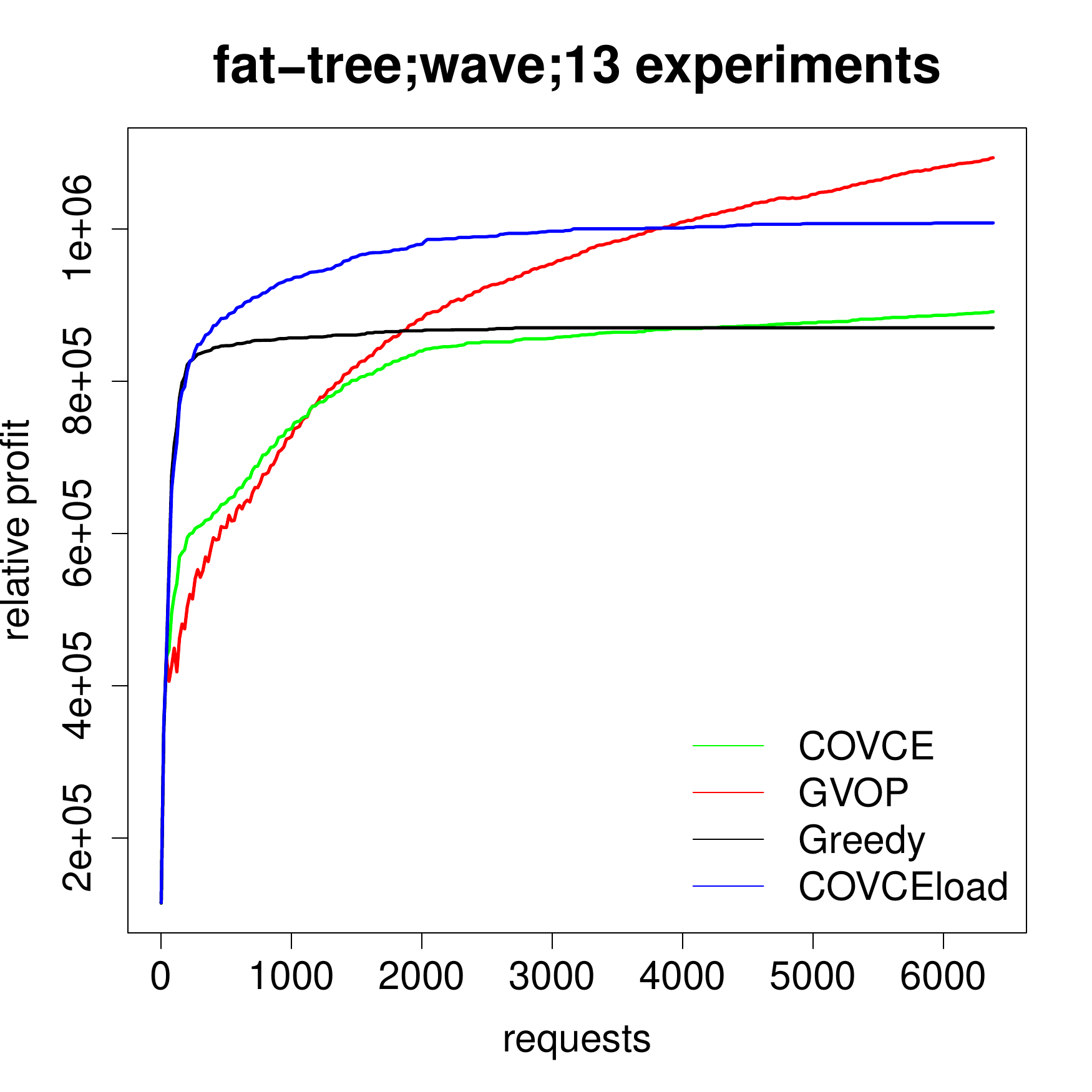}
	\includegraphics[width=0.5\textwidth]{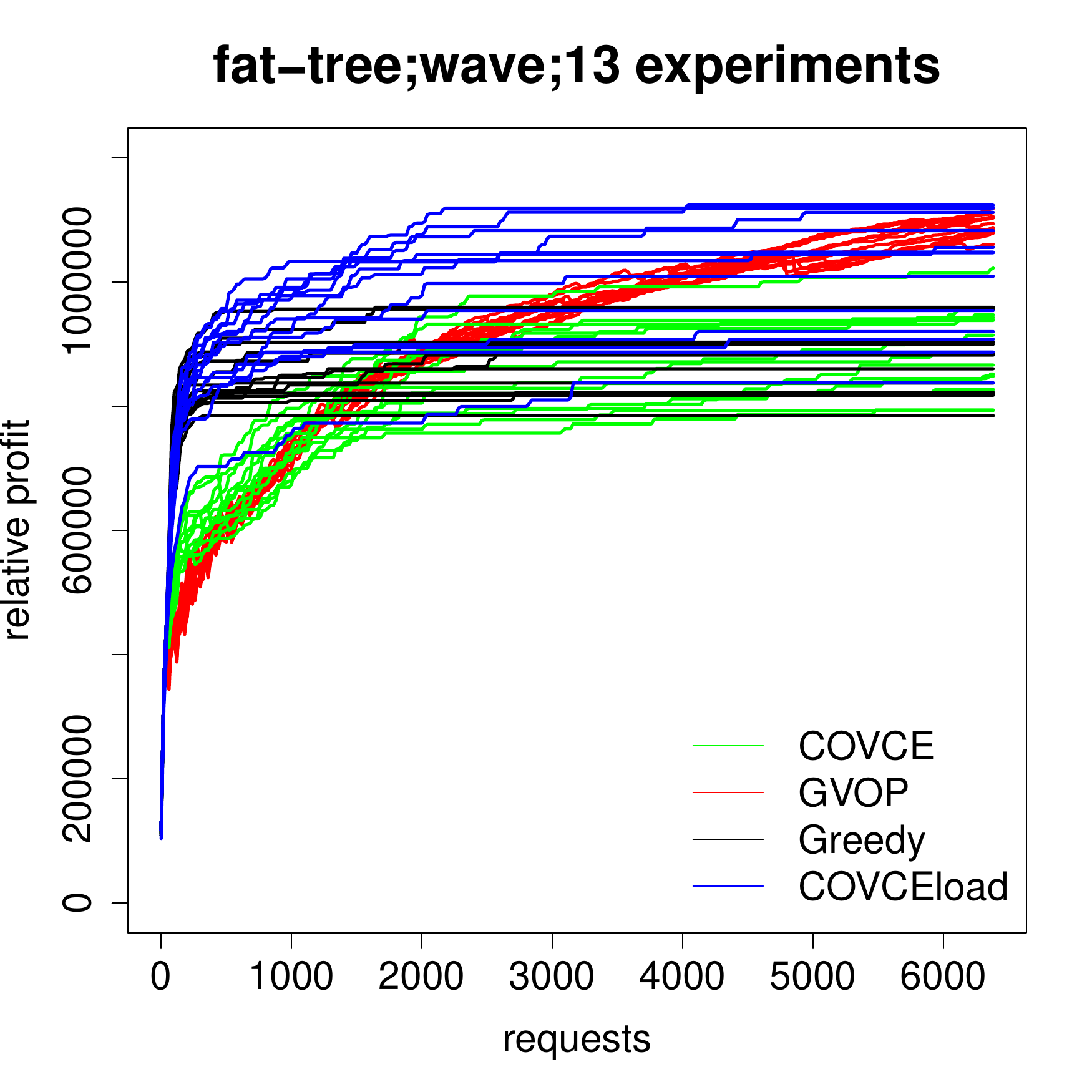}
	\caption{Depicted are relative profits of the four algorithms over 6400 requests for the $\mathtt{wave}$ BP on the 12-fat-tree topology. The results of the 13 experiments are averaged (left) and respectively shown (right).}
	\label{fig12}
\end{figure}
The gradients of the competitive online algorithms are higher than in the $\mathtt{reqsize}$ experiments because with every further period of the sinus wave, there is the possibility of an arriving request at the peak of the benefit function with a relatively small request size, such that the embedding cost are small too, in order to embed the request. So, per sinus wave period, there is a guaranteed benefit peak which is not guaranteed in the $\mathtt{reqsize}$ experiments. This shows that the competitive algorithms benefit a lot from peak benefits combined with low embedding costs as they show a higher relative profit than the Greedy VC-ACE algorithm. Though, the differences between the relative profits is not as high as for the $\mathtt{random}$ experiments. A reason for that might be the higher maximum to minimum benefit ratio of 1000 for the $\mathtt{random}$ BP and only 7 for the $\mathtt{wave}$ pattern. In this regard, the relative profit of the COVCE algorithm overtakes the one of the Greedy VC-ACE algorithm at approximately request 4000 in comparison to approximately request 1600 in the $\mathtt{random}$ BP scenario. The COVCEload algorithm has the second best performance, as in the $\mathtt{random}$ scenario, because it again benefits from the rejection functionality of the COVCE algorithm in addition to not violating any capacities. Besides, this $\mathtt{wave}$ experiment only shows results for this certain sinus benefit function. Hence, to show significant differences between the relative profits, one has to adjust the sinus wave to have a higher amplitude. 

To determine which of the algorithms performs the best, the violation is considered again in Figure \ref{fig26}.
The results of this measurement show that there is no visible difference between the violation of the $\mathtt{wave}$ and $\mathtt{random}$ BP. Referring to this, the acceptance ratios of both BP show similar drop rates when neglecting the local peaks for the $\mathtt{wave}$ BP. These are induced by windows in which the wave peaks occur.

The violation of approximately 12.2 at request 6400 for the GVOP algorithm is not worth the approximately 25\% more relative profit than the greedy algorithm when considering that the COVCE algorithm already results in a slightly higher relative profit for a violation of only factor 2. On top of that, the COVCEload algorithm only has a 9\% less relative profit than the GVOP algorithm with no capacity violation and has a higher relative profit for the first 3700 requests. In addition, when considering Figure \ref{fig12} (right), there is at least one instance of the COVCEload algorithm having a slightly higher relative profit than all instances of the GVOP algorithm. Therefore, the COVCEload has the best performance for this request sequence and benefit pattern.
Though, one has to consider that in other cases with a higher amplitude and therefore higher benefit peaks to embedding cost ratios, the relations between the relative profits might change. 

\paragraph{Peak BP}
The $\mathtt{peak}$ BP is introduced in order to examine such an extreme case with high benefit peaks. The benefit pattern consists of a sequence of 1s with every 100-th benefit being a \emph{peak}. 10, 100 and 1000 as peak values are investigated to reason about at which amplitude there is a significant difference between the COVCE and the GVOP algorithm. Furthermore, it is interesting to find out whether there is a factor at which there is no significant difference between the competitive and the non-competitive online algorithms.

At first, 1000 is considered as the peak value. As already expected, this extreme case shows significant differences in its results. To begin with, consider the acceptance ratio for this case shown in Figure \ref{fig40}. The first thing to notice is that the acceptance ratio of the COVCE algorithm is even lower than the one of the Greedy VC-ACE algorithm until approximately request 300. This is possible because the very first request is already a peak with a value of 1000. After this request the next 99 requests have a benefit of 1. So, the algorithm already experiences in a very early state, that such a high benefit can occur and therefore handles the requests much stricter. In general, the acceptance ratios of the both competitive online algorithms are dropping much faster in comparison to the previous experiments. Thus, a significant influence of the benefit difference between the requests is observable. 
\begin{figure}	
	\includegraphics[width=0.5\textwidth]{graphics/acceptance_ratio_window_fattree_peaks_1000.pdf}
	\includegraphics[width=0.5\textwidth]{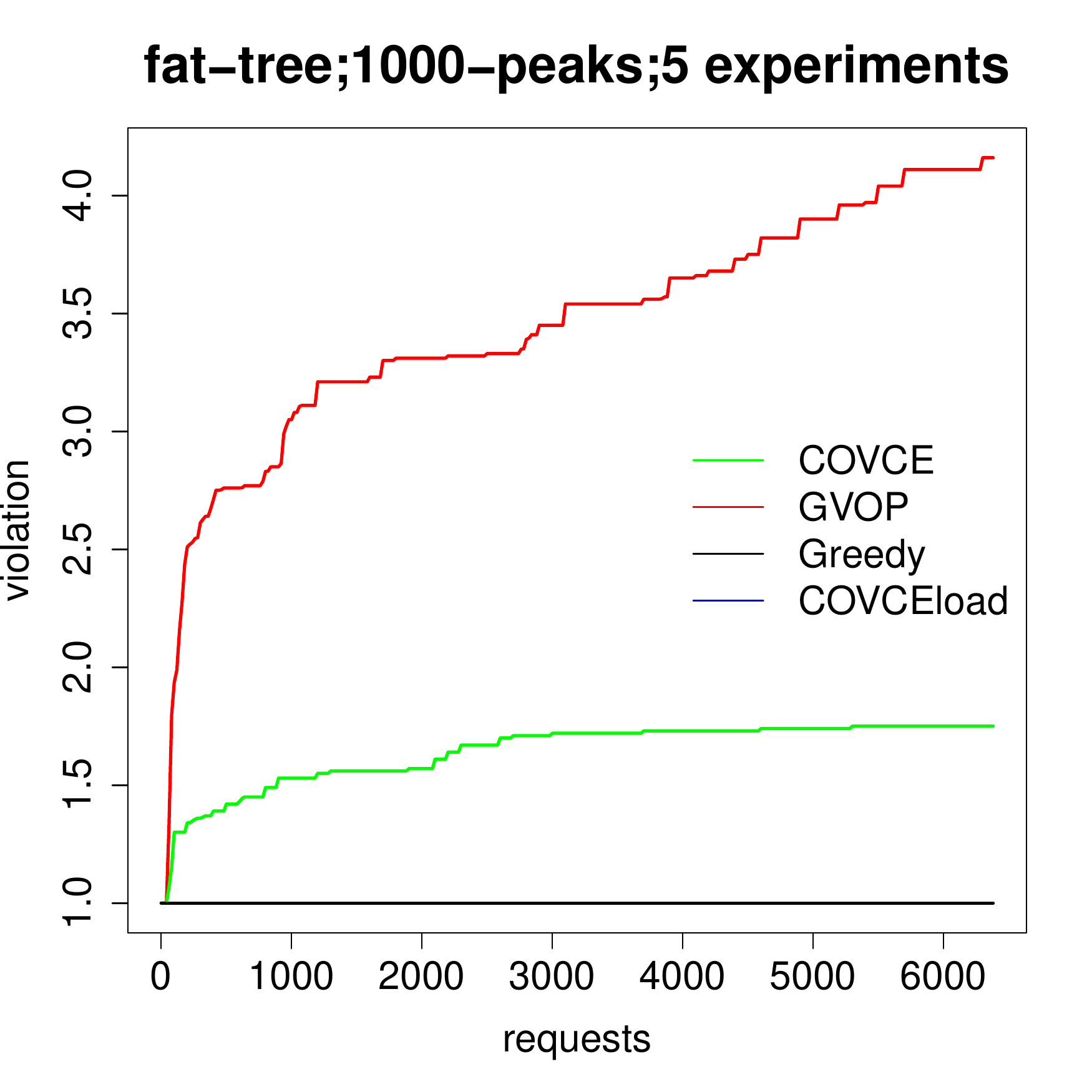}
	\caption{Depicted are the acceptance ratio (left) and violation (right) for the 1000-$\mathtt{peak}$ BP over 6400 requests on the 12-fat-tree topology. The results of the 5 experiments over the four algorithms are averaged for both metrics.}
	\label{fig40}
\end{figure}
This observation is also reflected in the violation and the relative profit. The violations of the competitive online algorithms are now less than in the previous experiments, shown in Figure \ref{fig40}.
The graphs also look more square-edged and abrupt than the previous due to mostly accepting only every 100-th request which also reflects the lower violation factor after 6400 requests in comparison to the previous experiments. This also shows up in the relatively high gradient in comparison to previous experiments. Accepting less requests induces also a lower relative profit than in the previous BPs, which is shown in Figure \ref{fig15}.
\begin{figure}
	\includegraphics[width=0.5\textwidth]{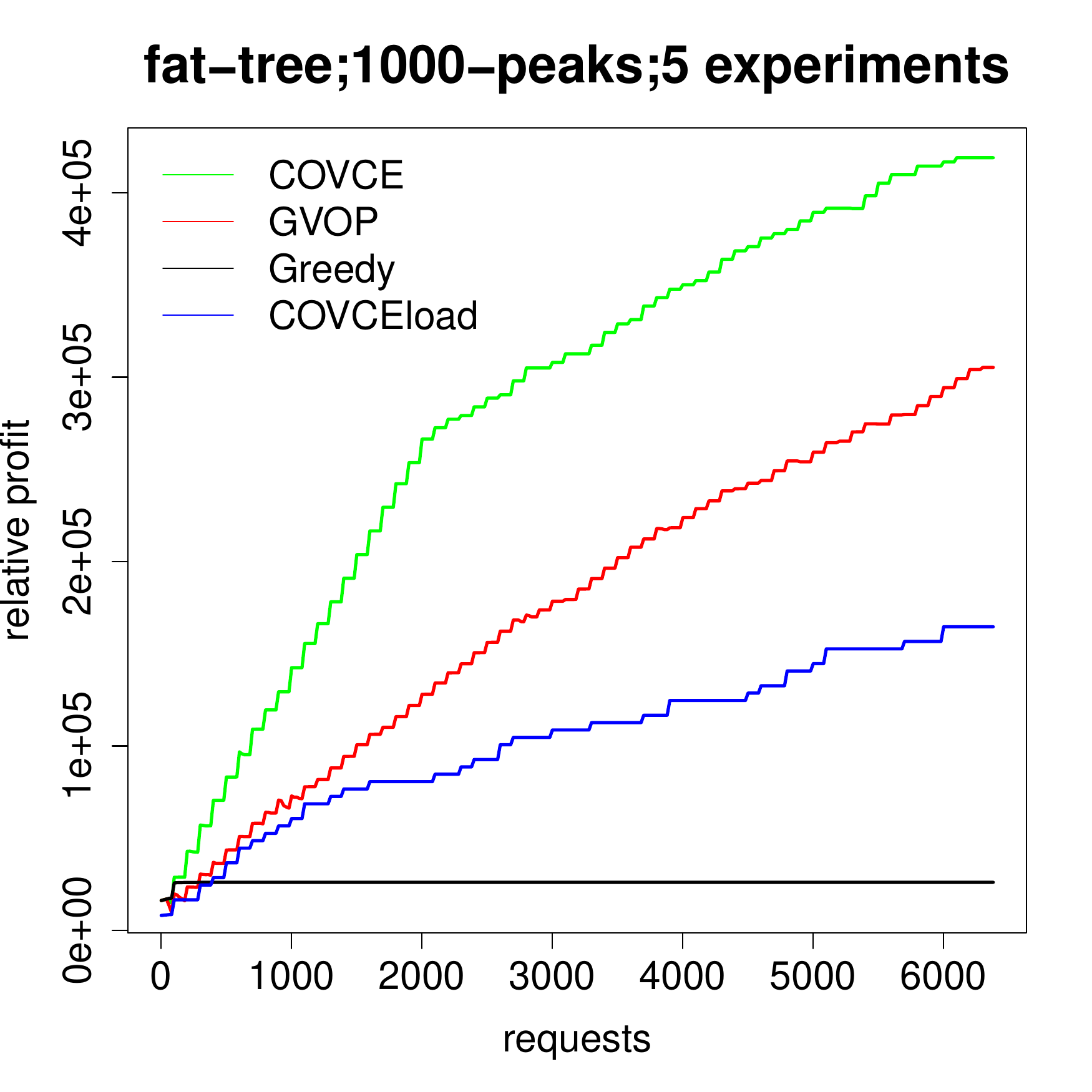}
	\includegraphics[width=0.5\textwidth]{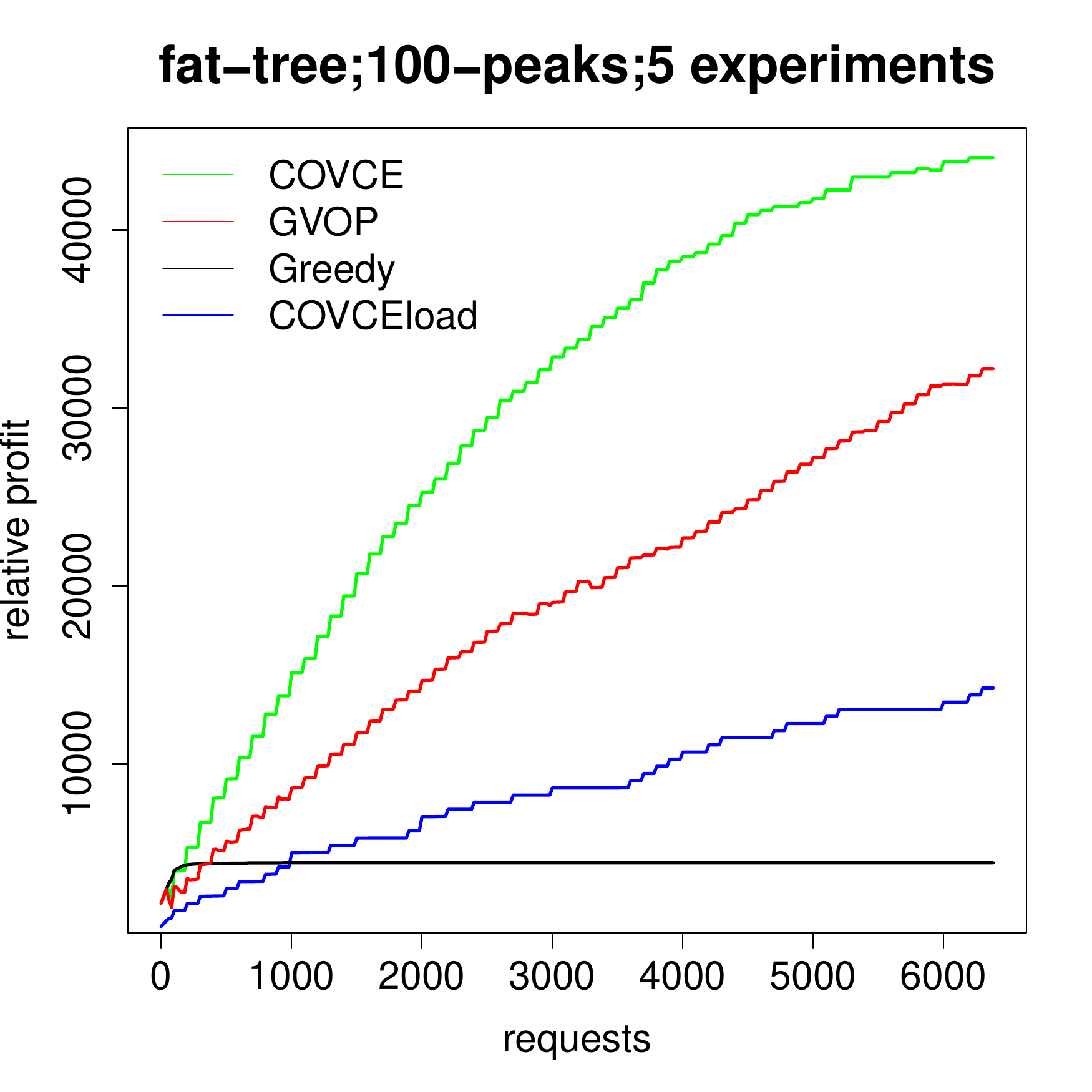}
	\caption{Depicted are the relative profits of the four algorithms for the 1000-$\mathtt{peak}$ BP (left) and the 100-$\mathtt{peak}$ BP (right). The 5 experiments are averaged and run over 6400 requests on a 12-fat-tree topology.}
	\label{fig15}
\end{figure}
Here, a clearly domination of the competitive over the non-competitive online algorithms is noticeable. The functions also have the characteristic of looking abrupt and square-edged because the most accepted requests are the ones with the 1000-peak-benefits which give every time an impulse to the relative profit function. It is though not predictable whether the GVOP algorithm will overtake the COVCE algorithm in an experiment with a longer sequence of request. 
The COVCE algorithm has the best performance and its relative profit is more than twice as high as the one of the COVCEload algorithm.

Next, a less extreme case is considered with a peak value of 100. The acceptance ratio and violation plots still show the same curve progressions as in the 1000-peak experiment and are therefore omitted. The more interesting case is now how the relative profit of the algorithms behaves. It is shown in Figure \ref{fig15}. The gap between the graphs looks similar as for the 1000-peak case. Nevertheless, the competitive algorithms still outperform the greedy and the COVCEload algorithm for the 100-peak factor. Again, it is recognizable that the COVCE algorithm's relative profit gains a lot from very early peak-benefits and high differences in between the requests' benefits.
\begin{figure}
	\includegraphics[width=0.5\textwidth]{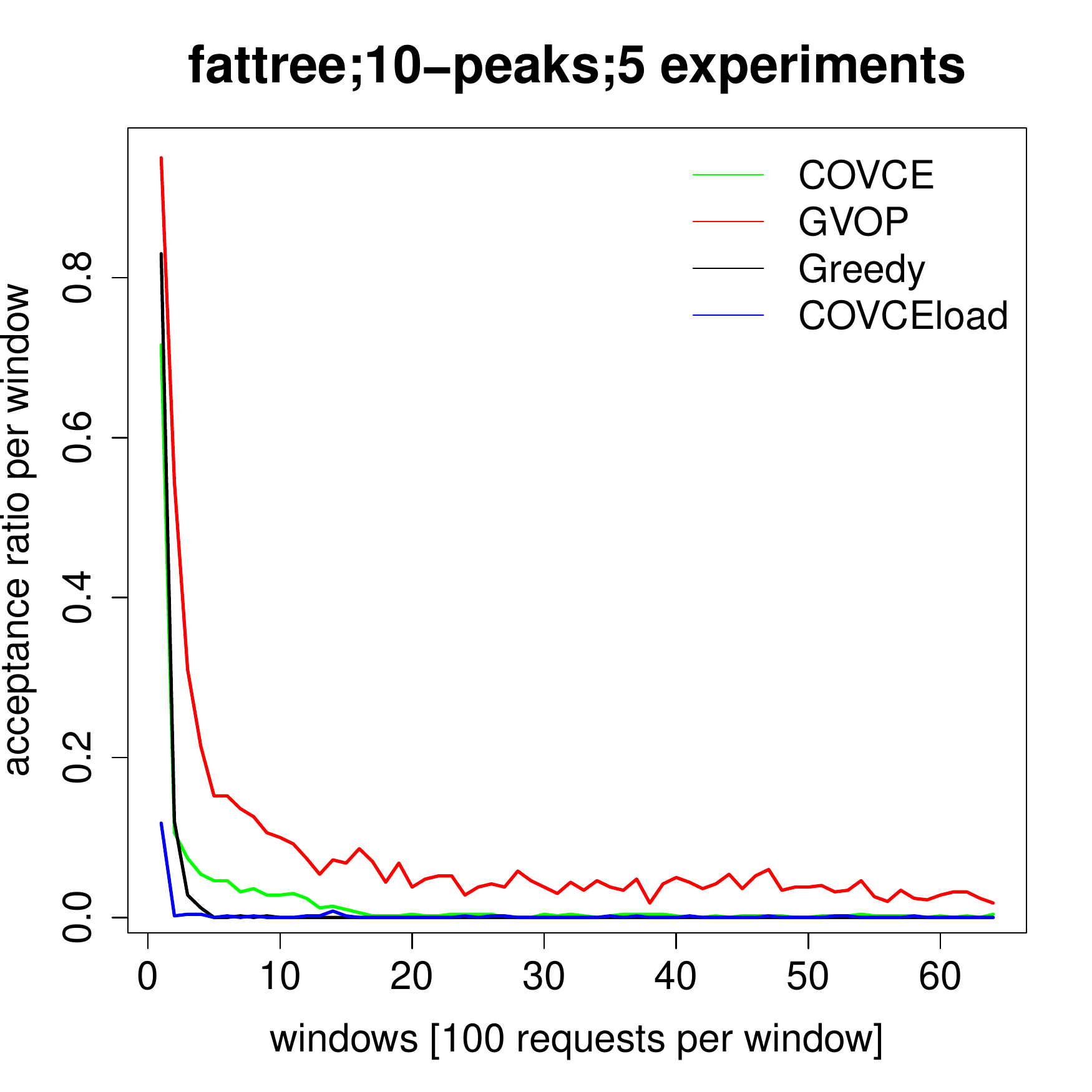}
	\includegraphics[width=0.5\textwidth]{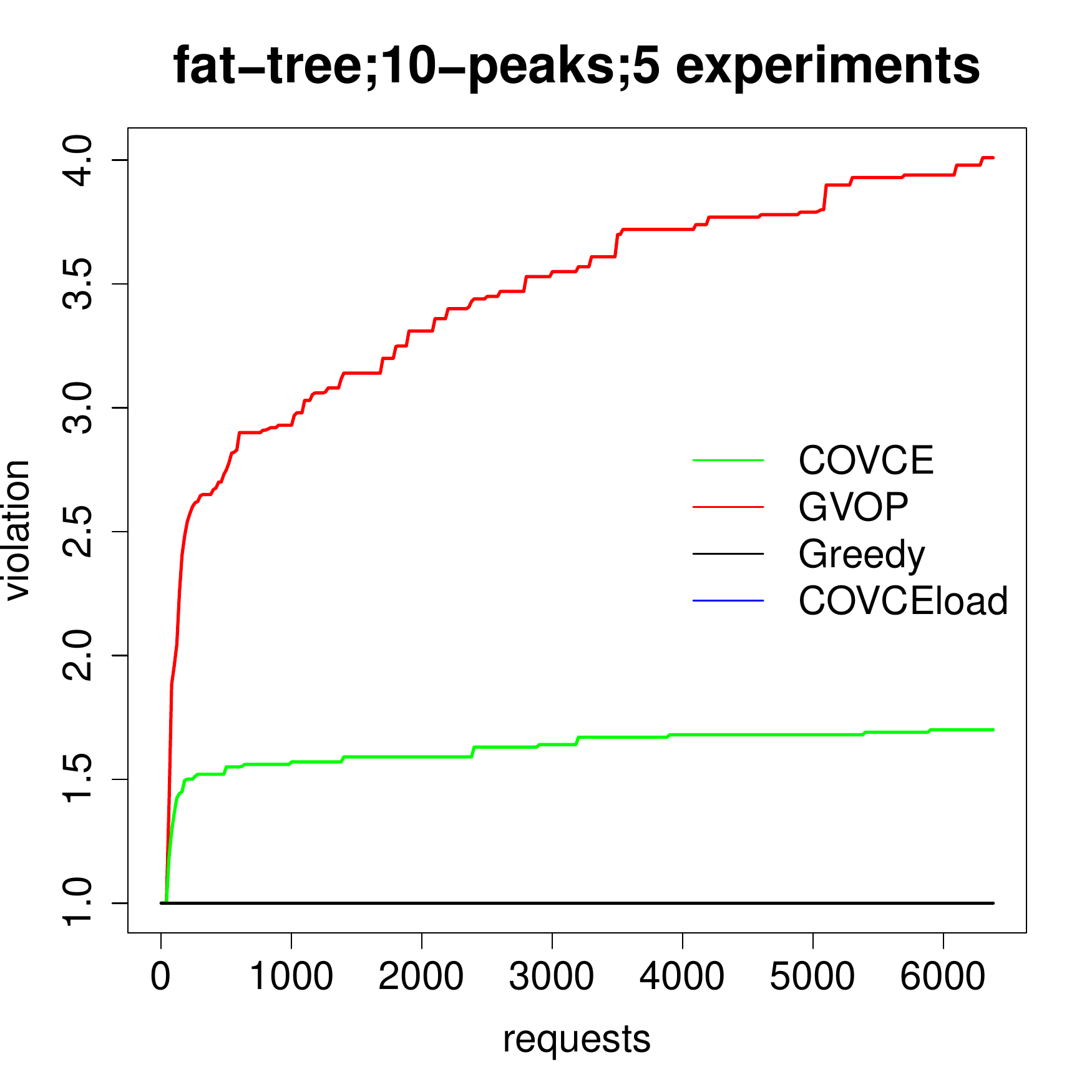}
	\caption{Depicted are the acceptance ratio (left) and violation (right) for the 10-$\mathtt{peak}$ BP over 6400 requests on the 12-fat-tree topology. The results of the 5 experiments over the four algorithms are averaged for both metrics.}
	\label{fig19}
\end{figure}

Finally, the 10-peak case is considered. The acceptance ratio is shown in Figure \ref{fig19}. The COVCE algorithm now accepts more requests in general. This is due to the much lower peak-benefit value of 10, so that the difference between the highest and lowest benefit value is barely noticeable for the COVCE algorithm. In addition, the benefit to embedding cost ratio is smaller. Accordingly, the COVCE algorithm has a lower relative profit by accepting more 1-benefit-requests. It even shows a worse relative profit than the GVOP algorithm, depicted in Figure \ref{fig20}.
\begin{figure}
	\centering
	\includegraphics[width=0.5\textwidth]{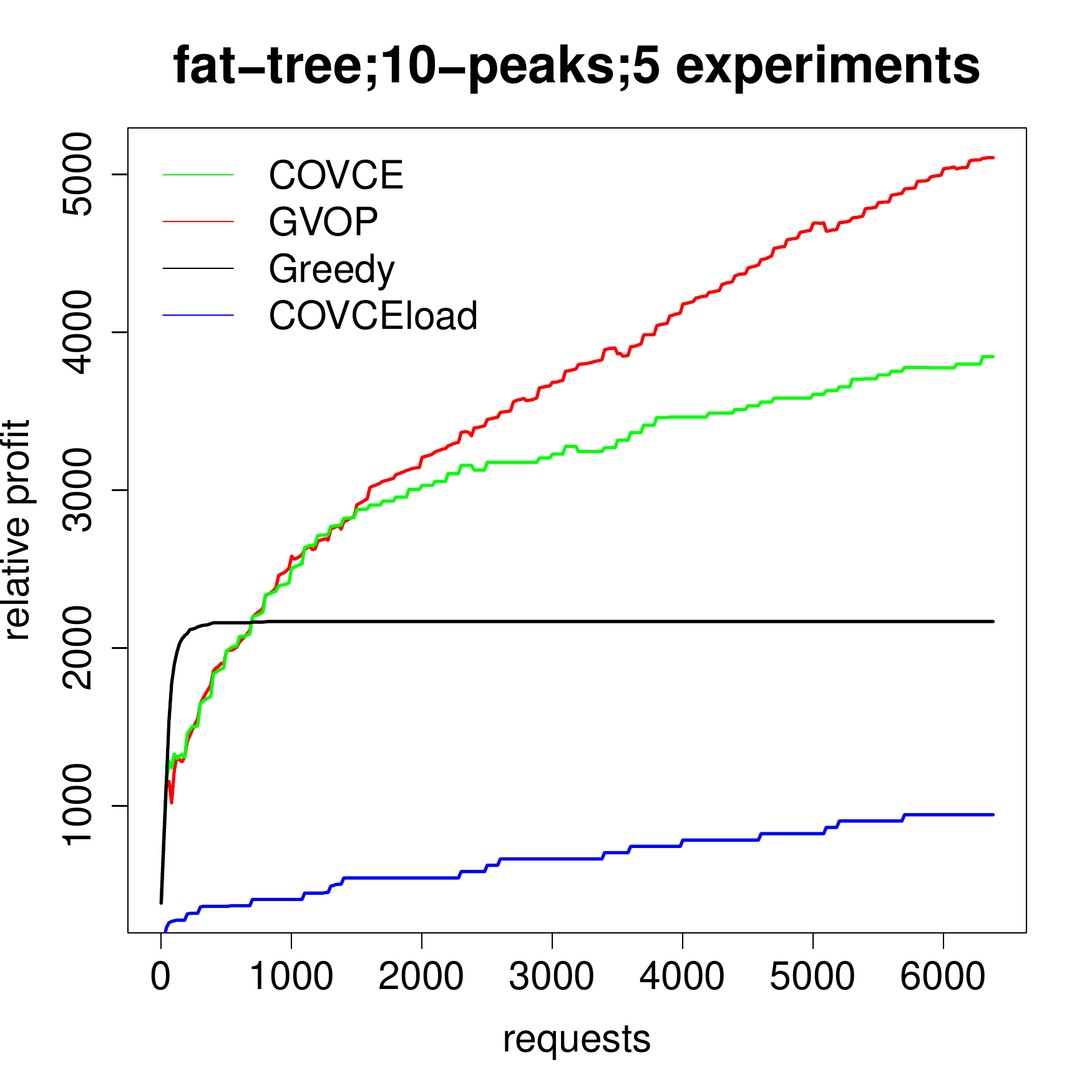}
	\caption{Depicted are the relative profits of the four algorithms for the 10-$\mathtt{peak}$ BP. The results of the 5 experiments are averaged and run on a 12-fat-tree topology for 6400 requests.}
	\label{fig20}
\end{figure}
Moreover, both competitive algorithms have now a lower relative profit difference to the Greedy VC-ACE algorithm since peaks of 10 do not make a significant difference anymore which is a similar factor to the one of the $\mathtt{wave}$ experiments (factor 7). The gap between the non-competitive and competitive algorithms may be higher in the 10-$\mathtt{peaks}$ scenario because here more 1-benefit requests are accepted than wave-trough-benefit requests in the $\mathtt{wave}$ scenario. The COVCEload algorithm has the worst performance. Due to its rejection functionality it rarely accepts a request in relation to the other algorithms, which is reflected in Figure \ref{fig19}, showing that its acceptance ratio is only approximately 12\% in the very first window.

Considering now the violation in Figure \ref{fig19}, the violation factor of the GVOP algorithm grows faster in comparison to the 100 and 1000 $\mathtt{peaks}$ experiments due to accepting more 1-benefit requests since the amplitude is that low that the competitiveness loses its abilities of rejecting relatively poor-benefit requests.

To sum up, at peak factor 1000 and 100 the COVCE algorithm shows its best performance over all the BPs and at peak factor 10 the GVOP algorithm starts to overtake the COVCE algorithm again in terms of the relative profit. The COVCE algorithm even reaches the double of the Greedy VC-ACE algorithm's relative profit at the 10-peaks experiment. Since the violation and relative profit curves of the COVCE algorithm are still increasing, one can expect the COVCE algorithm's relative profit to become even more than the double of the Greedy one's (Figure \ref{fig20}). So, doubling the resources results in a more than four times higher benefit which may even pay off with superlinear capacity cost multiplications. In conclusion the competitive online algorithms outperform the non-competitive ones for this BP, especially the COVCE algorithm havint the best performance for all the three cases. 

\paragraph{Another Metric: Average Relative Profit}
The relative profit function, that was used to analyze the results of the experiments, involved the maximum violation over all nodes and edges until request $i$. In other words, the violation described the factor by which the provider would have to multiply its resources in order to have no violation over the whole substrate until request $i$. Another perspective on the capacity violation is provided with the \emph{average} violation. It describes the average capacity oversubscription over all nodes and edges until request $i$ and will induce less capacity multiplication costs for the provider since a single node/edge with the maximum oversubscription does not influence the average violation factor as much as the violation factor from the previous section. The average violation and the ARP are formally defined in the metrics Section \ref{metrics}, Equations \ref{eq35} and \ref{eq36}. The ARP is shown in the following plots for the $\mathtt{random}$, $\mathtt{reqsize}$, $\mathtt{vcesize}$ and $\mathtt{wave}$ BP scenarios in order to show improvements of the competitive online algorithms.

Consider at first Figure \ref{fig29} 
\begin{figure}
	\includegraphics[width=0.5\textwidth]{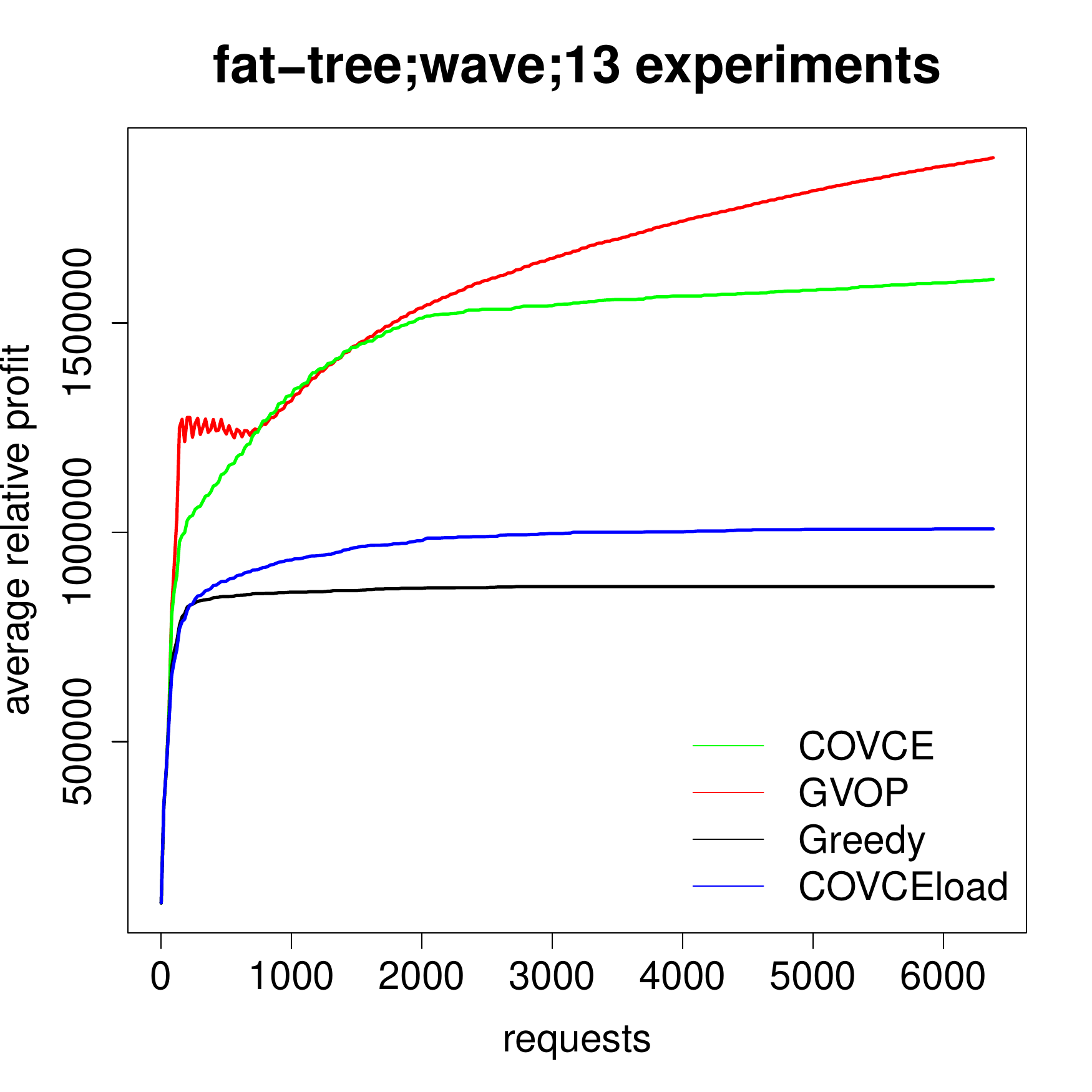}
	\includegraphics[width=0.5\textwidth]{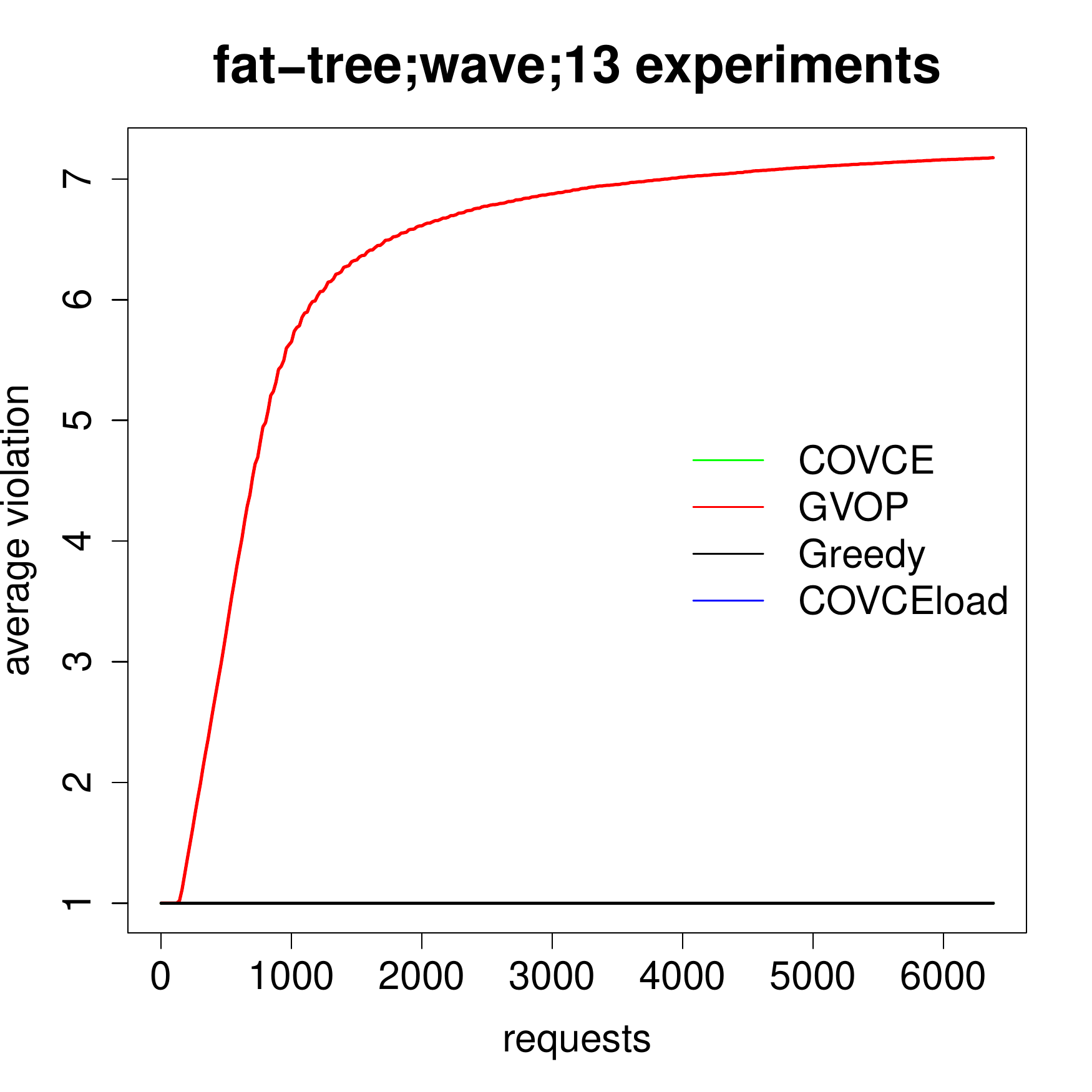}
	\caption{Depicted are the ARP for the $\mathtt{wave}$ BP (left) and the corresponding average violation (right) on a 12-fat-tree topology over 6400 requests. The results of the 15 experiments are averaged.}
	\label{fig29}
\end{figure}
showing the averaged ARP of 13 experiments for the $\mathtt{wave}$ BP and the corresponding average violation in Figure \ref{fig29} (right). Note that the average violation of the COVCE algorithm is exactly 1 as the non competitive algorithms' one. As one can expect, the ARPs of the non-competitive algorithms are the same as before since they do not violate any resources, so that the violation factor is always 1 for both violation definitions. Considering the factor at the end of the request sequence, the ARP of the COVCE algorithm is by factor 1.8 higher than its corresponding relative profit in the $\mathtt{wave}$ BP shown in Figure \ref{fig12}, for the GVOP algorithm it is the factor 1.7.  

Another point is that the ARP of the GVOP algorithm is by factor 2.5 greater than the one of the greedy algorithm (2.0 for COVCEload) and only by factor 1.2 greater than the one of the COVCE algorithm. Considering that the theoretical maximum resource capacity violation of the COVCE algorithm is 2 and that it has a 1.8 times higher ARP than the greedy algorithm (1.6 for the COVCEload), the COVCE algorithm has the best performance out of the four algorithms since the average violation of the COVCE algorithm is 1 and for the GVOP algorithm it is approximately 7.

A similar but also interesting ARP curve progression shows up for the $\mathtt{random}$ BP which is illustrated in Figure \ref{fig30} together with its average violation function in Figure \ref{fig30}.
\begin{figure}
	\includegraphics[width=0.5\textwidth]{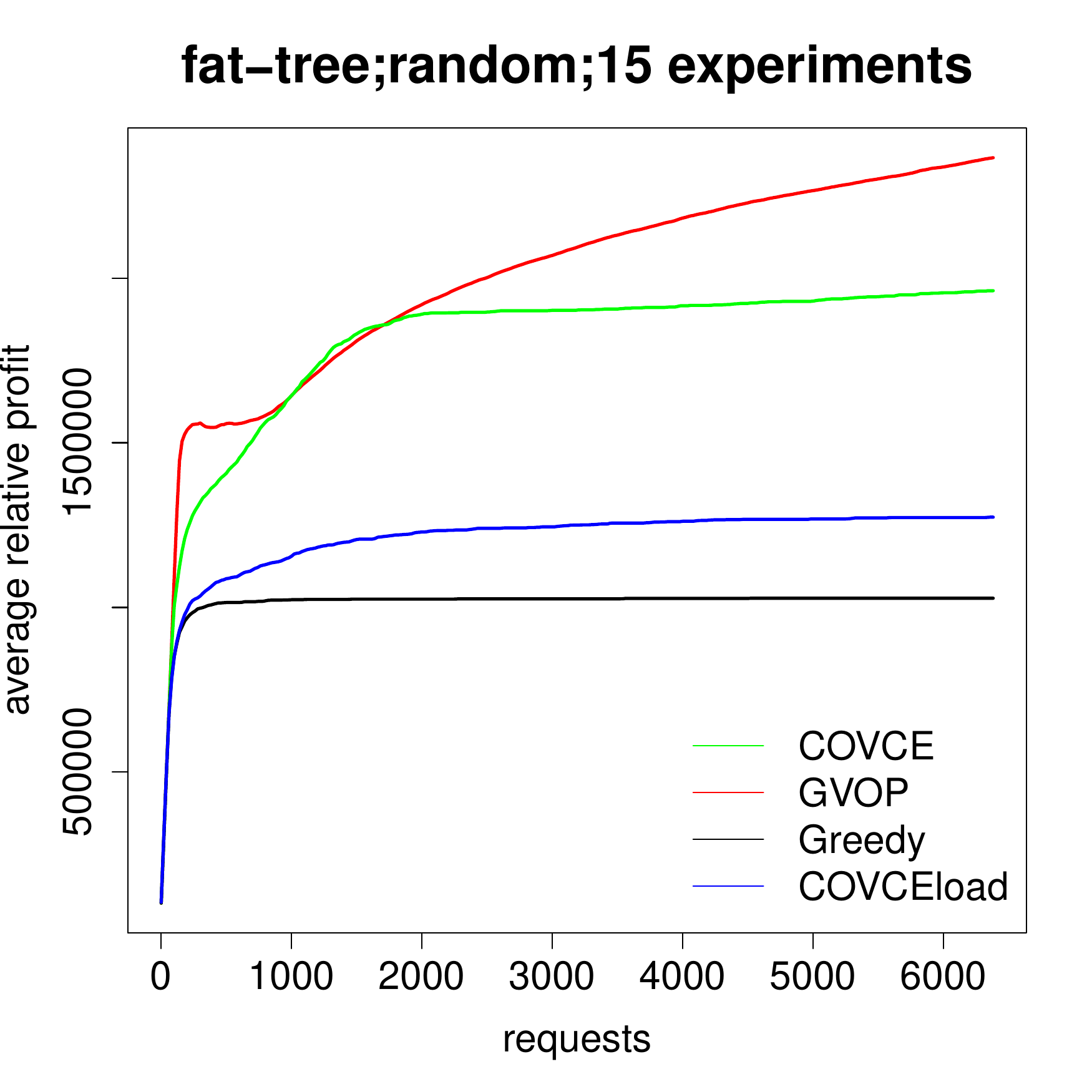}
	\includegraphics[width=0.5\textwidth]{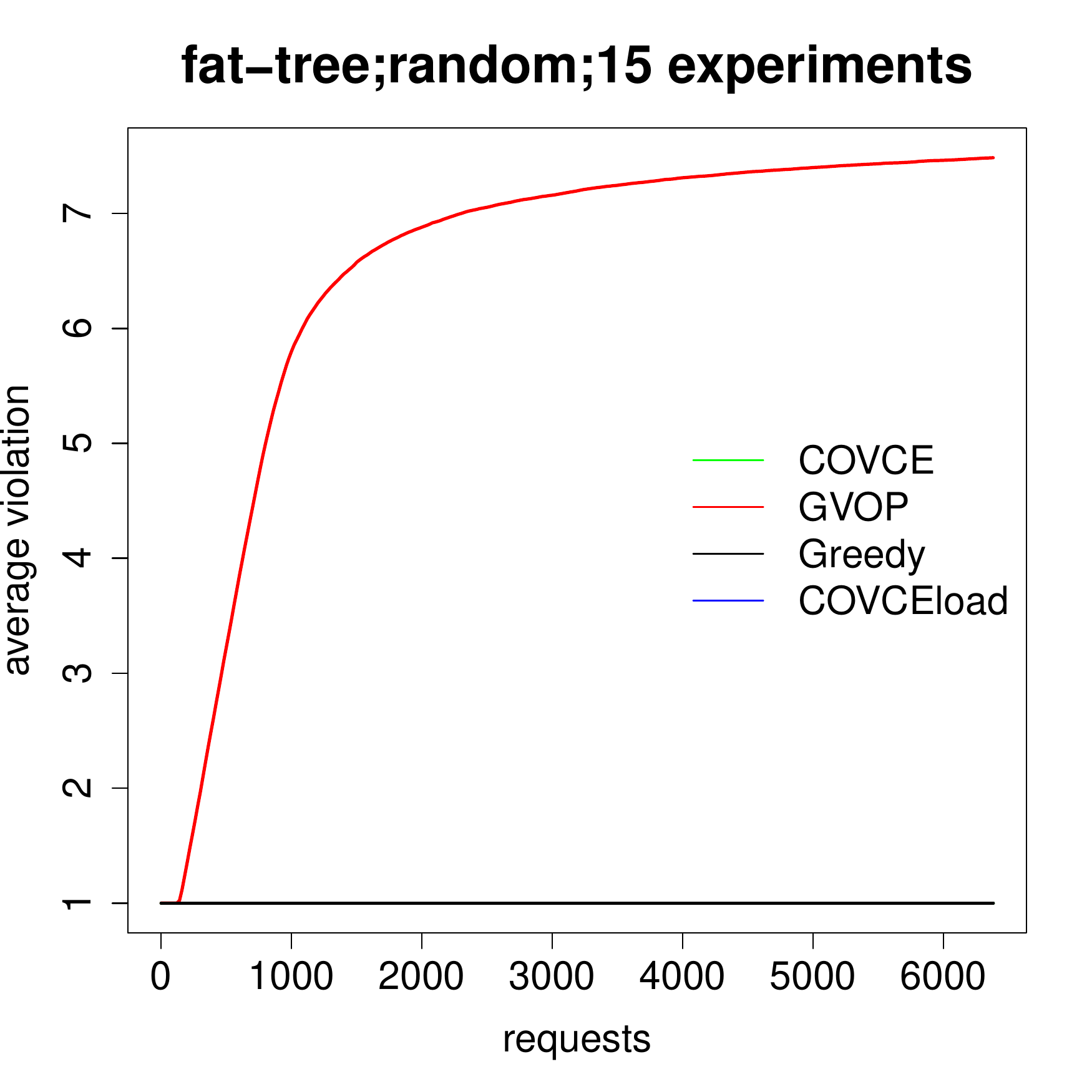}
	\caption{Depicted are the ARP for the $\mathtt{random}$ BP (left) and the corresponding average violation (right) on a 12-fat-tree topology over 6400 requests. The results of the 15 experiments are averaged.}
	\label{fig30}
\end{figure}
Since the benefit comes from a bigger interval, the COVCE algorithm even has a 1.9 times greater ARP than the greedy algorithm at request 6400 (factor 2.3 for the GVOP algorithm). Again, the COVCE algorithm has no capacity violation in the average violation case. Thus, it also outperforms the COVCEload algorithm in this scenario for this metric. Since the GVOP algorithm has a violation factor of approximately 7 for a 20\% higher ARP, one can conclude that the COVCE algorithm has the best performance.

Last but not least, the more realistic cases for the relative profit are re-considered. In Figure \ref{fig31} the $\mathtt{reqsize}$ BP results for the ARP and the average violation in Figure \ref{fig31} are shown.
\begin{figure}
	\includegraphics[width=0.5\textwidth]{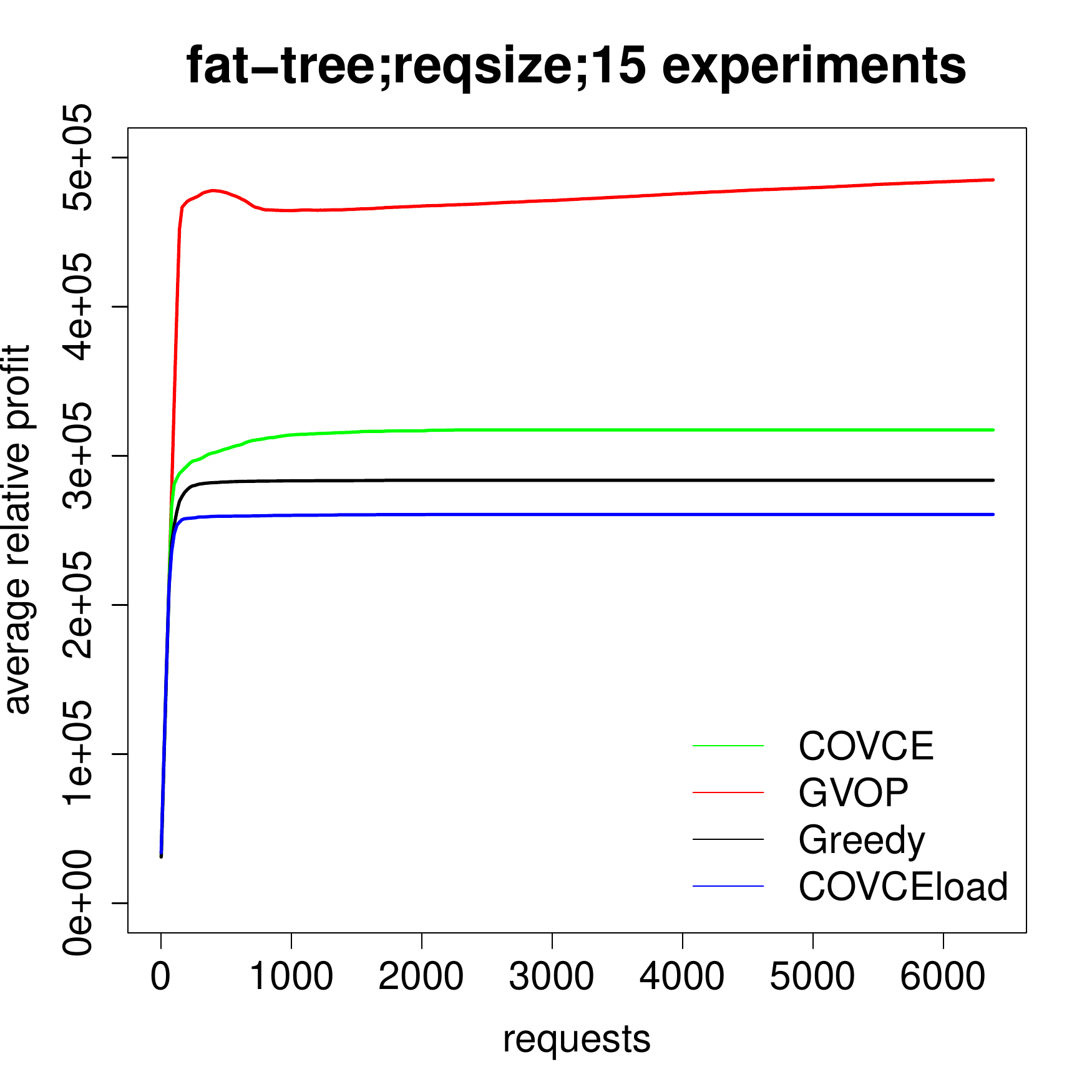}
	\includegraphics[width=0.5\textwidth]{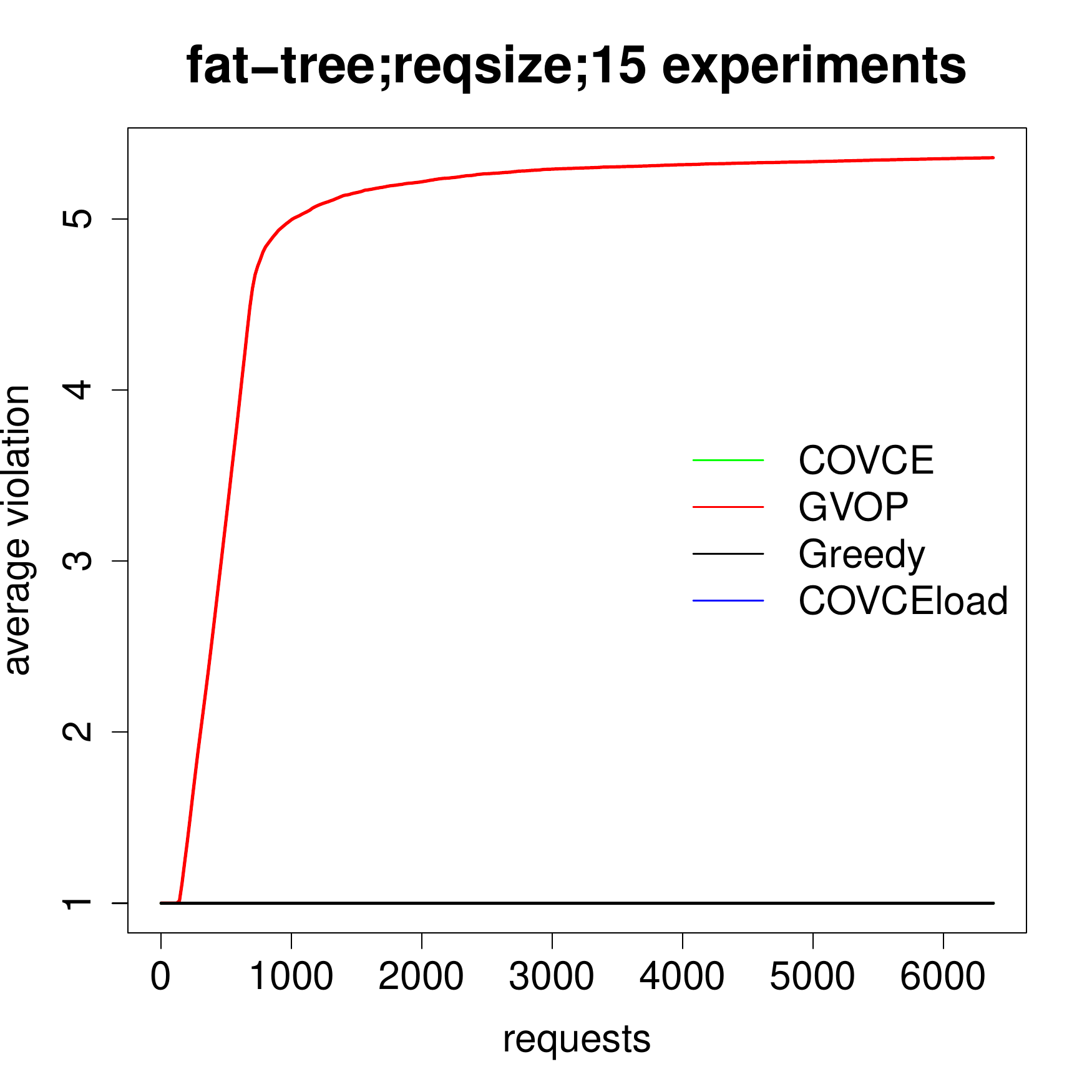}
	\caption{Depicted are the ARP for the $\mathtt{reqsize}$ BP (left) and the corresponding average violation (right) on a 12-fat-tree topology over 6400 requests. The results of the 15 experiments are averaged.}
	\label{fig31}
\end{figure}
Observe that the difference between the ARP of the COVCE and the Greedy VC-ACE algorithm is much smaller than for the other two cases. This shows again that the COVCE algorithm depends on high benefit to embedding cost ratio. The COVCEload algorithm's performance is worse than the Greedy VC-ACE algorithm. The violation factor of the GVOP algorithm is only 5 because it accepts less requests since there are less high-beneficial and simultaneously low-cost requests in comparison to the other two BPs.

Combining the two graphs in Figure \ref{fig31}, the COVCE algorithm has an average violation factor of 1 and therefore outperforms the Greedy VC-ACE and the COVCEload algorithm because it has a higher ARP. 
It is though arguable if the violation factor of 5 for the GVOP algorithm is worth the approximately 1.55 times higher ARP than the COVCE algorithm.

Lastly, the $\mathtt{vcesize}$ BP is re-visited. Its corresponding ARP and violation are shown in Figure \ref{fig60}.
\begin{figure}
	\includegraphics[width=0.5\textwidth]{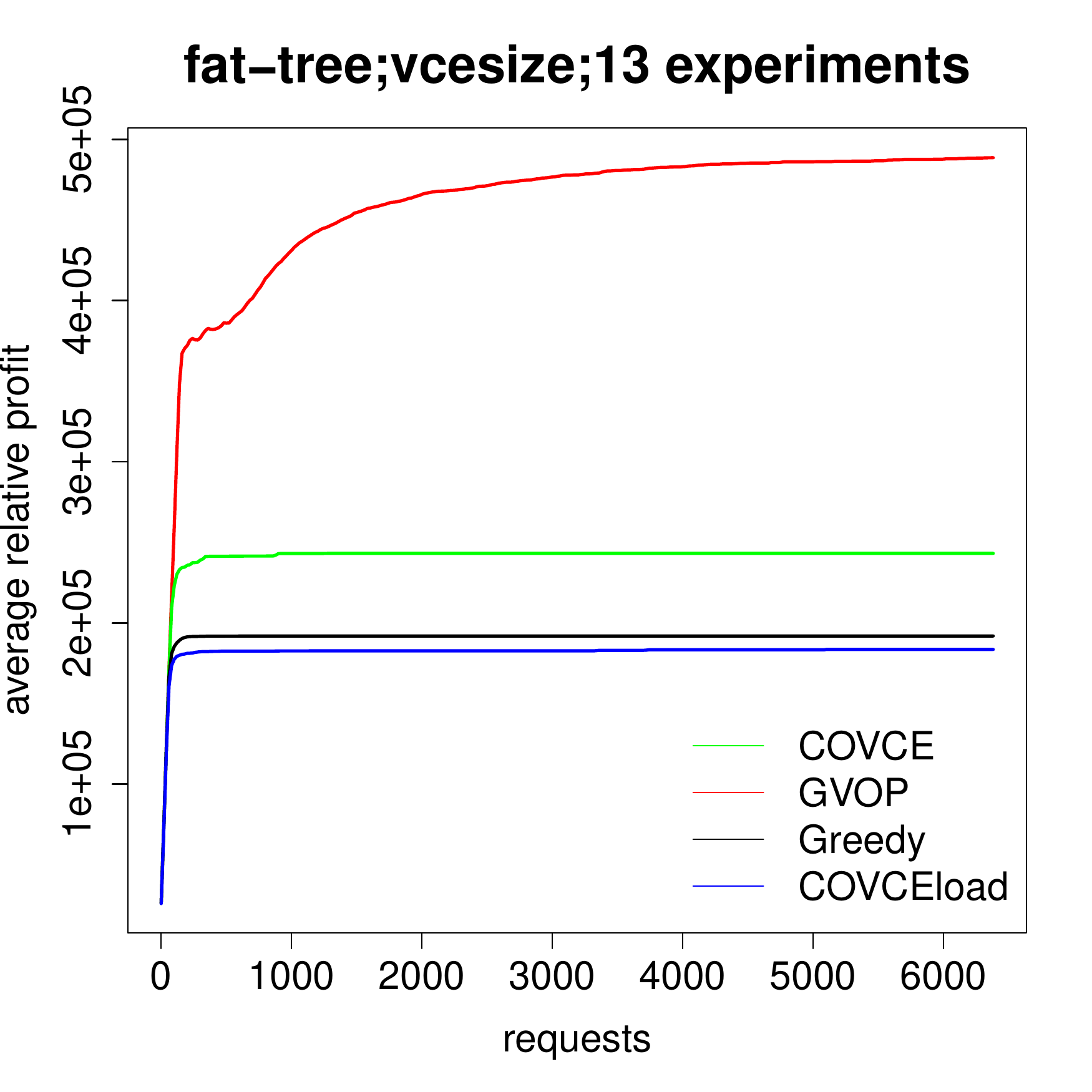}
	\includegraphics[width=0.5\textwidth]{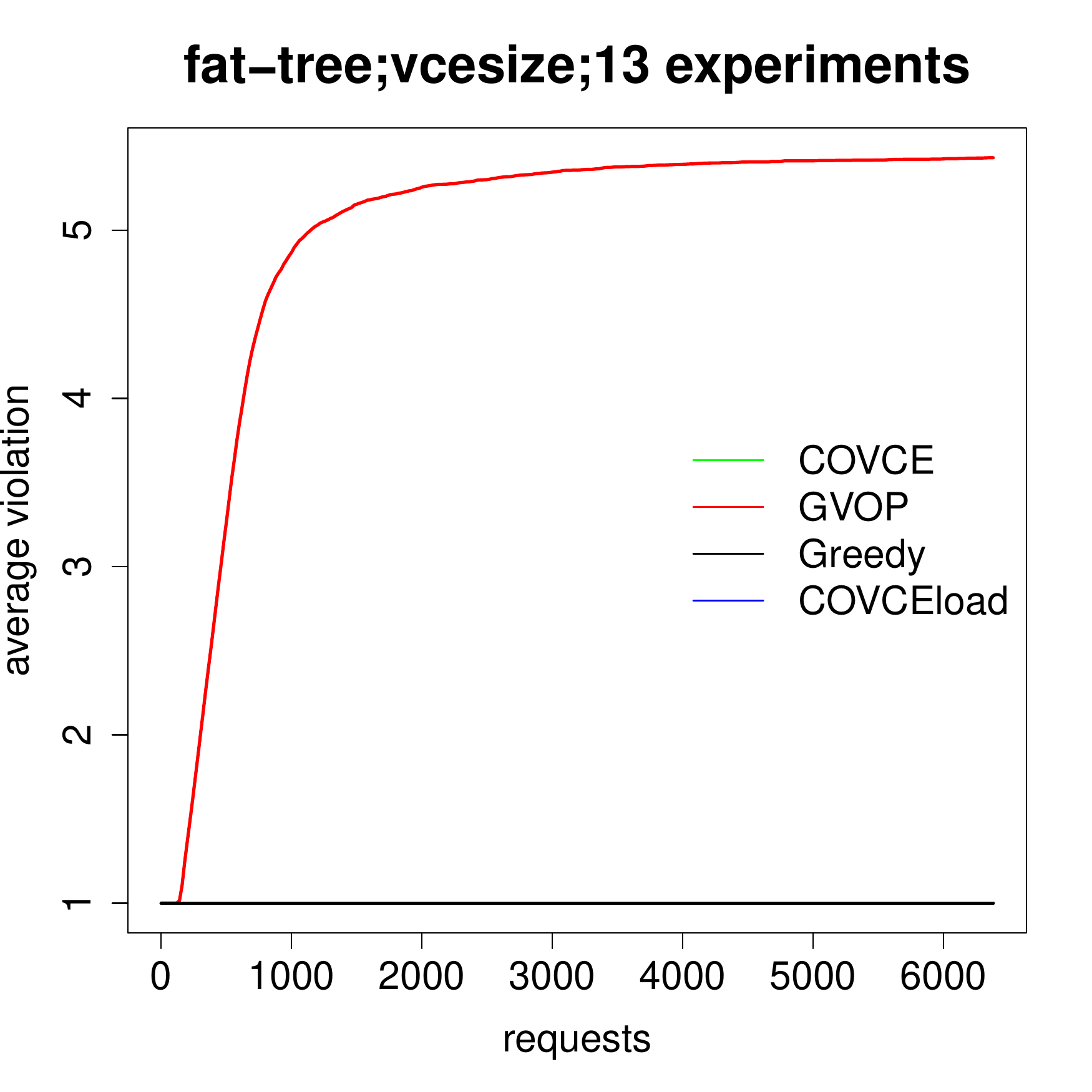}
	\caption{Depicted are the ARP for the $\mathtt{vcesize}$ BP (left) and the corresponding average violation (right) on a 12-fat-tree topology over 6400 requests. The results of the 15 experiments are averaged.}
	\label{fig60}
\end{figure}
Similarly to the $\mathtt{reqsize}$ BP, the GVOP algorithm has a relatively high distance to the other algorithms. Its ARP is nearly twice as high as the COVCE algorithm's one at request 6400. However, it shows a violation factor of 5 so it is arguable whether doubling the ARP is worth a 5 times average violation. The COVCEload and Greedy VC-ACE algorithm are not deeper considered since they perform worse than the COVCE algorithm because it has no capacity violation on average.

The result of this briefly discussed alternative to the relative profit is that the average violation of the competitive algorithms is not close to the previously defined violation since in all three BP cases the ARP is approximately 80\% higher than for the corresponding relative profit. For the shown cases, the COVCE algorithm has the overall best performance, As a result, dependent on the metric that is chosen, different interpretations of the data are possible.

\subsubsection{Results on MDCube Topology}  
In this section, the results of the experiments with an MDCube topology are discussed. The results are briefly presented because they do not provide additional information to the results of the fat-tree experiments.
\begin{figure}[h!]
	\includegraphics[width=0.5\textwidth]{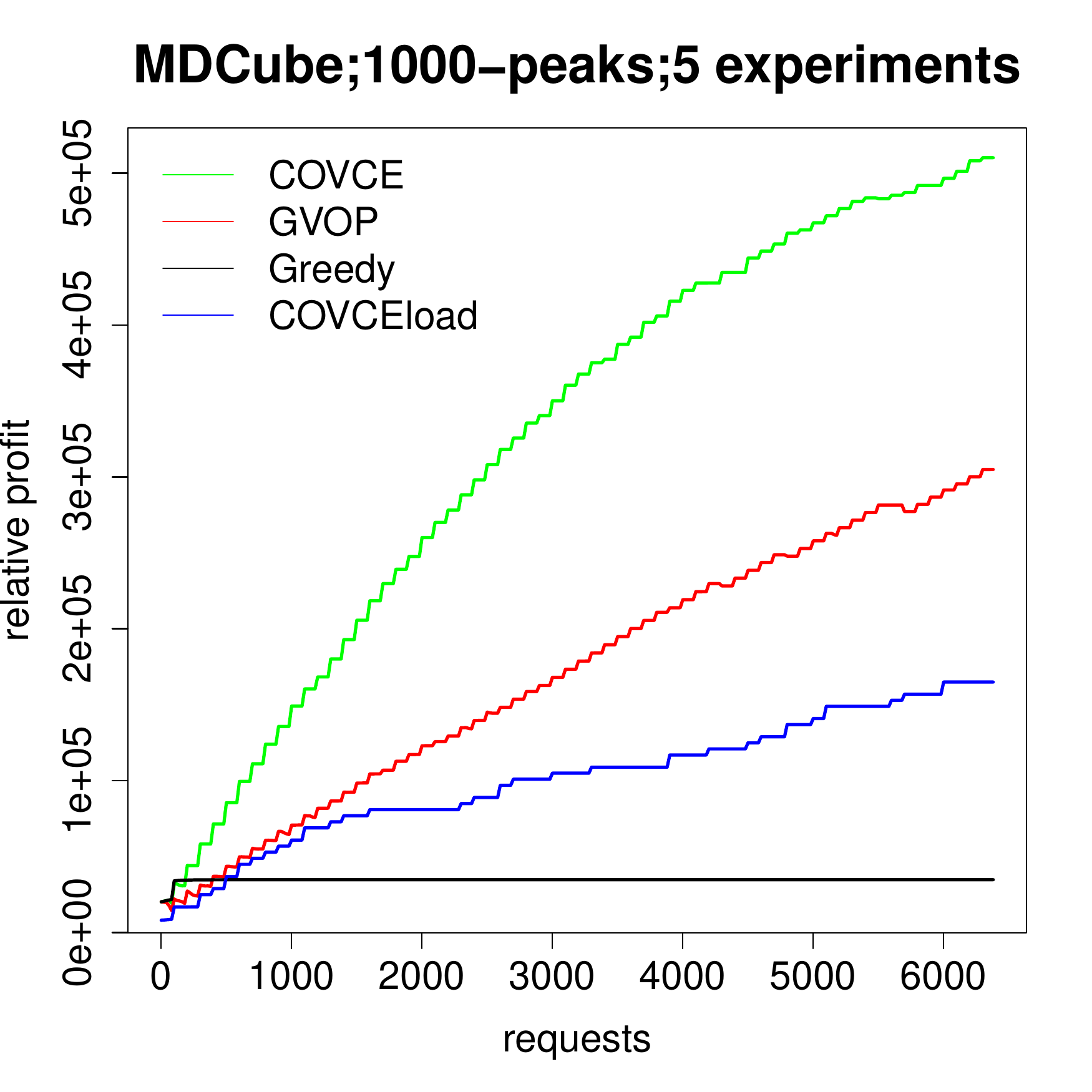}
	\includegraphics[width=0.5\textwidth]{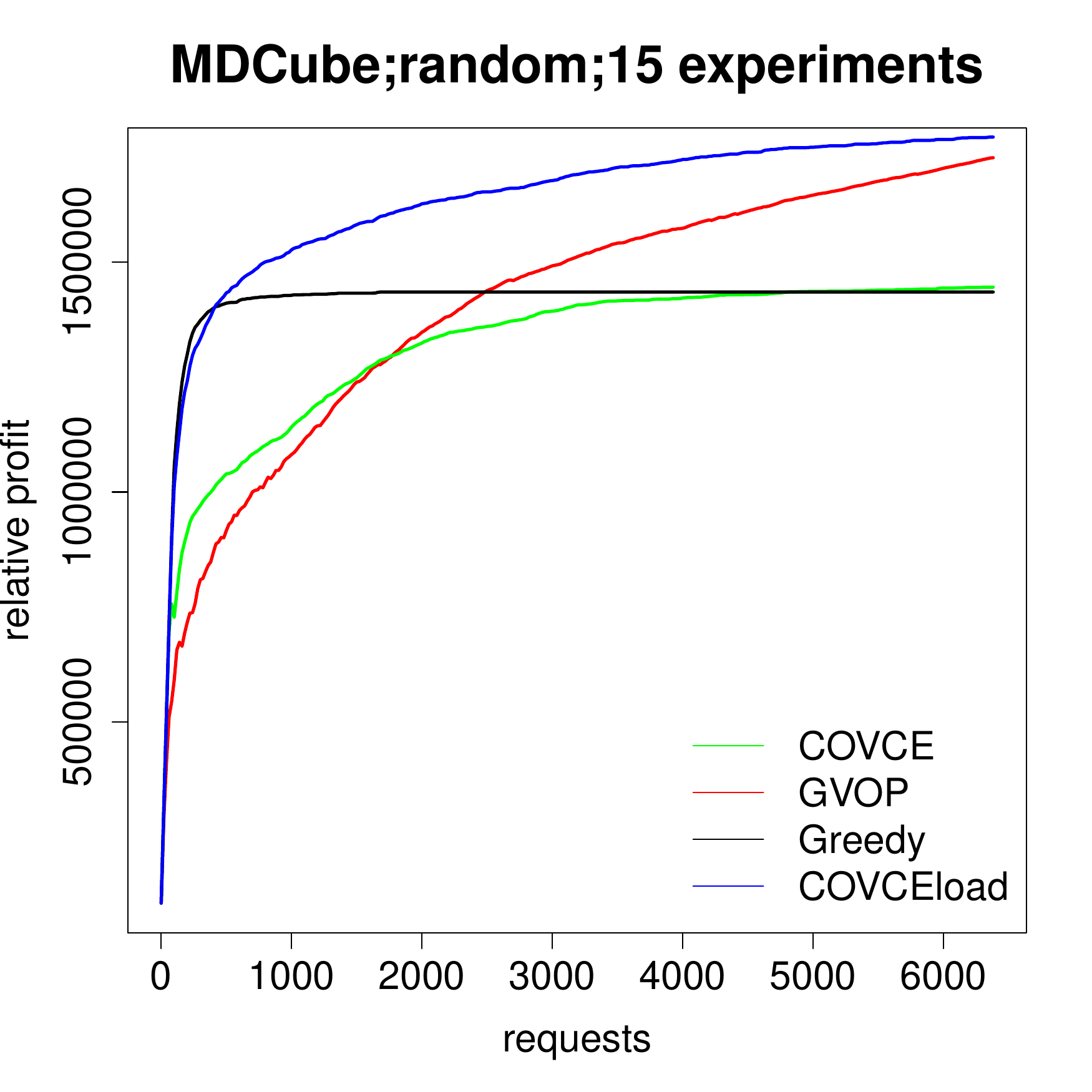}
	\caption{Depicted are the results for the relative profits of the 1000-$\mathtt{peaks}$ BP (left) and the $\mathtt{random}$ BP (right). The former is averaged over 5 and the latter over 15 experiments for the four algorithms on the MDCube topology over 6400 requests.}
	\label{fig46}
\end{figure}
\begin{figure}[h!]
	\includegraphics[width=0.5\textwidth]{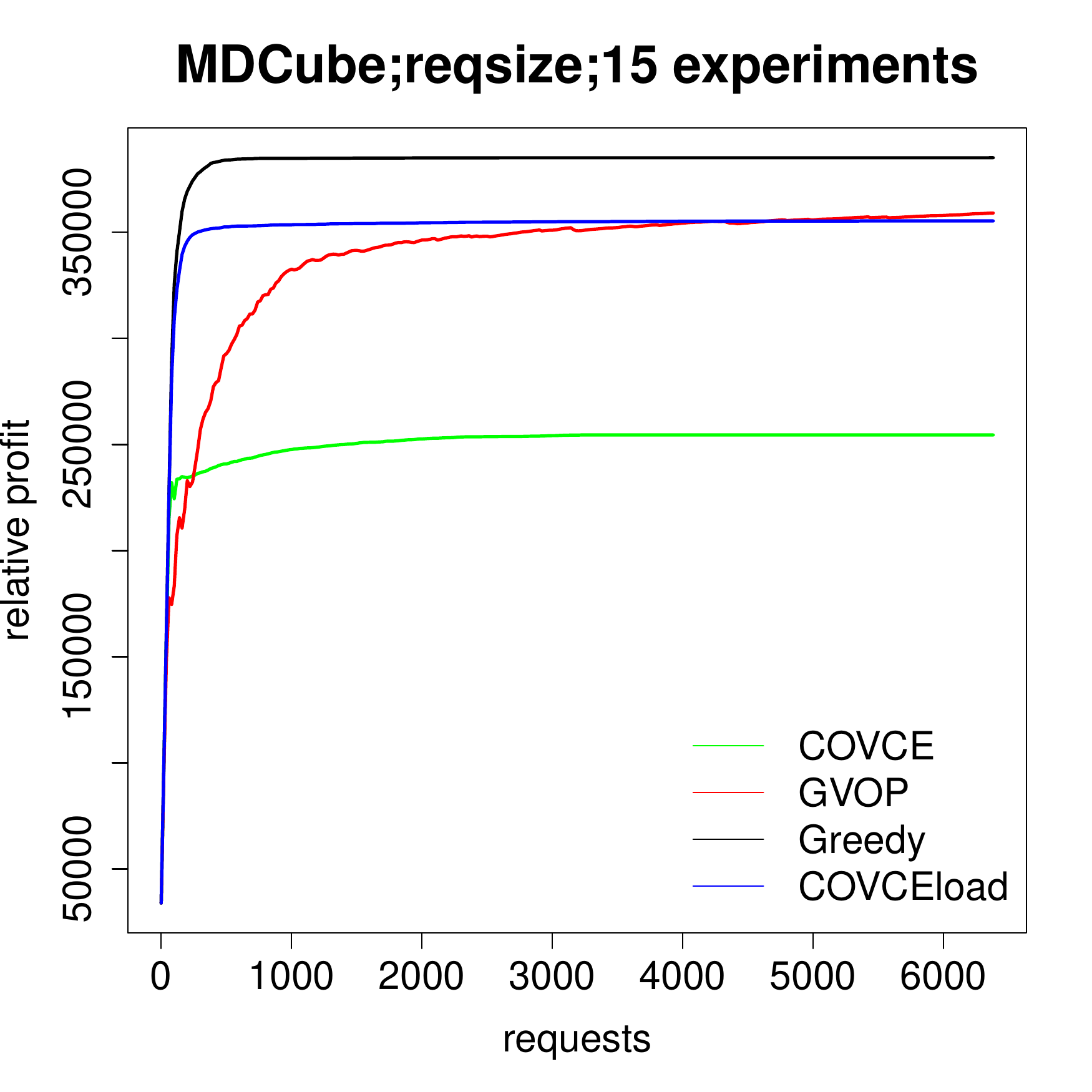}
	\includegraphics[width=0.5\textwidth]{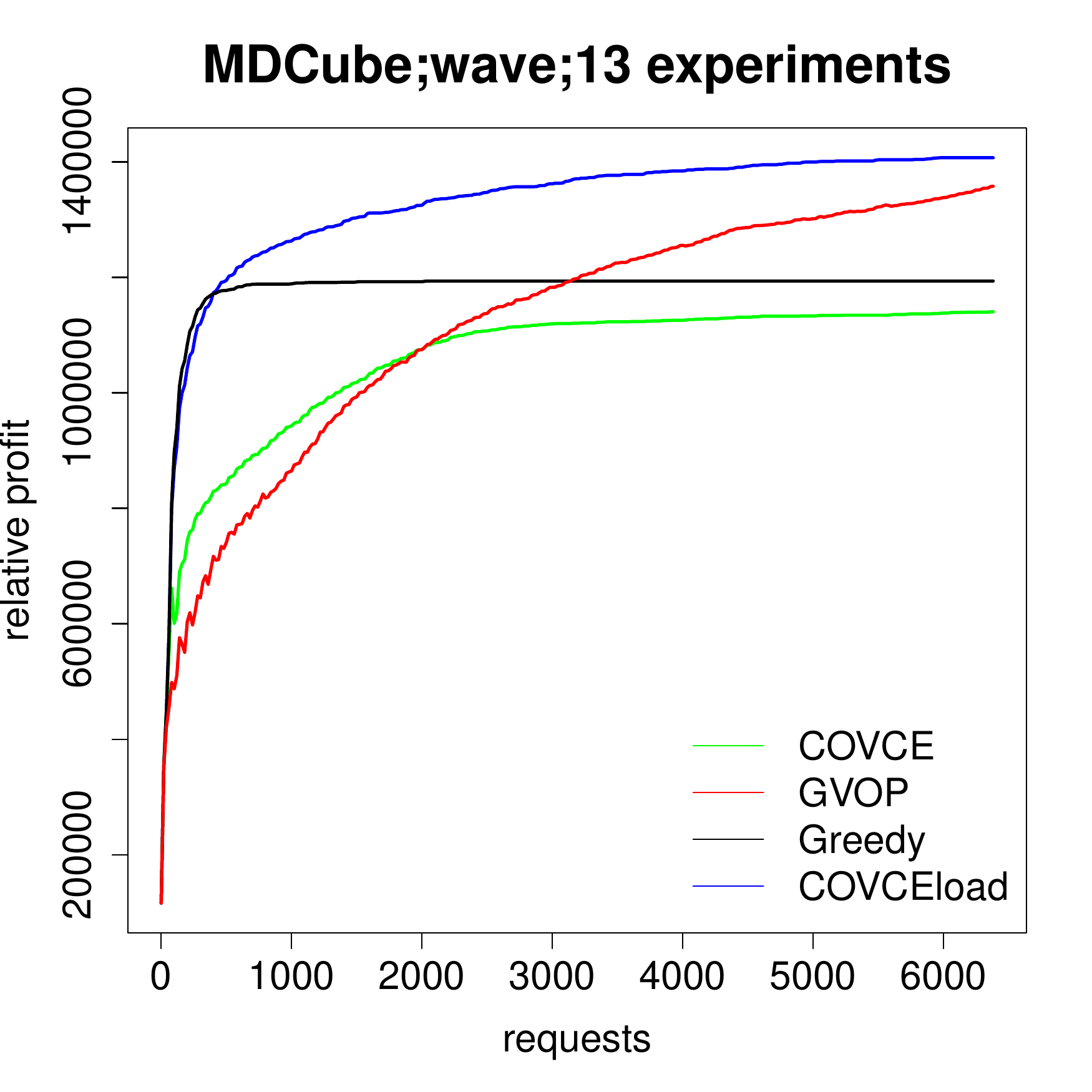}
	\caption{Depicted are the results for the relative profits of the 1000-$\mathtt{reqsize}$ BP (left) and the $\mathtt{wave}$ BP (right). The former is averaged over 15 and the latter over 13 experiments for the four algorithms on the MDCube topology over 6400 requests.}
	\label{fig48}
\end{figure}
\begin{figure}[h!]
	\includegraphics[width=0.5\textwidth]{graphics/acceptance_ratio_window_mdcube_wave.pdf}
	\includegraphics[width=0.5\textwidth]{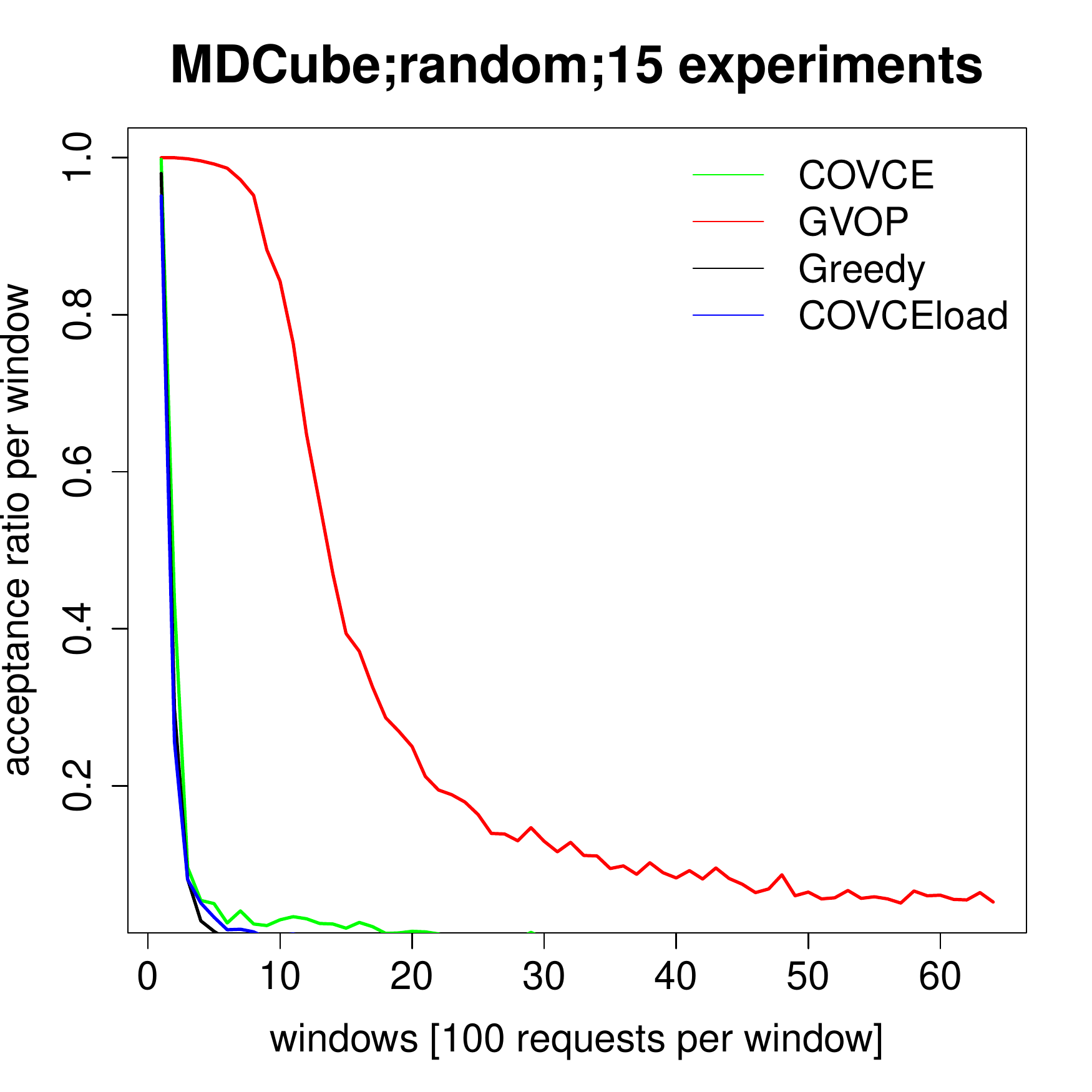}
	\caption{Depicted are the results for the acceptance ratios of the $\mathtt{wave}$ BP (left) and the $\mathtt{random}$ BP (right). The former is averaged over 13 and the latter over 15 experiments for the four algorithms on the MDCube topology over 6400 requests.}
	\label{fig50}
\end{figure}
\begin{figure}[h!]
	\includegraphics[width=0.5\textwidth]{graphics/acceptance_ratio_window_mdcube_reqsize.pdf}
	\includegraphics[width=0.5\textwidth]{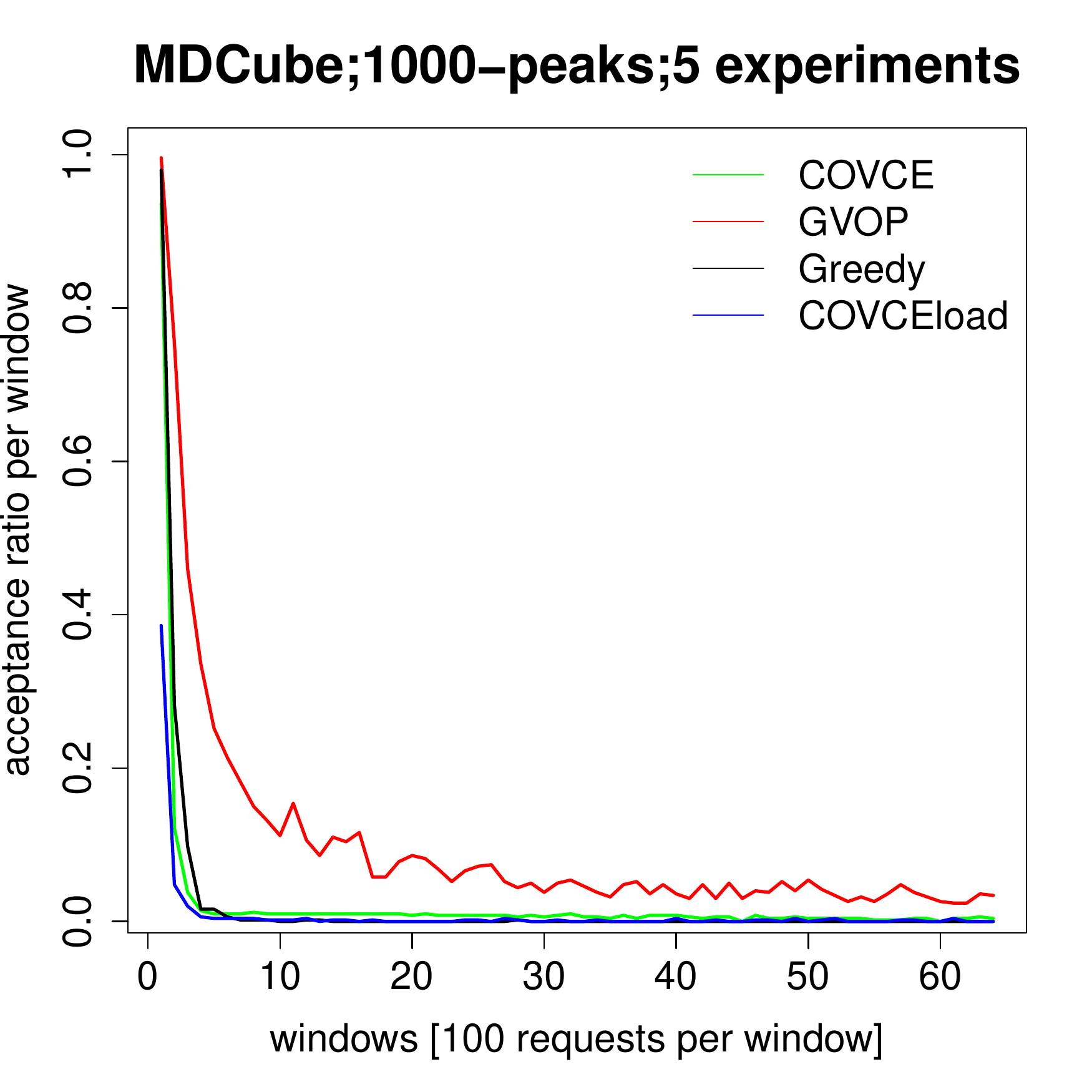}
	\caption{Depicted are the results for the acceptance ratios of the $\mathtt{reqsize}$ BP (left) and the 1000-$\mathtt{peaks}$ BP (right). The former is averaged over 15 and the latter over 5 experiments for the four algorithms  on the MDCube topology over 6400 requests.}
	\label{fig52}
\end{figure}
\begin{figure}[h!]
	\includegraphics[width=0.5\textwidth]{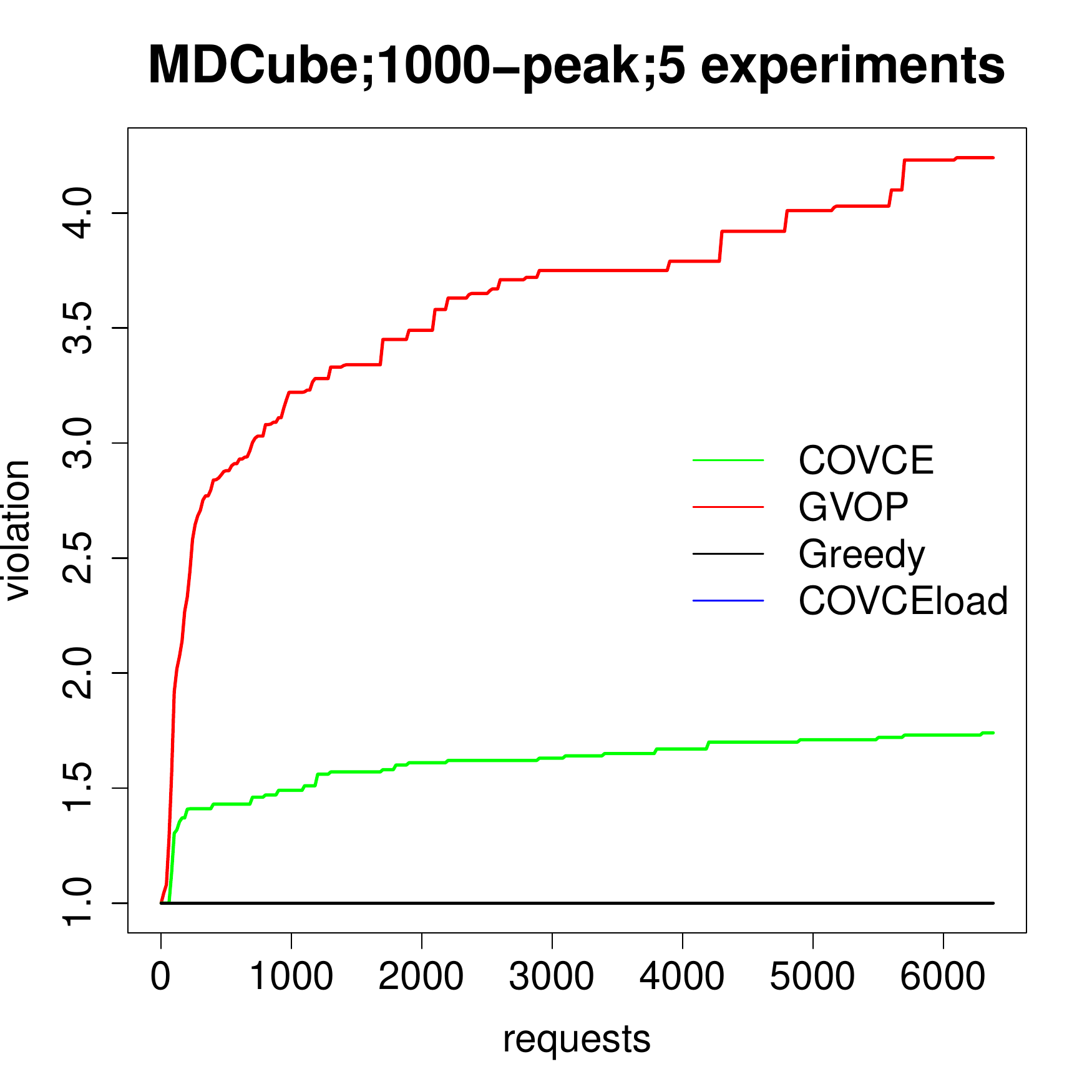}
	\includegraphics[width=0.5\textwidth]{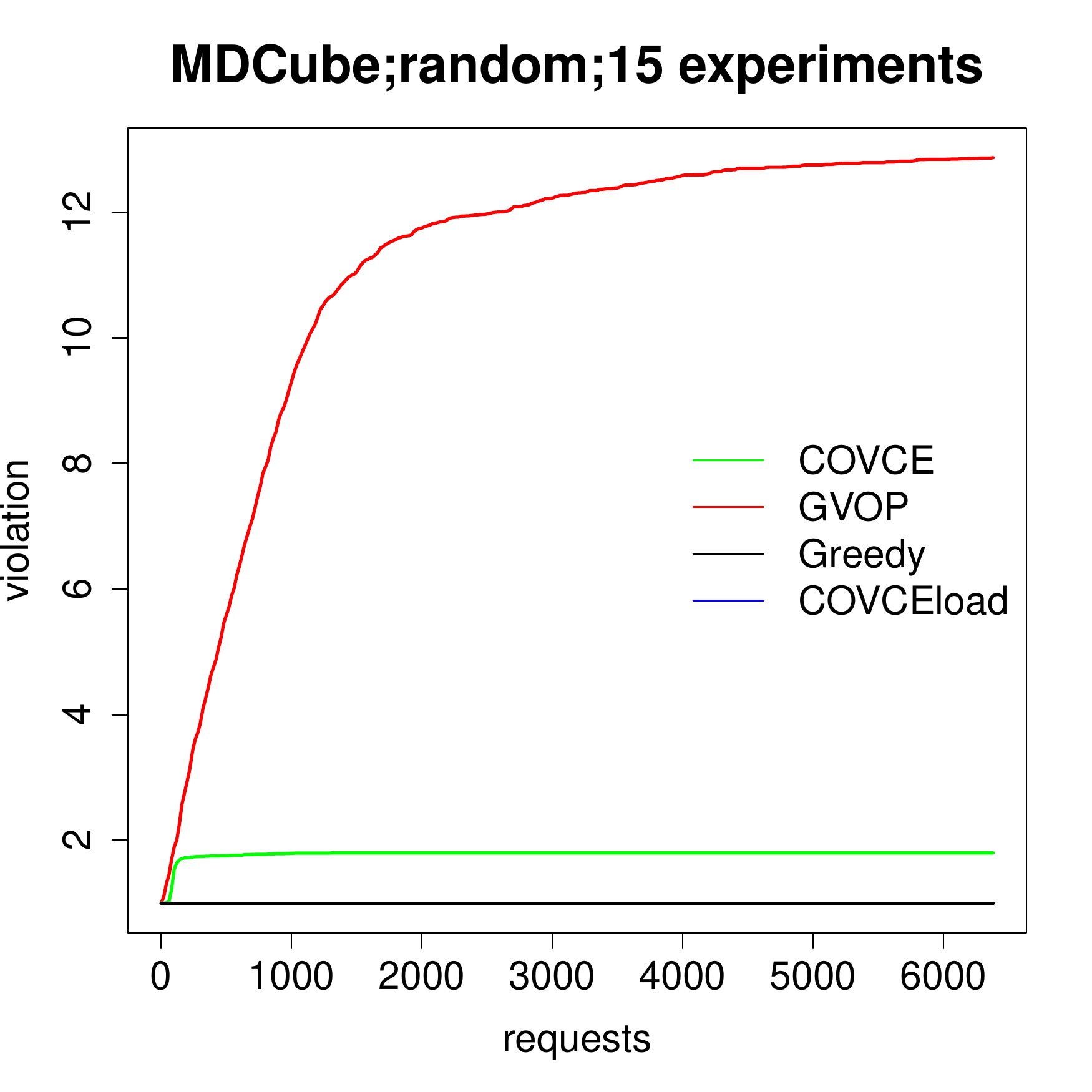}
	\caption{Depicted are the results for the violations of the 1000-$\mathtt{peaks}$ BP (left) and the $\mathtt{random}$ BP (right). The former is averaged over 5 and the latter over 15 experiments for the four algorithms on the MDCube topology over 6400 requests.}
	\label{fig54}
\end{figure}
\begin{figure}[h!]
	\includegraphics[width=0.5\textwidth]{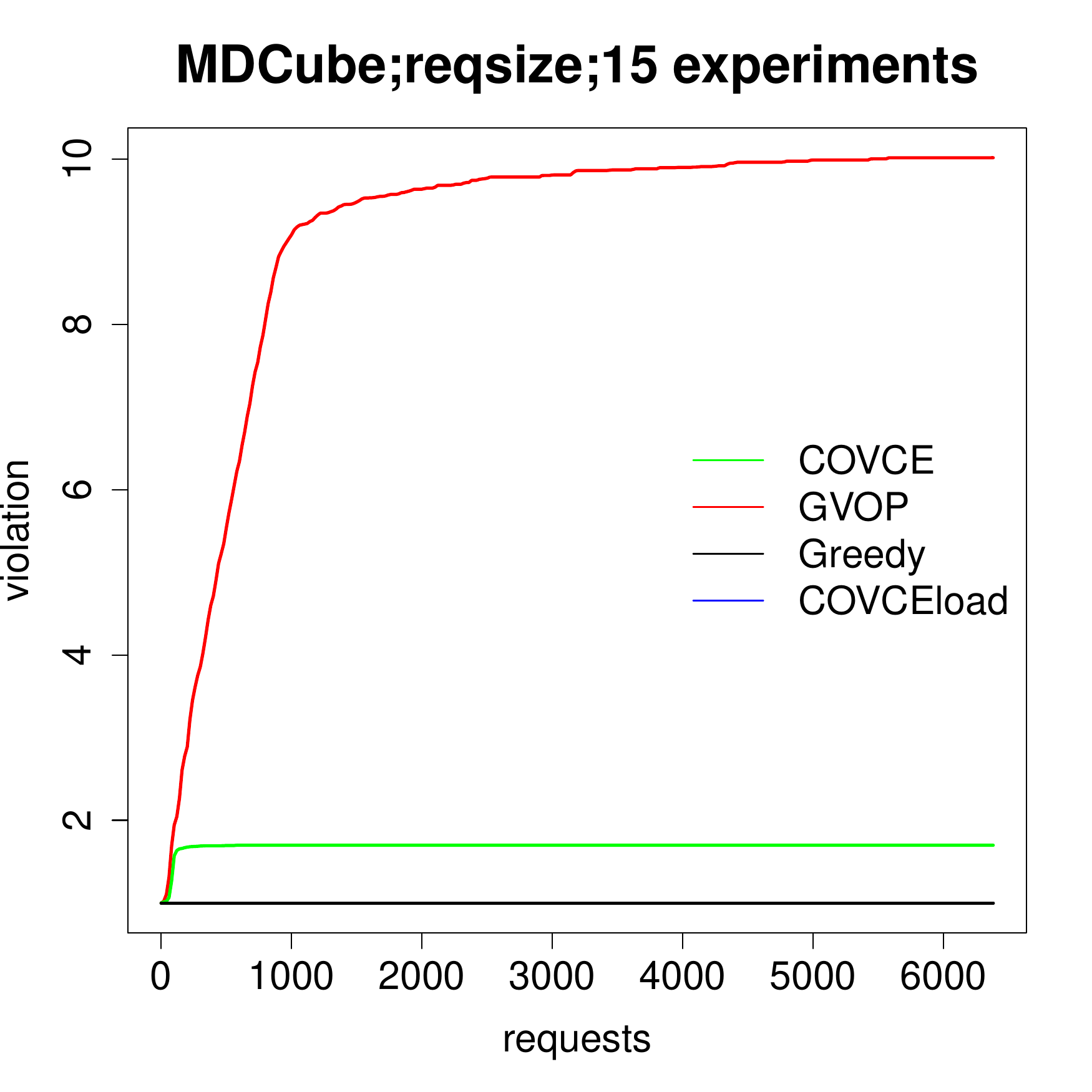}
	\includegraphics[width=0.5\textwidth]{graphics/violation_fattree_wave.pdf}
	\caption{Depicted are the results for the violations of the $\mathtt{reqsize}$ BP (left) and the $\mathtt{wave}$ BP (right). The former is averaged over 15 and the latter over 13 experiments for the four algorithms on the MDCube topology over 6400 requests.}
	\label{fig56}
\end{figure}
\begin{figure}[h!]
	\includegraphics[width=0.5\textwidth]{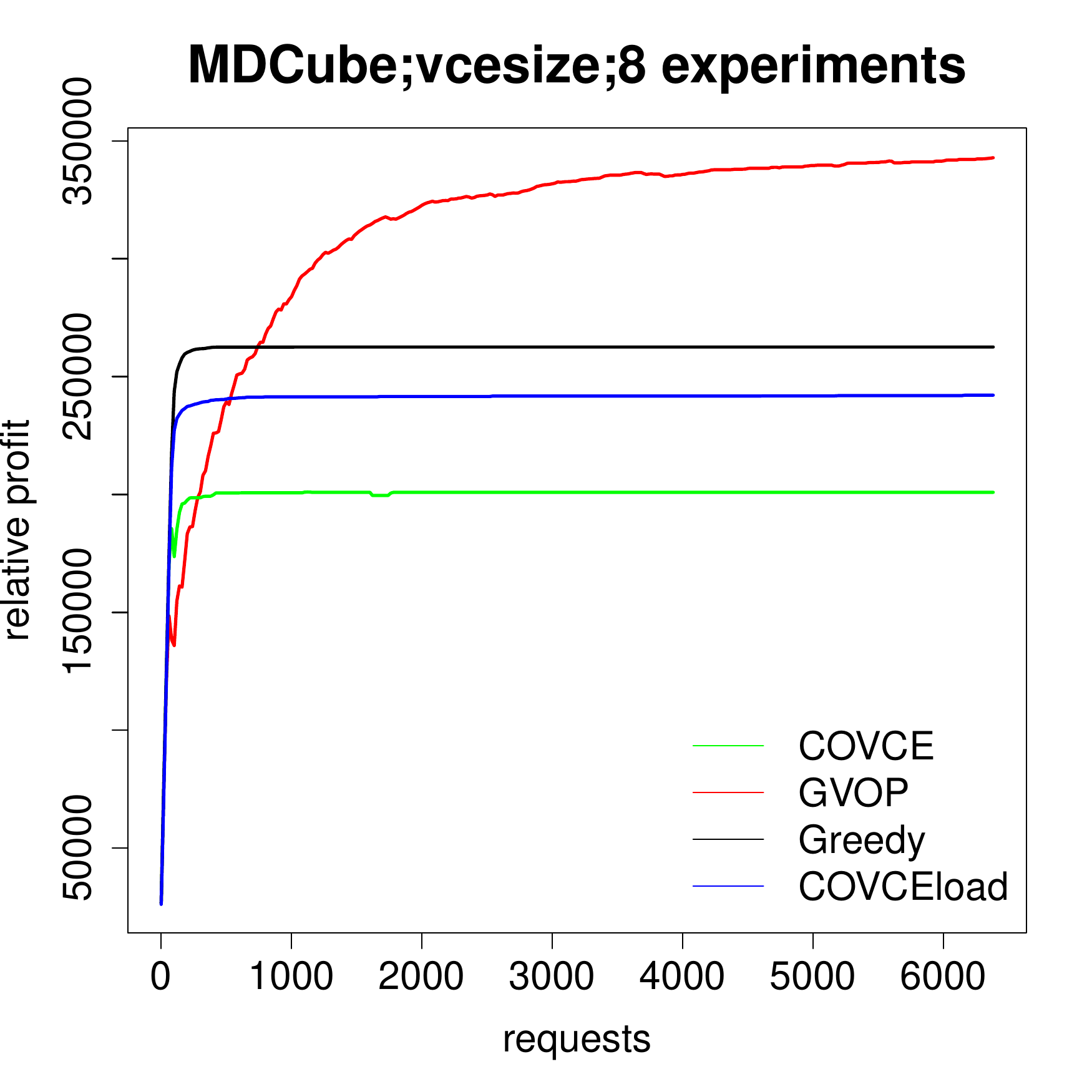}
	\includegraphics[width=0.5\textwidth]{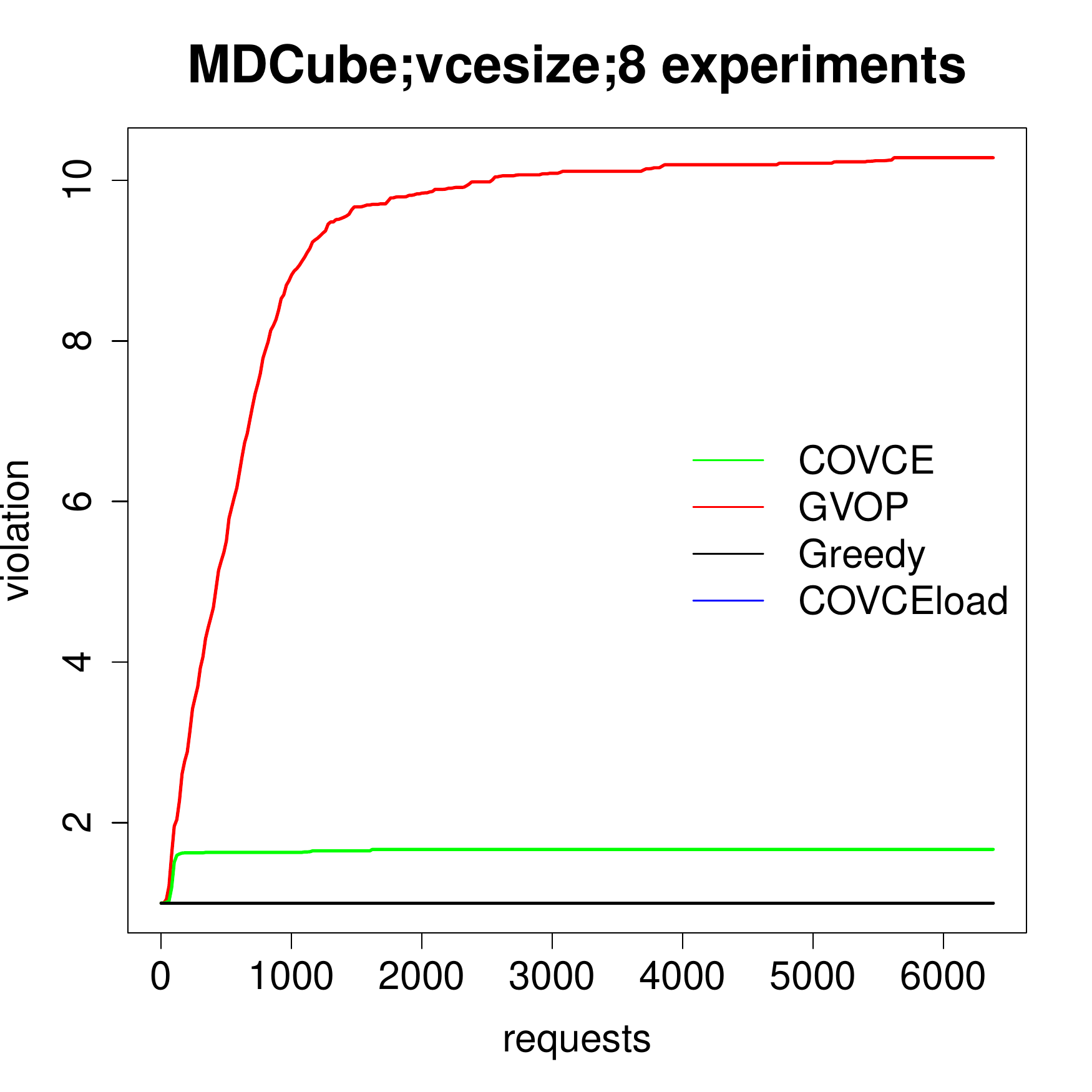}
	\caption{Depicted are the results of the $\mathtt{vcesize}$ BP. Its relative profit is averaged over 8 experiments (left) as well as its violation (right). The four algorithms are run on the MDCube topology over 6400 requests.}
	\label{fig59}
\end{figure}
\begin{figure}[h!]
	\centering
	\includegraphics[width=0.5\textwidth]{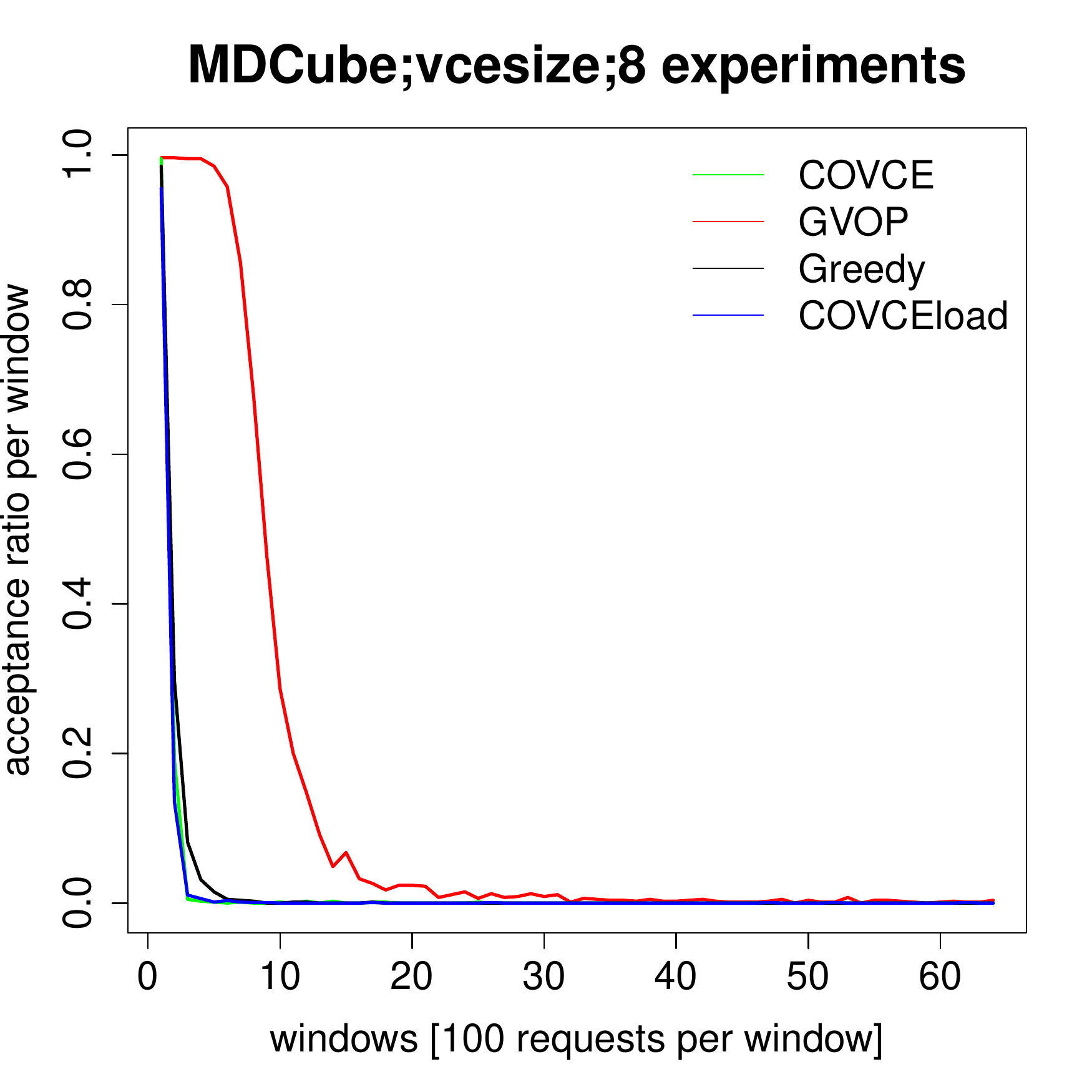}
	\caption{Depicted are the acceptance ratios of the four algorithms for the $\mathtt{vcesize}$ BP. The results of the 8 experiments on the MDCube topology are averaged.}
	\label{fig61}
\end{figure}
To begin with, observe the relative profit of the $\mathtt{random}$ BP experiments (Figure \ref{fig46}) compared to the ones of the fat-tree experiments (Figure \ref{fig6}, page 50). The difference is that the relative profits of the MDCube experiments are for all algorithms by a certain offset higher than the ones of the fat-tree experiments. Another point is that both competitive online algorithms overtakes the Greedy VC-ACE algorithm at later request indexes (in Figure \ref{fig6}). Furthermore, the GVOP algorithm does not outperform the COVCEload algorithm for the shown sequence of requests but probably will do so for a longer sequence. Generally speaking, the intersection points of the curves all happen to be at later request indexes for the MDCube experiments.

When regarding the $\mathtt{reqsize}$ experiments (Figure \ref{fig48}), an offset in the relative profits is again noticeable in comparison to Figure \ref{fig24}. The intersection points between the GVOP algorithm and the COVCEload algorithm also happens to be on a later request index again. In the $\mathtt{wave}$ BP (Figure \ref{fig48}) the same observation as before also take place. The intersection points happen to be on later indexes than for the fat-tree experiments (Figure \ref{fig12}). The COVCE algorithm's relative profit curve does not even pass the one of the greedy algorithm, the same happens for the GVOP algorithm which does not outperform the COVCEload algorithm for the shown sequence. 
The violations of the MDCube experiments (Figure \ref{fig54} and \ref{fig56}) do not show a difference to the fat-tree results.

At last, the acceptance ratios of the GVOP algorithm (Figure \ref{fig50}, \ref{fig52} and \ref{fig61}) are noticeably higher throughout the execution in comparison to the fat-tree results. Consider Figure \ref{fig7} in which the acceptance ratio for the $\mathtt{reqsize}$ BP is shown. The curve of the GVOP algorithm drops earlier than in the corresponding $\mathtt{reqsize}$ BP curve in Figure \ref{fig52}. The same behavior occurs when comparing the $\mathtt{wave}$ (Figure \ref{fig26}) and $\mathtt{random}$ (Figure \ref{fig4}) scenarios. One possible reason for the higher acceptance ratio is that there are additional edges connected to the servers, so that more resources are available to accept requests. Therefore, the higher acceptance ratio additionally explains the higher relative profit. 
The results on the 1000-$\mathtt{peak}$ and the $\mathtt{vcesize}$ BP are not discussed because they do not show any visible differences to the fat-tree results. The 100- and 10-$\mathtt{peak}$ experiments are omitted.  

\subsection{Summary and Discussion}
The main results of the evaluation are that: 
\begin{enumerate}[label=\alph*)]
	\item \label{a}The COVCEload algorithm has the best performance for the $\mathtt{random}$ and $\mathtt{wave}$ BP on both topologies when considering the relative profit, the violation and superlinear increasing capacity multiplication costs.
	\item \label{b}The COVCE algorithm exhibits the best performance for the $\mathtt{peaks}$ BP on both topologies regarding the relative profit, the violation and superlinear increasing capacity multiplication costs.
	\item \label{c}The Greedy VC-ACE algorithm shows the best performance for the $\mathtt{reqsize}$ and $\mathtt{vcesize}$ BPs on both topologies considering the relative profit, the violation and superlinear increasing costs.
	\item \label{d}As a consequence of the above three points, there is a dependency of the competitive algorithms' performance on the benefit to embedding cost ratio.
	\item \label{e}The COVCE algorithm has the best performance for the shown BPs considering the average relative profit, the average violation and superlinear increasing costs.
	\item \label{f}The MDCube results show higher acceptance ratios and consequently higher relative profit values for all algorithms.
\end{enumerate}
The COVCEload algorithm is a heuristic algorithm that involves functionalities of the Greedy VC-ACE and the COVCE algorithm. The combination of these functionalities is advantageous in the mentioned scenarios of \ref{a}. On the one hand, the algorithm utilizes the cost functions of the COVCE algorithm and therefore benefits from the relatively high (compared to $\mathtt{reqsize}$ and $\mathtt{vcesize}$) benefit to embedding cost ratios of the $\mathtt{random}$ and $\mathtt{wave}$ BPs. On the other hand, it does not violate the capacity constraints, so that the profit is not potentially halved in the worst case as for the relative profit of the COVCE algorithm. In addition, maintaining the capacity constraints is advantageous when dealing with superlinear increasing capacity multiplication costs. Thus, the COVCEload is a better choice than the GVOP algorithm.

In the $\mathtt{peaks}$ scenarios, where the most extreme benefit fluctuations of this evaluation are tested, the COVCE algorithm outperforms the other algorithms as stated in \ref{b}. The reason for this is, that the cost function, and therefore the admission control of the algorithm, is dependent on $\alpha$. When the COVCE algorithms experiences a new highest benefit, the costs of the edges and nodes are updated, so that embeddings generally result in higher costs. Thus, subsequent requests with a relatively low benefit are less likely to be embedded. As the very first request in the 1000-$\mathtt{peaks}$ scenario includes the highest benefit of this sequence, namely 1000, the COVCE algorithm is already aware of the highest possible $\alpha$ and therefore rejects many 1-benefit requests. In contrast to this, the GVOP algorithm does not include the benefit into the cost functions. The COVCEload has the same cost functions as the COVCE algorithm but has a worse performance in these scenarios. The capacity violation of factor 2 pays off since at the late stage, meaning after the first capacity violation, of the request sequence, requests with a benefit of 1000 are mostly only accepted while in the early stage many requests with a benefit of 1 are accepted. Hence, the percentage of accepted requests with a benefit of 1 is higher in the early than in the late stage, such that the COVCEload algorithm, which only accepts requests in the early stage, accepted a higher percentage of 1-benefit requests than the COVCE algorithm.

As mentioned in \ref{d}, the benefit to embedding cost ratio is important for the admission control functionality of the competitive algorithms. In the  $\mathtt{reqsize}$ and  $\mathtt{vcesize}$ BPs, the benefit is dependent on the size of the request/embedding, so that the quotient of the benefit and the embedding costs is relatively low in comparison to the  $\mathtt{random}$ and  $\mathtt{wave}$ BPs. This also shows up in the results for the $\mathtt{reqsize}$ and  $\mathtt{vcesize}$ scenarios in which the Greedy VC-ACE algorithm dominates as stated in \ref{c}.

However, the COVCE algorithm has no capacity violation when considering the average violation, such that its performance is comparable to the non-competitive algorithms upon only considering the relative profit. Since the COVCE algorithm has a higher ARP than the non-competitive ones, it outperforms both of them. In general, it has a higher ARP than the non-competitive algorithms. Furthermore, its performance is better than the GVOP algorithm in all BPs when considering the ARP and superlinear increasing costs for the capacity multiplication. Thus, \ref{e} follows.

Lastly, the results on the MDCube experiments show that all algorithms achieved a higher relative profit as stated in \ref{f}. Furthermore, the intersection points between the curves are shifted to later requests indexes, as if the progress of the curves is being delayed. A reason for this is that the utilized (1,12,4,1)-MDCube consists of BCube$_1$ elements which means that each server has $k + 1 = 2$ ports. In contrast to this, each server in a fat-tree only has one port, so that a fat-tree server cannot be included in further embeddings, if its incident edge has no further free bandwidth. Thus, a server in the MDCube in general has more possibilities to be included into an embedding.
\section{Future Work}\label{futurework}
The combination of non-competitive and competitive characteristics (COVCEload) showed to dominate in certain scenarios. Accordingly, designing a ``rejecting''-non-competitive online VCE algorithm is interesting in terms of practical applications since it turned out that extra capacity violations do not pay off for the relative profit but accepting any request, disregarding the benefit to cost ratio, is also not beneficial. Therefore, it is interesting to investigate further heuristic algorithm, such as a ``GVOPload'' algorithm, i.e. passing the residual capacities to the VC-ACE oracle in the GVOP algorithm. 

In addition, testing the mentioned benefit patterns in more details with a higher variety of settings is also important in order to determine more accurate boundaries which show when competitive online algorithms pay off. 

Another interesting point is to add an experiment which runs the optimal offline algorithm besides the online algorithms in order to have a comparison between the optimal solution (the maximum relative profit), the relative profits of the online algorithms and their capacity violations.

From a more theoretical view, it is interesting to investigate whether the 2-competitiveness on the capacity constraints of the COVCE algorithm can be even improved. On the other hand, one can also try to prove that there is no competitive online VCE algorithm that is better than 2-competitive on the capacity constraints. Assume there exists a 1-competitive algorithm on the capacity constraints, it is interesting to investigate how this 1-competitiveness competes with the non-competitive online algorithms in the experiments. As soon as this 1-competitive online algorithm has a higher relative profit, it will already pay off since it does not violate any resources.
\section{Conclusion}\label{conclusion}
In this thesis, competitive online algorithms for virtual cluster embedding problems were studied and discussed. Two competitive online VCE algorithms and two heuristic algorithms were presented. The design of competitive online algorithms based on the primal-dual framework of \cite{survey} was introduced and applied in order to develop the COVCE algorithm which is $(4 \cdot \mathrm{ln}(1+|G_S|\cdot C_{max}\cdot \alpha) + 1, 2)$-competitive. Its competitive ratios were proved. Afterwards, the GVOP algorithm was designed based on \cite{gvop} which is (2, $\mathrm{log}_2 (1+ 3\cdot (\mathrm{max}_{i,k}w(i,k)\cdot \mathrm{max}_{i}b_i))$)-competitive. Both algorithms with their complementary competitive ratio tuples were evaluated together with the Greedy VC-ACE and the COVCEload algorithm. The evaluation consisted of running the four algorithms upon several sequences of requests where the benefit pattern was the most interesting varying parameter. Afterwards the results were interpreted upon several metrics as the relative profit and the average relative profit. 

On the one hand, when applying the relative profit, the Greedy VC-ACE algorithm was found to be the best algorithm in the realistic scenarios ($\mathtt{reqsize}$ and $\mathtt{vcesize}$). On the other hand, the COVCE algorithm had the best performance over all benefit patterns when applying the average relative profit. The GVOP algorithm had the highest (average) relative profit in many scenarios but it was always found to be inapplicable because of its relatively high violation factors. The COVCEload algorithm, which combines the greedy and admission control functionality, has the best performance for the $\mathtt{random}$ and $\mathtt{wave}$ BP when applying the relative profit. Its relative profit was competitive to the Greedy VC-ACE algorithm's one for the realistic scenarios. 

In conclusion, the benefit pattern has a significant influence on the results since in  different scenarios, different algorithms show their best performance with respect to certain metrics.
\newpage


\begin{thebibliography}{10}
	
	\bibitem{vcace}
	M. Rost, C. Fuerst and S. Schmid, ``Beyond the stars: Revisiting virtual cluster 
	embeddings,'' \emph{ACM SIGCOMM Computer Communication Review}, vol.~45, no.~3, 2015, pp. 12--18, ACM.
	
	\bibitem{vc}
	H. Ballani, P. Costa, T. Karagiannis and A. Rowstron, ``Towards predictable datacenter networks,'' \emph{ACM SIGCOMM Computer Communication Review}, vol.~41, no.~3, 2011, pp. 242--253, ACM.
	
	\bibitem{survey}
	N. Buchbinder, J.~S. Naor et al., ``The design of competitive online algorithms via a primal--dual approach,'' \emph{Foundations and Trends{\textregistered} in Theoretical Computer Science}, vol.~3, no.~2--3, 2009, pp. 93--263, Now Publishers, Inc.
	
	\bibitem{linearprogramming}
	J. Matousek and B.  G{\"a}rtner, ``Understanding and using linear programming,'' 2007, Springer Science \& Business Media.
	
	\bibitem{gvop}
	G. Even, M. Medina, G. Schaffrath and S. Schmid, ``Competitive and deterministic embeddings of virtual networks,'' \emph{Theoretical Computer Science}, vol.~496, 2013, pp.~184--194, Elsevier.
	
	\bibitem{networkflows}
	R. K. Ahuja and T. L. Magnanti and J. B. Orlin, ``Network flows: theory, algorithms, and applications,'' 1993.
	
	\bibitem{fattree}
	M. Al-Fares, A. Loukissasm and A. Vahdat, ``A scalable, commodity data center network architecture,'' \emph{ACM SIGCOMM Computer Communication Review}, vol.~38, no.~4, 2008, pp.~63--74, ACM.
	
	\bibitem{bcube}
	C. Guo, G. Lu, D. Li, H. Wu et al., ``BCube: a high performance, server-centric network architecture for modular data centers,'' \emph{ACM SIGCOMM Computer Communication Review}, vol.~39, no.~4, 2009, pp.~63--74, ACM.
	
	\bibitem{mdcube}
	H. Wu, G. Lu, D. Li, C. Guo and Y. Zhang, ``MDCube: a high performance network structure for modular data center interconnection,'' \emph{Proceedings of the 5th international conference on Emerging networking experiments and technologies}, 2009, pp.~25--36, ACM.
	
	\bibitem{survivability}
	R. Yu and G. Xue and X. Zhang and D. Li, ``Survivable and bandwidth-guaranteed embedding of virtual clusters in cloud data centers,'' \emph{IEEE INFOCOM 2017 - IEEE Conference on Computer Communications}, 2017, pp.~1--9.
	
	\bibitem{heterogeneous}
	J. Zhu, D. Li, J. Wu, H. Liu, ``Towards bandwidth guarantee in multi-tenancy cloud computing networks,'' \emph{2012 20th IEEE International Conference on Network Protocols (ICNP)}, 2012, pp.~1--10.
	
	\bibitem{kraken}
	C. Fuerst, S. Schmid, L. Suresh and P. Costa, ``Kraken: Online and elastic resource reservations for multi-tenant datacenters,'' \emph{Computer Communications, IEEE INFOCOM 2016-The 35th Annual IEEE International Conference on}, 2016, pp.~1--9, IEEE.
	
	\bibitem{timedvc}
	D. Xie, N. Ding, Y. C. Hu and R. Kompella, ``The only constant is change: Incorporating time-varying network reservations in data centers,'' \emph{ACM SIGCOMM Computer Communication Review}, vol.~42, no.~4, 2012, pp.~199--210, ACM.
	
	\bibitem{reqsize}
	A. Ludwig, C. Fuerst, A. Henze and S. Schmid, ``Opposites Attract: Virtual Cluster Embedding for Profit,'' \emph{arXiv preprint arXiv:1511.02354}, 2015.
	
	\bibitem{allocator}
	F. Hao, M. Kodialam, TV Lakshman and S. Mukherjee, ``Online allocation of virtual machines in a distributed cloud,'' \emph{IEEE/ACM Transactions on Networking}, vol.~25, no.~1, 2017, pp.~238--249, IEEE.
	
	\bibitem{vaquero2008break}
	L. M. Vaquero, L. Rodero-Merino, J. Caceres and M. Lindner, ``A break in the clouds: towards a cloud definition,'' \emph{ACM SIGCOMM Computer Communication Review}, vol.~39, no.~1, 2008, pp.~50--55, ACM.
	
	\bibitem{sneyers2006dijkstra}
	J. Sneyers, T. Schrijvers and B. Demoen, ``Dijkstra's algorithm with Fibonacci heaps: An executable description in CHR,'' \emph{Proceedings of the 20th Workshop on logic programming}, 2006, pp.~182--191.
\end{thebibliography}
\end{document}